\definecolor{dark-blue}{rgb}{0.05,0.25,0.85}
\renewcommand\footnotesize{%
   \@setfontsize\footnotesize\@ixpt{11}%
   \abovedisplayskip 8\p@ \@plus2\p@ \@minus4\p@
   \abovedisplayshortskip \z@ \@plus\p@
   \belowdisplayshortskip 4\p@ \@plus2\p@ \@minus2\p@
   \def\@listi{\leftmargin\leftmargini
               \topsep 4\p@ \@plus2\p@ \@minus2\p@
               \parsep 2\p@ \@plus\p@ \@minus\p@
               \itemsep \parsep}%
   \belowdisplayskip \abovedisplayskip
}
\theoremstyle{plain}
\newtheorem{theorem}{Theorem}[section]
\newcommand{\newtheoremwithcrefformat}[2]{%
  \newtheorem{#1}[theorem]{#2}%
  \crefformat{#1}{##2\MakeUppercase#1~##1##3}%
  \Crefformat{#1}{##2\MakeUppercase#1~##1##3}%
}
\theoremstyle{nonumberplain}
\newtheorem{proof}{Proof.}
\newcommand{\sth}{\mathrel : }
\newcommand{\bty}{{\bar T_Y}}
\newcommand{\Sh}{S_{>d}} 
\newcommand{\Sl}{S_{\leq d}} 
\newcommand{\Th}{T_{>d}} 
\newcommand{\Tl}{T_{\leq d}} 
\newcommand{\DST}{\textsc{Dst}}
\newcommand{\DSTG}{\textsc{Dst}(G, r, T, k)}
\newenvironment{cenv}{\begin{list}{}{%
      \setlength{\labelwidth}{1.5em}%
      \setlength{\leftmargin}{\labelwidth}%
      \addtolength{\leftmargin}{\labelsep}%
      \setlength{\listparindent}{0em}%
      \setlength{\topsep}{10pt}%
      \setlength{\itemsep}{5pt}%
      \setlength{\parsep}{0pt}%
    }
  }{
  \end{list}
}
\newcounter{claimcounter}
\newenvironment{Claim}{
  
  \refstepcounter{claimcounter}
  \begin{cenv}
  \item[{Claim \arabic{claimcounter}.}]
  }{
  \end{cenv}
}
\newenvironment{ClaimProof}[1][]{\noindent{%
\ifthenelse{\equal{#1}{}}{{\itshape Proof.\ }}{{\itshape #1.\ }}%
}}{\hspace*{1em}\nobreak\hfill$\dashv$\endtrivlist\addvspace{2ex plus
0.5ex minus0.1ex}}
\newcommand{\arrws}{\tikz {\draw[->] (0pt,0pt)--(4pt,0pt); \draw[<-] (0pt,-3pt)--(4pt,-3pt);}}
\newcommand{\wcol}{\mathrm{wcol}}
\newcommand{\dwcol}[1]{\mathrm{wcol}^{\arrws}_{#1}}
\newcommand{\dadm}[1]{\mathrm{adm}^{\arrws}_{#1}}
\newcommand{\dWReach}[1]{\mathrm{WReach}^{\arrws}_{#1}}
\newcommand{\col}{\mathrm{col}}
\newcommand{\Oof}{\mathcal{O}}
\newcommand{\CCC}{\mathcal{C}}
\newcommand{\FFF}{\mathcal{F}}
\newcommand{\PPP}{\mathcal{P}}
\newcommand{\minor}{\preccurlyeq}
\newcommand{\N}{\mathbb{N}}
\newcommand{\dist}{\mathrm{dist}}
\newcommand{\td}{\mathrm{td}}
\def\grad_#1{\nabla\!_{#1}}
\def\cqedsymbol{\ifmmode$\lrcorner$\else{\unskip\nobreak\hfil
\penalty50\hskip1em\null\nobreak\hfil$\lrcorner$
\parfillskip=0pt\finalhyphendemerits=0\endgraf}\fi}
\renewcommand{\mid}{~:~}
\newcommand{\ie}{i.e.\@ }
\newcommand{\abs}[1]{\ensuremath{\left\lvert#1\right\rvert}}
\newcommand{\tcb}[1]{\textcolor{blue}{#1}}
\begin{document}

\thispagestyle{empty}
\vspace{-0.2cm}
\begin{center}
  \huge{\textbf {Algorithmic Properties of Sparse Digraphs}}\\[0.6cm]

  \Large Stephan Kreutzer

  \large Technische Universit\"at Berlin, Germany
  
  \normalsize \texttt{stephan.kreutzer@tu-berlin.de}
  
\vspace{0.3cm}

  \Large Patrice Ossona de Mendez
  
  \large Centre d'Analyse et de Math\'ematiques Sociales (CNRS, UMR 8557), Paris, France
  
  --- and ---
 
 Computer Science Institute of Charles University (IUUK), Prague, Czech Republic
  
  \normalsize \texttt{pom@ehess.fr}
  
  \vspace{0.3cm}
  
  \parbox{8cm}{\centering \Large Roman Rabinovich
  
  \large Technische Universit\"at Berlin, Germany
  
  \normalsize \texttt{roman.rabinovich@tu-berlin.de}}
 %
%
 \parbox{6cm}{ \centering 
  \Large Sebastian Siebertz

  \large University of Warsaw, Poland
  
  \normalsize \texttt{siebertz@mimuw.edu.pl}}
  
  \vspace{0.4cm}    
\end{center}

\begin{abstract}
\noindent The notions of bounded expansion~\cite{nevsetvril2008grad} 
and nowhere 
denseness~\cite{nevsetvril2011nowhere}, introduced by Ne\v{s}et\v{r}il and 
Ossona de Mendez as structural measures for undirected graphs,
have been applied very successfully in algorithmic graph theory. We study the 
corresponding notions of directed bounded expansion and nowhere crownfulness
on directed graphs, as introduced by Kreutzer and Tazari in~\cite{kreutzer2012directed}. 
These classes are very general classes of sparse directed graphs, as they include, on 
one hand, all classes of directed graphs whose underlying undirected class has
bounded expansion, such as planar, bounded-genus, and $H$-minor-free graphs, 
and on the other hand, they also contain classes whose underlying class is not 
nowhere dense. 

  We show that many of the algorithmic tools that were developed for undirected bounded expansion classes can, with some care, also be applied in their directed counterparts, and thereby we highlight a rich algorithmic structure theory of directed bounded expansion classes.
  

More specifically, we show that the directed Steiner tree problem is fixed-parameter tractable 
on any class of directed bounded expansion parameterized by the number $k$ of non-terminals
plus the maximal diameter $s$ of a strongly connected component in the
subgraph induced by the terminals. 
Our result strongly generalizes a result of Jones et al.~\cite{Jones13}, who
proved that the problem is fixed parameter tractable on digraphs 
of bounded degeneracy if the set of terminals is required to be acyclic. 

We furthermore prove that for every integer $r\geq 1$, the distance-$r$ 
dominating set problem can be approximated up to a factor $\Oof(\log k)$ 
and the connected distance-$r$ dominating set problem can be approximated
up to a factor $\Oof(k\cdot \log k)$ on any class of directed bounded 
expansion, where $k$ denotes the size of an optimal solution. If furthermore, 
the class is nowhere crownful, we are able to compute a polynomial kernel
for distance-$r$ dominating sets. Polynomial kernels for this problem were 
not known to exist on any other existing digraph measure for sparse classes.
\end{abstract}

\hrulefill

\scriptsize

\noindent The research of Stephan Kreutzer and Roman Rabinovich 
is supported by the
European Research Council (ERC) under the European Union's Horizon
2020 research and innovation program (ERC consolidator grant DISTRUCT,
agreement No.\ 648527).
The research of Patrice Ossona de Mendez is supported by grant ERCCZ LL-1201 and by the European Associated Laboratory ``Structures in
Combinatorics'' (LEA STRUCO), and partially supported by ANR project Stint under reference ANR-13-BS02-0007.
The research of Sebastian Siebertz is supported by the National Science Centre of Poland via POLONEZ grant agreement UMO-2015/19/P/ST6/03998, 
which has received funding from the European Union's Horizon 2020 research and 
innovation program (Marie Sk\l odowska-Curie grant agreement No.\ 665778).

\normalsize
\setcounter{page}{0}
\pagebreak

\section{Introduction}

  Structural graph theory has made a
  deep impact on the analysis of complex algorithmic graph problems
  and the design of graph algorithms for hard problems. It provides a
  wealth of new and different tools for dealing with the intrinsic
  complexity of NP-hard problems on graphs and these methods have been
  applied very successfully  in algorithmic graph theory, in approximation, optimization or  the design of
  exact and parameterized algorithms for problems on
  undirected 
graphs, see e.g.~\cite{Chuzhoy16,DemaineHaj08,demaine2005subexponential,
DemaineHajKaw05,
DemaineH04,DemaineH05,
FominLRS11,FominLST10,GMXIII}

  Concepts such as \emph{tree width} or \emph{excluded (topological)
    minors} as well as
  density based graph parameters such as \emph{bounded expansion}~\cite{nevsetvril2008grad} or \emph{nowhere
  denseness}~\cite{nevsetvril2011nowhere} capture important properties of graphs and make them
  applicable for algorithmic applications. 

  Developing a similar structural theory for directed graphs resulting
  in classes of digraphs with a similarly broad 
  algorithmic impact has so far not been blue as successful as
  for undirected graphs. The general goal
  is to identify structural parameters of digraphs which define
  interesting and general classes of digraphs for which at the same
  time we have a rich set of algorithmic tools available that can be used in the
  design of algorithms on these classes. However,  essentially all
  approaches, e.g.~in~\cite{barat2006directed,berwanger2006dag,ganian2009digraph,hunter2008digraph,obdrvzalek2006dag,safari2005d}, of generalizing even the well-understood and fairly basic
  concept of tree width to digraphs have failed to produce digraph parameters
  that come even near the wide spectrum of algorithmic applications
  that tree width has found. This has even led to claims that this
  program cannot be successful and that such measures for digraphs
  cannot exist \cite{GanianHK0ORS16}. 

  The main conceptual contribution of this paper is to finally give a
  positive example of a digraph parameter that
  satisfies the conditions of the program outlined above: we identify a very general type
  of digraph classes which have a similar 
  set of algorithmic tools available as their  undirected
  counterparts. We believe that these classes finally give a
  positive answer to the question whether interesting graph parameters
  can successfully be generalized to the directed setting. We support
  this claim by algorithmic applications described below.

  The classes of digraphs we study are classes of \emph{directed bounded expansion} and
  \emph{nowhere crownful classes} of digraphs.  These notions were defined
  in~\cite{kreutzer2012directed} where basic properties of these
  classes were developed.  The first improvement of these initial
  results appeared in \cite{kreutzer2017structural}, where classes of
  digraphs of bounded expansion were studied 
and their relation to a certain form of generalized coloring numbers
was established. 
These papers are the starting point for our
investigation in this paper. 
Furthermore, we  introduce a new type of
  digraph classes called \emph{bounded crownless expansion} which have
  the broadest set of algorithmic tools among the three types of
  digraph parameters.

  Nowhere crownful and
  directed bounded expansion classes are modeled after the
  concept of  \emph{bounded
  expansion} and \emph{nowhere denseness} 
  developed by Ne\v{s}et\v{r}il and 
  Ossona de Mendez
  ~\cite{nevsetvril2008grad,nevsetvril2011nowhere}. 
  Bounded expansion and the related concept of nowhere denseness was
  introduced to
  capture structural sparseness of graphs. On undirected
  graphs, classes of bounded expansion are very general and contain,
  for instance, planar graphs or more generally classes with excluded
  (topological) minors. But the concept goes far beyond excluded minor
  classes. 

  Following~\tcb{\cite{nevsetvril2008grad,nevsetvril2011nowhere}}, many papers have shown that
  algorithmic results for many problems on 
  classes of graphs excluding a fixed minor can be generalized to classes of
  bounded expansion \cite{bonamy2015linear, DawarK09,DrangeDFKLPPRVS16,dvovrak2013constant,
dvovrak2013testing, eickmeyer2016neighborhood,gajarsky2013kernelization, grohe2014deciding,kazana2013enumeration,kreutzer2017polynomial,
lokshtanov2015reconfiguration,nevsetvril2006linear,
segoufin2017constant}. These new algorithms not only work on much
  larger classes of graphs than those excluding a minor. Often they also become conceptually 
  simpler as they do not rely on deep, but sometimes cumbersome to use, structure theorems for classes
  with excluded minors. Furthermore, Demaine et
  al. \cite{DemaineRRVSS14} analyzed a range of real-world 
  networks and showed that many of them indeed fall within the
  framework of bounded expansion. This shows that the concept of
  bounded expansion captures many types of 
  real world instances. 
  An interesting property of classes of bounded expansion and classes
  which are nowhere dense is that they 
  can equivalently be defined in many different and seemingly unrelated
  ways: by the density of \emph{bounded depth minors}, by
  \emph{low tree depth colourings}~\cite{nevsetvril2008grad},
  by \emph{generalized coloring numbers}~\cite{zhu2009colouring}, by wideness properties such as
  \emph{uniformly quasi wideness}~\cite{nevsetvril2010first}, by \emph{sparse
    neighbourhood covers}~\cite{grohe2015colouring,grohe2014deciding},
  and many more. Each of these
  different aspects of bounded expansion classes comes with its own set
  of algorithmic tools and many of the more advanced algorithmic results on bounded
  expansion classes mentioned above crucially rely on a combination of several  of these
  techniques. 

  In this paper we study suitable generalizations of bounded expansion
  and nowhere dense  classes of graphs to the directed setting.  See
  Section~\cref{sec:definitions} for details.

  We show that  classes of digraphs of directed bounded expansion, and
  especially classes of \emph{bounded crownless expansion} that we introduce in this
paper, have very similar characterizations as their undirected counterpart: 
they have \emph{low directed tree depth colorings}, they have
\emph{bounded directed weak coloring numbers}, a concept that has been ground
breaking in the undirected setting, they have \emph{low neighborhood
complexity} and \emph{bounded VC dimension} and many more. 
As a consequence, we are able to show that most of the algorithmic
tools that were developed for undirected bounded expansion 
have their
directed counterpart. Thus, we obtain powerful algorithmic tools for
directed bounded expansion that are similar to the tools available in
the undirected setting. 

Note that we cannot combine our tools as
freely as in the undirected setting. For example, nowhere crownfulness 
does not imply that bounded depth minors are sparse, 
or directed bounded expansion does not imply directed uniform
quasi-wideness. Hence it is only natural to combine the requirements
of nowhere crownfulness with that of bounded expansion, as we
do to obtain classes of bounded crownless expansion, to obtain 
classes which behave algorithmically as nice as their undirected counterparts.

To the best of our knowledge, this is the first time that the
goal of generalizing one of the widely studied and very general
type of classes of undirected graphs to the directed setting has been
really successful and has
produced a digraph concept with a similarly broad set of
algorithmic tools as its undirected
counterpart. 
We are therefore optimistic that classes of bounded directed
expansion and classes of bounded crownfree expansion will find a wide
range of applications. We support this believe by a range of
algorithmic results we describe next.

On the more technical level, we demonstrate the power of the new
concepts by showing that several common problems on digraphs,
which do not admit efficient solutions in general, can be solved efficiently
on classes of directed bounded expansion or classes of bounded
crownless expansion. 

We first consider the \textsc{Directed Steiner Tree (\DST)} problem, 
which is defined as follows. As
  input we are given a digraph $G$, a root $r\in V(G)$, a set
  $T\subseteq V(G)\setminus \{r\}$ of terminals and an integer
  $k$. The problem is to decide if there is a set $S\subseteq
  V(G)\setminus (\{r \} \cup T)$ such that in $G[\{r\}\cup S\cup T]$
  there is a directed path from~$r$ to every terminal $T$.
 The Steiner Tree problem is one of the most intensively studied graph problems in computer
science with many important applications. We refer to the textbook of 
Pr\"omel and Steger~\cite{promel2012steiner} for more background. 
 While the parameterized complexity of Steiner Tree
parameterized by the number of terminals is well understood, not much is known about the parameterization
by the number of non-terminals in the solution tree. It is known for this parameterization that
both the directed and the undirected versions are $\mathsf{W}[2]$-hard on general graphs~\cite{molle2008enumerate}, and even on graphs of degeneracy two~\cite{Jones13}.
On the positive side, it is proved 
in~\cite{Jones13} that the problem is fixed-parameter tractable when parameterized by
the number of non-terminals on graphs excluding a topological minor. This result is based 
on a preprocessing rule which allows to contract strongly connected subsets of terminal 
vertices to individual vertices. The authors furthermore show that if the subgraph induced
by the terminals is required to be acyclic, then the problem becomes fixed-parameter tractable
on graphs of bounded degeneracy. In this case, the strongly connected subsets of terminals
have diameter $0$. This suggests to consider the problem parameterized 
by the number $k$ of non-terminals plus the maximal diameter~$s$ of a strongly connected 
component in the subgraph induced by the terminals. We prove that with respect to this
parameter the problem becomes fixed-parameter tractable on every class of directed bounded
expansion. As bounded expansion classes are much more general than graphs with excluded topological minors
and are stable under bounded diameter contractions, we believe that this result may provide
the ``true'' explanation for the earlier result of~\cite{Jones13}. 

We then turn our attention to the \textsc{Distance-$r$ Dominating Set} problem. 
Given a digraph $G$ and an integer $k$, we are asked to decide whether
there exists a set $D\subseteq V(G)$ such that every vertex $v\in V(G)$
is reachable by a directed path of length at most $r$ from a vertex $d\in D$. 
The dominating set problem (and its variations) is
one of the most important problems in algorithmic graph theory. It is
NP-complete in general~\cite{karp1972reducibility}, and (under
standard complexity theoretical assumptions) cannot be approximated
better than up to a factor $\Oof(\log n)$~\cite{raz1997sub}.  This
situation is different on sparse graph classes, it admits a PTAS, e.g.,
on planar graphs~\cite{baker1994approximation}, and a constant factor
approximation on classes of undirected bounded
expansion~\cite{dvovrak2013constant}. Most generally, it admits an
$\Oof(\log k)$ approximation on graphs of bounded
VC-dimension~\cite{bronnimann1995almost} (where $k$ is the size of an
  optimal dominating set).
We study the VC-dimension of set systems
corresponding to $r$-neighborhoods in digraphs of bounded
expansion and derive an $\Oof(k\log k)$-approximation algorithm for
the \textsc{Distance-$r$ Red-Blue Dominating Set} problem and an
$\Oof(k^2\log k)$-approximation algorithm for the \textsc{Strongly Connected
Distance-$r$ Dominating Set} problem on classes of directed bounded
expansion. Our analysis is strongly based on a characterization of bounded
expansion classes in terms of generalized coloring numbers which was provided
in~\cite{kreutzer2017structural}, and which enables us to capture local separation
properties in classes of bounded expansion. 

Finally, we study classes which have both bounded expansion and which are
nowhere crownful, a property that we call \emph{bounded crownless expansion}. 
We study the kernelization problem from the \textsc{\mbox{Distance-$r$} Dominating Set}
problem. Recall that a kernelization algorithm is a polynomial-time
preprocessing algorithm that transforms a given instance into an
equivalent one whose size is bounded by a function of the parameter
only, independently of the overall input size.  We are mostly
interested in kernelization algorithms whose output guarantees are
polynomial in the parameter. The existence of a kernelization algorithm 
immediately implies that a problem is fixed-parameter tractable, and hence,
as the dominating set problem is
$\mathsf{W}[2]$-hard in general, there cannot exist a kernelization
algorithm in general (under the standard assumption that 
$\mathsf{W}[2]\neq \mathsf{FPT}$). A key ingredient to kernelization
results for dominating sets on undirected sparse graph classes~\cite{DrangeDFKLPPRVS16} 
is a duality theorem
proved by Dvo\v{r}\'ak~\cite{dvovrak2013constant}, which states
that on a graph~$G$ from a class of undirected bounded expansion the size of a minimum
distance-$r$ dominating set $\gamma_r(G)$ is only constantly larger than 
the size of a maximum packing of disjoint balls of radius $r$, $\alpha_r(G)$. 
We prove that no such duality theorem (with any functional dependence between
$\gamma_r(G)$ and $\alpha_r(G)$) can hold in graphs of directed bounded expansion. 
However, if we additionally assume that the class is nowhere crownful, we can employ
methods which were recently developed in stability theory~\cite{malliaris2014regularity}
to derive a polynomial duality theorem between domination and packing number. 
We remark that the application of stability theory in classes of bounded crownless 
expansion is not straight forward. It is known that a class of (di)graphs which is 
closed under taking subgraphs is stable, if and only if, its underlying class of undirected
graphs is nowhere dense~\cite{adler2014interpreting}. However, classes of 
bounded crownless expansion do not necessarily have this property. We have to 
carefully establish a situation in which stability is applicable, which then allows us to
derive the polynomial duality theorem. Our kernelization algorithm then follows the
approach of~\cite{DrangeDFKLPPRVS16}.

\section{Directed Minors and Directed Bounded Expansion}
\label{sec:definitions}
In this section we fix our notation. We refer
to~\cite{bang2008digraphs} for standard notation and background on 
digraph theory.


Let $G$ be a digraph, let $v\in V(G)$ and let $r\geq 1$ be an integer.  The
\emph{$r$-out-neighborhood} of $v$, denoted by~$N^+_{G, r}(v)$, or
just $N^+_r(v)$ if $G$ is understood, is defined as the set of
vertices $u$ in $G$ such that~$G$ contains a directed path of length at most
$r$ from $v$ to $u$. We write $N^+(v)$ for
$N^+_1(v)\setminus\{v\}$. The \emph{$r$-in-neighborhood}~$N^-_{G,r}(v)$ 
and $N^-(v)$ are defined analogously.  The
\emph{out-degree} of a vertex $v\in V(G)$ is
$d^+(v)\coloneqq \abs{N^+(v)}$, its \emph{in-degree} is
$d^-(v)\coloneqq \abs{N^-(v)}$ and its \emph{degree} is
$d(v)\coloneqq \abs{N^+(v)}+\abs{N^-(v)}$. The \emph{minimum
  out-degree} of~$G$ is defined as
$\delta^+(G)\coloneqq\min\{d^+(v) \mid v\in V(G)\}$, \emph{minimum
  in-degree} and \emph{minimum degree} are defined analogously.
A set
$U\subseteq V(G)$ is \emph{$r$-scattered} if there is no $v\in V(G)$
and $u_1\neq u_2\in U$ with $u_1,u_2\in N_r^+(v)$.
If the arc relation of a digraph $G$ is symmetric, \ie if
$(u,v)\in E(G)$ implies $(v,u)\in E(G)$, then we speak of an
\emph{undirected graph}.  If $G$ is a digraph, we write $\bar{G}$ for
the \emph{underlying undirected graph of~$G$}, which has the same
vertices as~$G$ and for each arc $(u,v)\in E(G)$ we have
$(u,v)\in E(\bar{G})$ and $(v,u)\in E(\bar{G})$.  Note that
$\abs{E(G)}\leq \abs{E(\bar{G})} \leq 2\abs{E(G)}$.

\smallskip
\noindent \textbf{Directed minors.}
We are going to work with directed minors and directed topological
minors.  The following definition of directed minors is
from~\cite{kreutzer2012directed}.
A digraph $H$ has a \emph{directed model} in a digraph $G$ if there is
a function~$\delta$ mapping vertices $v\in V(H)$ of $H$ to sub-graphs
$\delta(v)\subseteq G$ and arcs $e\in E(H)$ to arcs
$\delta(e)\in E(G)$ such that (1) if $v\neq u$, then
$\delta(v)\cap \delta(u)=\emptyset$; (2) if $e = (u,v)$ and
$\delta(e)=(u',v')$ then $u'\in \delta(u)$ and $v'\in \delta(v)$.  For
$v\in V(H)$ let
$\mathrm{in}(\delta(v)) \coloneqq V (\delta(v)) \cap
\bigcup_{e=(u,v)\in E(H)}V(\delta(e))$
and
$\mathrm{out}(\delta(v)) \coloneqq V (\delta(v)) \cap
\bigcup_{e=(v,w)\in E(H)}V(\delta(e))$;
(3)~we require that for every $v\in V(H)$ (3.1) there is a directed path in
$\delta(v)$ from every $u\in \mathrm{in}(\delta(v))$ to every
$u'\in \mathrm{out}(\delta(v))$; (3.2) there is at least one source
vertex $s_v\in \delta(v)$ that reaches (by a directed path) every element of
$\mathrm{out}(\delta(v))$; (3.3) there is at least one sink vertex
$t_v\in \delta(v)$ that can be reached (by a directed path) from every element of
$\mathrm{in}(\delta(v))$.  We write $H\minor G$ if $H$ has a directed
model in $G$ and call $H$ a \emph{directed minor} of $G$. We call the
sets $\delta(v)$ for $v\in V(H)$ the \emph{branch-sets} of the model.

For $r\geq 0$, a digraph $H$ is a \emph{depth-$r$ minor} of a digraph
$G$, denoted as $H\minor_rG$, if there exists a directed model of $H$
in $G$ in which the length of all the paths in the branch-sets of the
model are bounded by $r$. Note that every subgraph of $G$ is a depth-$0$
minor of $G$. 

\medskip
\noindent \textbf{Directed topological minors.} 
A digraph $H$ is a \emph{topological minor} of a digraph $G$ if there
is an injective function~$\delta$ mapping vertices $v\in V(H)$ to
vertices of $V(G)$ and arcs $e\in E(H)$ to directed paths in~$G$ such
that if $e=(u,v)\in E(H)$, then $\delta(e)$ is a path from $\delta(u)$
to $\delta(v)$ in $G$ which is internally vertex disjoint from all
vertices $\delta(w)$ (for $w\in V(H)$) and all paths $\delta(e')$ (for
$e'\in E(H)$, $e'\neq e$).  For $r\geq 0$, $H$ is a \emph{topological
  depth-$r$ minor} of $G$, written $H\minor_r^tG$, if it is a
topological minor and all paths $\delta(e)$ have length at most $2r$.

\medskip
\noindent \textbf{Grads, bounded expansion and crowns.} 
Let $G$ be a digraph and let $r\geq 0$. The \emph{greatest reduced
  average density of rank~$r$} (short \emph{grad}) of $G$ is
\[\nabla_r(G)\coloneqq \max \left\{\frac{|E(H)|}{|V(H)|} \mid H\minor_r G\right\}\]
and its \emph{topological greatest average density of rank $r$} (short
\emph{top-grad}) is
\[\widetilde{\nabla}_r(G)\coloneqq \max\left\{\frac{|E(H)|}{|V(H)|}
  \mid H\minor_r^t G\right\}.\] Note that $\nabla_0(G)$ is also
  known as the \emph{degeneracy} of $G$.
  As the following theorem shows, 
  the densities of depth-$r$ minors and depth-$r$ topological minors are
functionally related.

\begin{theorem}[\cite{kreutzer2017structural}]\label{lem:densityminors}
  Let $r,d\geq 1$ and let $p=32\cdot(4d)^{(r+1)^2}$. Let $G$ be a
  digraph.  If $\nabla_r(G)\geq p$, then
  $\widetilde{\nabla}_r(G)\geq d$.
\end{theorem}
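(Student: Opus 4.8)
This is the directed counterpart of the classical theorem of Ne\v{s}et\v{r}il and Ossona de Mendez (with the bound later sharpened by Dvo\v{r}\'ak) that shallow minors and shallow topological minors are density-equivalent, so the plan is to adapt that proof, taking care of the extra bookkeeping forced by conditions~(3.1)--(3.3) of a directed model. We argue directly: assuming $\nabla_r(G)\geq p$, fix a depth-$r$ minor $H\minor_r G$ with $|E(H)|/|V(H)|\geq p$ and build from its model a topological depth-$r$ minor $H'\minor_r^t G$ with $|E(H')|/|V(H')|\geq d$, which gives $\widetilde{\nabla}_r(G)\geq d$. The preliminary, routine reductions are: (i)~pass to a sub-digraph of $H$ whose underlying graph has minimum degree $\geq p/8$, so no vertex is lightly used; (ii)~orient by majorities and split each vertex $v$ into an out-copy $v^{+}$ and an in-copy $v^{-}$, sending each arc $(u,w)$ to $(u^{+},w^{-})$, so that the branch vertex eventually chosen for $v^{+}$ only has to feed \emph{out}-ports and the one for $v^{-}$ only has to absorb \emph{in}-ports; and (iii)~record inside each branch set $\delta(v)$ the out-arborescence from the source $s_v$ to all out-ports and the in-arborescence from all in-ports to the sink $t_v$ --- both of depth $\leq r$ by the depth bound together with~(3.2) and~(3.3) --- which are the directed stand-ins for the BFS trees used in the undirected argument.

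The core step, the extraction of the internally disjoint directed connecting paths, I would carry out by induction on $r$, imitating the undirected "peel one layer" scheme. In the inductive step one inspects the outermost level of the arborescences: either, after restricting to a bounded-degree sub-digraph, each branch set already admits a family of internally vertex-disjoint directed paths of length $\leq 2r$ joining one chosen hub vertex of $\delta(v)$ to (resp.\ from) the ports of the retained arcs, in which case these hubs and paths directly witness a topological depth-$r$ minor of density $\geq d$; or too many branch sets are shallow at their boundary, in which case un-contracting that single layer exhibits a depth-$(r-1)$ minor of $G$ of density still a controlled fraction of $p$, and we finish by induction, using that a topological depth-$(r-1)$ minor is a fortiori a topological depth-$r$ minor. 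Quantitatively one expects to pay a factor of shape $(4d)^{\Theta(r)}$ at each of the $\leq r+1$ levels, compounding to $(4d)^{\Theta(r^{2})}$; carrying the constants through the three reductions above and through every level should be what pins the threshold down to exactly $p=32\cdot(4d)^{(r+1)^2}$, so the only genuinely numeric delicacy is to be miserly enough everywhere.

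I expect the directed disjointification to be the main obstacle, for two reasons absent in the undirected setting. First, conditions~(3.1)--(3.3) do not give a single spanning arborescence-like structure of $\delta(v)$: we only get a source reaching all out-ports, a sink reached by all in-ports, and a short path between every in-port/out-port pair, and these three families may overlap inside $\delta(v)$ arbitrarily; selecting one hub per split copy together with disjoint directed legs to or from the retained ports --- while discarding only a bounded fraction of the arcs of $H$ per level, and while also keeping the legs of $v^{+}$ clear of the hub of $v^{-}$ and vice versa --- is exactly where a Menger- and pigeonhole-type argument on the arborescences must be run carefully and with attention to arc directions. Second, the length budget is tight: routing the path for $(u,w)$ naively as ``$s_u$ to an out-port, the crossing arc, an in-port to $t_w$'' has length $\leq 2r+1$, one too many, so the sources and sinks cannot simply serve as hubs; it is the layer-peeling (equivalently, a careful re-centering within the branch sets) that buys back the missing unit, and making disjointification and length control cooperate at every level of the induction is the point I expect to be hardest.
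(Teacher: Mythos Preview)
The paper does not prove this theorem at all: it is quoted as a black box from \cite{kreutzer2017structural}, with no argument given here. So there is no ``paper's own proof'' to compare your proposal against.

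As for the proposal on its own terms: your high-level plan is the right one --- this result \emph{is} proved (in \cite{kreutzer2017structural}) by adapting the undirected Dvo\v{r}\'ak argument, and the ingredients you list (pass to a high-min-degree subgraph, work with the source/sink arborescences inside each branch set, peel layers and induct on $r$, paying a multiplicative factor per level) are the correct ones. Your diagnosis of the two genuinely directed difficulties is also accurate: the lack of a single spanning arborescence in a branch set and the $2r$ versus $2r{+}1$ length issue are exactly where the directed proof diverges from the undirected one. That said, what you have written is a plan, not a proof: the ``Menger- and pigeonhole-type argument'' you allude to for the disjointification step is where essentially all the work lives, and you have not supplied it. In particular, your splitting of each vertex into $v^+$ and $v^-$ doubles $|V(H)|$ and hence halves the density you start from, and you would need to check that the constant $32$ survives this together with all the other reductions; recovering the \emph{exact} bound $p=32\cdot(4d)^{(r+1)^2}$ would require tracking constants much more carefully than you indicate. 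If your goal is merely to show that $\nabla_r$ and $\widetilde{\nabla}_r$ are functionally equivalent (which is all this paper actually uses), your outline is adequate modulo filling in the disjointification; if your goal is to reproduce the stated constant, you are still some distance away.
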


\begin{definition}
A class $\CCC$ of digraphs has \emph{bounded expansion} if there is a
function $f:\N\rightarrow \N$ such that for all $r\geq 0$ we have 
$\nabla_r(G)\leq f(r)$ (or equivalently, $\widetilde{\nabla}_r(G)\leq f(r)$) 
for all $G\in \CCC$.
\end{definition}


\vspace{-3mm}A \emph{crown} of order $q$ is a $1$-subdivision of a
clique of order $q$ with all arcs oriented away from the subdivision
vertices, that is, the digraph $S_q$ with vertex set
$\{v_1,\ldots, v_q\}\cup \{v_{ij} : 1\leq i<j\leq q\}$ and arc set
$\{(v_{ij},v_i), (v_{ij},v_j) : 1\leq i< j \leq q\}$.

\begin{definition}
A class $\CCC$ of digraphs has \emph{bounded crownless expansion} if
there is a function $f:\N\rightarrow \N$ such that for all $r\geq 0$
we have $\nabla_r(G)\leq f(r)$ and $S_{f(r)}\not\minor_r G$ for
all $G\in \CCC$.
\end{definition}


\section{Steiner trees}

\begin{definition}\label{def:dst}
  The \textsc{Directed Steiner Tree} (\DST) problem is defined as
  follows. The input is a tuple $(G,r,T,k)$ where $G$ is a digraph, $r\in V(G)$ is
    a vertex (a root), a set $T\subseteq V(G)\setminus \{r\}$ of terminals and $k$ is an integer.
The problem is to decide if there is a set $S\subseteq
  V(G)\setminus (\{r \} \cup T)$ of size at most $k$ such that in $G[\{r\}\cup S\cup T]$
  there is a directed path from $r$ to every terminal $T$.
\end{definition}

The $\DST$ problem has been widely studied in the area of approximation
algorithms as it generalizes several routing and
domination problems.  We are interested in the parameterized
complexity of this problem. It follows from an algorithm by Nederlof
\cite{Nederlof09} and Misra et al. \cite{Misra10}, that the problem
can be solved in time 
$2^{|T|}\cdot p(n)$, for some polynomial $p(n)$. 
In this paper, we are interested in the
standard parameterization in parameterized complexity, where as
parameter we take the solution size, i.e. we take 
the number $k$ of non-terminals as parameter. This models the case
where we need to pay for any node we add to the solution and we want
to keep the bound $k$ on these nodes as small as possible without any
restriction on the number of terminals to connect.

In \cite{Jones13}, Jones et al. show that $\DST$ with this
parameterization is fixed-parameter tractable on any class of digraphs
such that the class of underlying undirected graphs excludes a fixed
graph $H$ as an undirected topological minor. In this section we show that this
result can be extended to classes of bounded directed 
expansion, but with one restriction on the structure of the terminals.

\begin{theorem}\label{thm:steiner}
  Let $\CCC$ be a class of digraphs of bounded expansion. $\DST$ is fixed
  parameter tractable on $\CCC$ parameterized by the number $k$ of
  non-terminals in the solution plus the maximal diameter $s$ of the
  strongly connected components in the subgraph induced by the
  terminals.
\end{theorem}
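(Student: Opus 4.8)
The plan is to follow the contraction-and-branching strategy of Jones et al.~\cite{Jones13}, but to replace their use of structure theorems for topological-minor-free classes by the closure of bounded-expansion classes under bounded-depth minors. First I would observe that contracting a strongly connected component $C$ of $G[T]$ of diameter at most $s$ to a single vertex is a depth-$s$ minor operation: the branch set is $C$ itself, which has radius at most $s$. Since $\CCC$ has bounded expansion, by definition $\nabla_s(G) \leq f(s)$ for every $G \in \CCC$, and the graph $G'$ obtained by contracting all such components is again a depth-$s$ minor of $G$, hence still has $\nabla_0(G') \leq \nabla_s(G) \leq f(s)$, i.e. bounded degeneracy with a bound depending only on $s$. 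Moreover an optimal Steiner tree in $G$ projects to a solution of the same size in $G'$ and conversely (a path from $r$ through a contracted vertex can be rerouted inside the strongly connected component), so we may assume from now on that $G[T]$ is acyclic, at the cost of working in a graph of degeneracy bounded by $f(s)$.

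Next, on the resulting instance the problem is exactly the acyclic-terminals version that Jones et al.\ already solved on bounded-degeneracy graphs. So the second step is simply to invoke their algorithm, which is fixed-parameter tractable parameterized by $k$ on any class of bounded degeneracy when $G[T]$ is acyclic. Concretely one would describe the bounded-search-tree argument: a minimal solution tree has at most $k$ non-terminal (Steiner) vertices and at most $k+1$ leaves among $\{r\}\cup T$ that are ``essential'' branching points, so the terminal part decomposes into at most $O(k)$ directed paths, and in a graph of degeneracy $d$ the Steiner vertices can be guessed by branching over bounded-degree-type choices; this yields running time $g(k,s)\cdot n^{O(1)}$. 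Since $f(s)$ bounds the degeneracy, every parameter appearing in their bound becomes a function of $k$ and $s$ only, so the composed algorithm is fixed-parameter tractable in $k+s$ on $\CCC$.

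The main obstacle is making the contraction step fully rigorous, specifically the equivalence of instances: one must check that an arbitrary (not necessarily tree-shaped) solution in $G$ survives contraction and, more delicately, that a solution in $G'$ lifts back to a solution in $G$ of the same size without using extra non-terminals. The lifting direction uses that each contracted vertex represents a strongly connected set of terminals, so an incoming path can be completed to reach every terminal of that component internally, using only terminal vertices and hence no new Steiner vertices; I would state this as a short lemma. A secondary technical point is that after contraction the ``diameter-$s$'' components have collapsed, so one must verify that no new long strongly connected component of terminals is created — which holds because contracting along strongly connected sets cannot merge previously distinct strongly connected components of $G[T]$ into a larger non-trivial one. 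Once these two points are settled, the theorem follows by plugging the bounded-degeneracy algorithm of~\cite{Jones13} into the preprocessed instance.
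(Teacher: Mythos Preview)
Your proposal is correct and follows essentially the same route as the paper: contract the strongly connected components of $G[T]$ (a depth-$s$ minor operation, so the result has degeneracy at most $2\nabla_s(G)\le 2f(s)$), check that solutions transfer in both directions, and then run the bounded-degeneracy algorithm of Jones et al.\ on an instance with acyclic terminals. The paper does exactly this, except that instead of invoking~\cite{Jones13} as a black box it spells the branching argument out in full.

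One caveat: your informal summary of that branching argument (``at most $k{+}1$ essential leaves, decompose the terminal part into $O(k)$ directed paths'') is not how the algorithm actually works and would not obviously go through. The real argument hinges on the set $T_0$ of \emph{source} terminals (in-degree~$0$ in $G[T]$): one observes that any solution only needs to reach $T_0$, splits the remaining non-terminals into those dominating more than $d$ versus at most $d$ vertices of $T_0$, uses degeneracy of the bipartite subgraph between high-degree non-terminals and $T_0$ to find a source terminal with at most $d$ high-degree in-neighbours, and branches on those (plus one branch deleting them all). If you intend to ``describe the bounded-search-tree argument'' rather than cite it, use this structure; otherwise a clean citation to~\cite{Jones13} suffices.
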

\begin{proof}
  Let $G'\in\CCC$ be a digraph, $r\in V(G')$ be the root node,
  $T'\subseteq V(G')\setminus \{r\}$ be the set of terminals and let $k\geq 0$
  be an integer. Let $s$ be the maximal diameter of a strongly
  connected component of $G'[T']$.

  As a first reduction step, we contract every strongly connected
  component of $G'[T']$ into a single vertex. Let $G$ be the resulting
  digraph and let $T$ be the set of terminals which, for every strong
  component of $G'[T']$ contains the corresponding new vertex obtained
  by contraction.  It is easily seen that any set $S\subseteq V(G)$ is
  a solution of $(G, r, T, k)$ if, and only if, $S$ is a solution of
  $(G', r, T', k)$.  Let $d \coloneqq 2\nabla_0(G)$. 
  Note that $G$ is a depth-$s$ minor of $G'$, as every strong
  component that was contracted to obtain $G$ has diameter at most
  $s$. It follows that $d \leq 2\nabla_s(G')$ is bounded by a
  constant only depending on $s$ (and the expansion of $\CCC$).

  Let $T_0\subseteq T$ be the set of terminals that have in-degree $0$
  in $G[T]$. Since for every
terminal $t\in T$, the graph $G[T]$ contains a path from some 
$t_0\in T_0$ to $t$, we have for every set $S\subseteq V(G)$ 
the property that in $G[\{r\}\cup S\cup T]$ 
there is a directed path from $r$ to every 
$t\in T$ if and only if there is a 
directed path from $r$ to every $t_0\in T_0$. We now prove the following claim (see \cite[Lemma 2]{Jones13}).

  \begin{Claim}\label{claim:nederlof}
    There is an algorithm which, given a digraph $G$,
    $r\in V(G), T\subseteq V(G)\setminus \{r\}$ and $T_0\subseteq T$, computes a minimum size set $S\subseteq V(G)$
    such that there is a path from $r$ to every $t\in T_0$
    in $G[\{r\}\cup T\cup S]$ in time $2^{|T_0|}\cdot p(n)$, for some
    fixed polynomial $p(n)$ where $n=\abs{V(G)}$. 
\end{Claim}

\begin{ClaimProof}
  We first add an edge from every $t\in T_0$ to every node
  $u\in V(G)\setminus T$ which is reachable from $t$ by a path whose
  internal vertices are all in $T$. Let $G''$ be the resulting
  graph. Then any set $S$ is a solution for 
  the problem from the claim if, and
  only if, it is a solution for $\textsc{Dst}(G'', r, T_0, k)$. We can now call the algorithm of Misra et al.
  \cite{Misra10}, to solve the instance $\textsc{Dst}(G'', r, T_0, k)$ in time
  $2^{|T_0|}\cdot p(n)$, where $p(n)$ is
  a fixed polynomial. This immediately implies the claim.
\end{ClaimProof}

We are now ready to present a recursive algorithm for deciding whether
$\textsc{Dst}(G, r, T, k)$ has a solution. In the recursive calls we
are additionally given a partial solution {$Y\subseteq V(G)\setminus (T\cup\{r\})$} as input and we need to
decide if $\textsc{Dst}(G, r, T, k)$ has a solution extending
$Y$. {Note that a solution extending $Y$ is simply a solution of $\textsc{Dst}(G, r, T\cup Y, k-|Y|)$.} 

Let $G, r, T, k, Y$ be given as above. Clearly, if $|Y| > k$ we
can reject immediately. Recall that $d = 2\nabla_0(G)$.  Let
$N \coloneqq V(G) \setminus (T\cup Y\cup \{r\})$ be the set of non-terminal
vertices.
Let $k_Y \coloneqq k - |Y|$ be the number of non-terminals we can still
choose for our solution.  Let {$T_Y \coloneqq \{ t\in T \sth t \in N^+(Y\cup \{r\}) \}$}
be the set of terminals dominated by $Y$ {or $r$}. Let
$\bty \coloneqq T_0\setminus T_Y$. 

Let
$\Sh \coloneqq \{ v \in N \sth |N^+(v)\cap \bty| > d \}$ be the set of
non-terminals dominating more than $d$ elements in $\bty$ and let
$\Sl \coloneqq N\setminus \Sh$ be the non-terminals dominating at most $d$
elements of $\bty$.
Similarly, let $\Th \coloneqq \{ t\in \bty \sth t\in N^+(\Sh)\}$ be the set
of source terminals dominated by $\Sh$ and let
$\Tl \coloneqq \{ t\in \bty \sth t \not\in N^+(Y\cup \Th)\}$ be the set of
source terminals not dominated by either $Y$ or $\Th$.

Clearly, if $|\Tl|> d\cdot k_Y$, then we can reject the input, as all
vertices in $\Tl$ can be dominated only by non-terminals from $\Sl$
and each $v\in \Sl$ can dominate only $d$ elements of $\Tl$.  Hence we
can assume that $|\Tl|\leq d\cdot k_Y$. It follows that if $\Sh$ is
empty and therefore $\Th$ is empty, we can apply the algorithm in
Claim~\ref{claim:nederlof} to decide whether {$\DST(G,r,T\cup Y,k_Y)=\DST(G,r,\Tl\cup Y,k_Y)$} has a solution
extending $Y$ in time $2^{\abs{T_{\le d}}}\cdot p(n) \le 2^{d\cdot
    k_Y}\cdot p(n)$.

Thus, we can assume that $|\Tl|\leq d\cdot k_Y$ and
$\Sh\not=\emptyset$. Now choose among all vertices in $\Th$ a vertex
$v\in \Th$ which minimizes $d_v \coloneqq |\Sh\cap N^-(v)|$. We claim that
$d_v \leq d$. To see this, consider the subgraph~$H$ of $G$ with
vertex set $\Sh\cup \Th$ and all edges of $G$ from $\Sh$ to $\Th$. As
this is a subgraph of $G$ it follows that
$2\nabla_0(H)\leq 2\nabla_0(G) = d$. Hence, there must be a vertex in
$\Th$ of in-degree at most $d$ and therefore, by the choice of $v$, we
have $d_v\leq d$.

Clearly, any solution to $\DSTG$ extending $Y$ must contain a vertex
dominating $v$. 
We can
explore all possibilities by branching into $d+1$ recursive calls: for
each $s\in (\Sh\cap N^-(v))$ we call {$\DST(G,r,T\cup Y\cup\{s\}, k_Y-1)$} recursively. If one of these calls is successful, then we return the
solution. Otherwise we know that there is no solution in which $v$ is
dominated by a high degree vertex and hence, {$\DST(G,r,T\cup Y,k_Y)$} has a solution
if, and only if, {$\DST(G-(\Sh\cap N^-(v)), r, T\cup Y, k_Y)$} has
a solution.  Hence, the last branch is to recursively
call {$\DST(G-(\Sh\cap N^-(v)), r, T\cup Y, k_Y)$.} If this returns a
solution extending $Y$, we are done and return this solution,
otherwise we can reject the input.  Note that in this recursive
instance, $v$ is no longer a vertex in $\Th$ as it is not dominated
anymore by any non-terminal which dominates at least $d$ 
nodes in $\bty$.

We show next that the algorithm terminates {sufficiently fast} on all inputs. In every
recursive call, either the set $Y$ increases and $|\Tl|$ does not
change or $Y$ is not changed but $|\Tl|$ increases by one, as the
vertex~$v$ in the last branch is now added to $\Tl$ in the recursive
call. For every node $x$ in the recursion tree we can therefore define
the complexity of $x$ as $|Y|+|\Tl|$, where $Y$ and $\Tl$ are the
corresponding sets of the $\DST$-instance solved at this node. Hence,
every recursive call increases the complexity.

As the algorithm terminates as soon as $|\Tl|> d\cdot (k-|Y|)$ or
$|Y|>k$, this means that every branch of the recursion tree has length
at most $k+d\cdot k = k\cdot (d+1)$. As at every node we do at most
$d+1$ recursive calls, this means that the entire search tree has at
most $(d+1)^{k\cdot (d+1)} = 2^{k\cdot (d+1) \cdot \log (d+1)}$ nodes.
Clearly, the computation at every node can be done in polynomial
time. Hence, the entire algorithm runs in time
$\Oof(2^{k\cdot (d+1)\cdot \log (d+1)})\cdot p(n)$ for some fixed
polynomial $p(n)$.  This completes the proof of the theorem.
\end{proof}

The proof of the theorem has the following immediate consequences.

\begin{corollary}
  Let $\CCC$ be a class of digraphs closed under taking directed
  minors for which $\nabla_0(G) \leq c$ for a constant $c$ for all
  $G\in \CCC$. Then $\DSTG$ can be solved for all $G\in \CCC$, $r\in
  V(G)$, $T\subseteq V(G)\setminus \{r\}$ and~$k$ in time
  $2^{\Oof(k)}\cdot p(n)$, for some fixed polynomial $p(n)$.
\end{corollary}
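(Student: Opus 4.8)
The plan is to reuse, essentially verbatim, the recursive algorithm developed in the proof of \cref{thm:steiner}, after observing that in that proof the parameter $s$ played exactly one role: it was used only to bound the degeneracy $\nabla_0(G)$ of the digraph $G$ obtained from $G'$ by contracting each strongly connected component of $G'[T']$, via the estimate $\nabla_0(G)\le \nabla_s(G')$ together with the bounded expansion of $\CCC$. Under the hypothesis of the corollary this bound on $\nabla_0(G)$ is available directly and with no reference to $s$ at all, so the $s$-dependence simply disappears from the running time.

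The one new point I would establish is that the contracted digraph $G$ is a \emph{directed} minor of $G'$, so that $G\in\CCC$. Map each vertex of $G$ that arises from a contracted strong component to that component (a subgraph of $G'$), and every other vertex to itself; map each arc of $G$ to a witnessing arc of $G'$ between the relevant branch sets. Conditions (1) and (2) of the definition of a directed model are immediate, and conditions (3.1)--(3.3) hold because a strongly connected subgraph contains a directed path between every ordered pair of its vertices, so in particular any of its vertices may serve as the source $s_v$ and as the sink $t_v$. Hence $G\minor G'$, and since $\CCC$ is closed under directed minors we get $G\in\CCC$ and therefore $\nabla_0(G)\le c$. Setting $d\coloneqq 2\nabla_0(G)\le 2c$, the quantity $d$ is now an absolute constant, independent of $s$ and indeed of the instance.

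With $d$ a fixed constant I would run the recursive algorithm of \cref{thm:steiner} without any change: the preprocessing (contracting the strong components of $G'[T']$, reducing to the in-degree-$0$ terminals $T_0$, and the observation that solutions of the contracted instance coincide with solutions of the original one), the reachability subroutine proved as the claim in that proof, and the branching on a source terminal of minimum in-degree all go through word for word, since none of them used bounded expansion for anything other than bounding $d$. The running-time analysis is likewise unchanged: the search tree has at most $(d+1)^{k(d+1)}$ nodes with polynomial work per node, giving total time $\Oof\big(2^{k(d+1)\log(d+1)}\big)\cdot p(n)$, which is $2^{\Oof(k)}\cdot p(n)$ because $d\le 2c$ is constant. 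Note that full bounded expansion of $\CCC$ is never needed here; closure under directed minors together with $\nabla_0\le c$ is precisely what forces the degeneracy of the contracted instance to be constant, and that is all the algorithm consumes.

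The only step I expect to require genuine care is the directed-minor verification in the second paragraph: in the undirected setting contracting a connected subgraph is obviously a minor operation, but here one must explicitly exhibit the source and sink vertices and the internal directed paths demanded by (3.1)--(3.3). This is straightforward for strongly connected branch sets, yet it is exactly the point that makes ``closed under directed minors'' the correct hypothesis, so it is worth spelling out. Everything else is a direct appeal to the proof of \cref{thm:steiner} with the role of $s$ replaced by the constant bound coming from $\nabla_0\le c$.
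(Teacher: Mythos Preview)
Your proposal is correct and matches the paper's intent: the paper states the corollary as an immediate consequence of the proof of \cref{thm:steiner} without giving a separate argument, and your derivation is exactly the intended one. Your explicit verification that contracting the strongly connected components of $G'[T']$ produces a directed minor of $G'$ (so that closure under directed minors forces $G\in\CCC$ and hence $d\le 2c$) is precisely the point that replaces the use of $\nabla_s$ in the original proof.
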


Note that this strictly generalizes classes of undirected graphs
excluding a fixed minor. 

Another consequence of this is the following result, which immediately
follows from the well-known observation in parameterized complexity
(see e.g. \cite[Lemma 7]{Jones13}), that for all functions $g(n) = o(\log n)$
there is a function $f(k)$ such that $f(k) \leq 2^{g(n)\cdot k}$, for all $k$ and all $n$.

\begin{corollary}
  Let $\CCC$ be a class of digraphs such that $\nabla_{|G|}(G)\cdot \log \nabla_{|G|}(G) \leq o(\log
  n)$  for all $G\in \CCC$. Then $\DST$ is fixed-parameter tractable on $\CCC$ with parameter $k$.
\end{corollary}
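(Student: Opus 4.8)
The plan is to run, verbatim, the recursive algorithm from the proof of \cref{thm:steiner}, and to observe that under the present hypothesis its running-time bound collapses to one that is fixed-parameter tractable in $k$ alone; the maximal diameter $s$ of the strongly connected terminal components drops out entirely, much as it does in the preceding corollary but for a different reason.

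Recall that on input $(G',r,T',k)$ with $G'\in\CCC$ and $n\coloneqq\abs{V(G')}$, the algorithm first contracts each strongly connected component of $G'[T']$, obtaining a digraph $G$, and then runs in time $\Oof(2^{k\cdot(d+1)\cdot\log(d+1)})\cdot p(n)$, where $d=2\nabla_0(G)$ and this single value of $d$ serves as a degeneracy bound for every digraph that occurs in the recursion, all of these being subgraphs of $G$. The first step is to bound $d$ without reference to $s$. Since each contracted component has diameter less than $n$, the digraph $G$ is a depth-$s$ minor of $G'$ for some $s<n$, and any subgraph $H$ of $G$ is again a depth-$s$ --- hence a depth-$n$ --- directed minor of $G'$: one restricts the branch-sets and branch-arcs of the depth-$s$ model of $G$ in $G'$ to those indexed by $V(H)$ and $E(H)$, which only shrinks each set $\mathrm{in}(\delta(v))$ and $\mathrm{out}(\delta(v))$, so that conditions (1)--(3) of a directed model are preserved and no branch-set path gets longer. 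Hence $\nabla_0(G)\le\nabla_n(G')=\nabla_{|G'|}(G')$, and therefore $d\le 2\nabla_{|G'|}(G')$, a bound independent of $s$.

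Plugging this in, set $g(n)\coloneqq(d+1)\log(d+1)$. For absolute constants $C,C'$ one has $g(n)\le C\cdot\nabla_{|G'|}(G')\log\nabla_{|G'|}(G')+C'$, the additive $C'$ absorbing the regime in which $\nabla_{|G'|}(G')$ is bounded; and since $G'\in\CCC$, the hypothesis gives $\nabla_{|G'|}(G')\log\nabla_{|G'|}(G')\le o(\log n)$, whence $g(n)=o(\log n)$. Thus the algorithm runs in time $\Oof(2^{k\cdot g(n)})\cdot p(n)$ with $g(n)=o(\log n)$, and it remains to apply the standard observation quoted just before the corollary (see \cite[Lemma 7]{Jones13}): fixing $k$, for every $n$ large enough that $g(n)\le\tfrac{1}{k}\log n$ we have $2^{k\cdot g(n)}\le n$, so the running time is polynomial in $n$; for the remaining, bounded range of $n$ the running time is bounded by a function of $k$ alone. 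Combining the two ranges gives a running-time bound of the form $f(k)\cdot q(n)$ with $f$ computable and $q$ a fixed polynomial, so $\DST$ is fixed-parameter tractable on $\CCC$ with parameter $k$.

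There is no genuine obstacle here: the corollary is essentially immediate from the algorithm of \cref{thm:steiner} once one notices that the contracted digraph is a bounded-depth directed minor of the input. The two points that need a little care are the directed-minor bookkeeping in the second paragraph --- verifying that passing to a subgraph of the contracted digraph still yields a directed minor of $G'$ of depth at most $n$, so that the single constant $d$ stays valid throughout the recursion --- and the routine check that the constants $C,C'$ and the little-$o$ case split still behave correctly when $\nabla_{|G'|}(G')$ is so small that $\log\nabla_{|G'|}(G')$ is nonpositive.
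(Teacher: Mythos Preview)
Your proof is correct and follows essentially the same approach as the paper, which presents the corollary as an immediate consequence of the running-time bound from \cref{thm:steiner} together with the standard observation from \cite[Lemma~7]{Jones13}. One minor remark: the directed-minor bookkeeping for subgraphs of $G$ in your second paragraph is unnecessary, since $\nabla_0$ is monotone under taking subgraphs; hence $\nabla_0(H)\le\nabla_0(G)\le\nabla_s(G')\le\nabla_{|G'|}(G')$ already suffices to keep the single constant $d$ valid throughout the recursion.
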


Finally, the result also implies an fpt factor-$2$-approximation
algorithm for the Strongly Connected Steiner Subgraph problem,
\textsc{Scss}, on classes of bounded directed expansion. In the
\textsc{Scss} we are given a digraph $G$, a number $k$, and a set $T$
of terminals and we are asked to compute a set $S$ of at most $k$
non-terminals such that $G[T\cup S]$ is strongly connected. 

\begin{theorem}
  Let $\CCC$ be a class of digraphs of bounded expansion. 
  There is an fpt factor-$2$-approximation algorithm for
  $\textsc{Scss}$ on $\CCC$ parameterized by the number $k$ of 
  non-terminals in the solution plus the maximal diameter $s$ of a 
  strongly connected component in the subgraph of $G$ induced by the
  terminal nodes.
\end{theorem}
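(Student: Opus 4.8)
The plan is to reduce the \textsc{Scss} problem to a pair of \DST-instances and combine their solutions. Given an \textsc{Scss}-instance $(G, T, k)$ on a digraph $G\in\CCC$ of bounded expansion with maximal strongly connected component diameter $s$ in $G[T]$, I would first fix an arbitrary terminal $r\in T$ (if $T=\emptyset$ the problem is trivial) and designate it as a root. Then I would call the \DST-algorithm of \cref{thm:steiner} twice: once on the instance $(G, r, T\setminus\{r\}, k)$ to find a minimum-size set $S_{\text{out}}$ of non-terminals such that $G[\{r\}\cup S_{\text{out}}\cup (T\setminus\{r\})]$ contains a path from $r$ to every other terminal; and once on the \emph{reverse} digraph $\overleftarrow{G}$ (where every arc is flipped) with the instance $(\overleftarrow G, r, T\setminus\{r\}, k)$ to find a minimum-size set $S_{\text{in}}$ of non-terminals such that in $\overleftarrow G$ there is a path from $r$ to every terminal, equivalently in $G$ there is a path from every terminal to $r$. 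Note that reversing all arcs preserves membership in a class of bounded expansion (since $\nabla_r$ is invariant under reversal, as the set of depth-$r$ minors is closed under reversal), and preserves the diameter $s$ of strongly connected components of the terminal-induced subgraph, so \cref{thm:steiner} applies verbatim to $\overleftarrow G$.

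The correctness argument runs as follows. If $(G,T,k)$ has an \textsc{Scss}-solution $S^*$ with $|S^*|\le k$, then $G[T\cup S^*]$ is strongly connected, so in particular it contains a path from $r$ to each terminal and a path from each terminal to $r$; hence $S^*$ is a feasible (not necessarily optimal) solution to \emph{both} \DST-instances, so both calls succeed and return sets $S_{\text{out}}, S_{\text{in}}$ each of size at most $k$. Conversely, if both \DST-calls succeed with solutions $S_{\text{out}}, S_{\text{in}}$, set $S \coloneqq S_{\text{out}}\cup S_{\text{in}}$. Then in $G[T\cup S]$ every terminal is reachable from $r$ (witnessed by $S_{\text{out}}$) and every terminal reaches $r$ (witnessed by $S_{\text{in}}$); chaining these two paths through $r$ shows that any two terminals are mutually reachable, so $G[T\cup S]$ is strongly connected and $S$ is a valid \textsc{Scss}-solution. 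Since $|S_{\text{out}}|\le \mathrm{OPT}$ and $|S_{\text{in}}|\le\mathrm{OPT}$, where $\mathrm{OPT}$ is the size of a minimum \textsc{Scss}-solution, we get $|S|\le |S_{\text{out}}|+|S_{\text{in}}|\le 2\cdot\mathrm{OPT}$, giving the factor-$2$ approximation. (One subtlety: the \DST-algorithm as stated is a decision procedure; but a minimum-size solution can be extracted either by standard self-reduction, or by observing that the recursive algorithm in the proof of \cref{thm:steiner} already constructs and returns an explicit solution set, and Claim~\ref{claim:nederlof} likewise returns a minimum-size set.)

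For the running time, each of the two \DST-calls runs in time $2^{\Oof(k\cdot(d+1)\cdot\log(d+1))}\cdot p(n)$ where $d=2\nabla_s(G)$ is bounded by a constant depending only on $s$ and the expansion function of $\CCC$, exactly as in \cref{thm:steiner}; reversing arcs and taking the union of the two solution sets are polynomial-time operations, so the overall algorithm is fpt in $k+s$.

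The step I expect to require the most care is not conceptually hard but worth pinning down precisely: verifying that $\overleftarrow{G}$ indeed lies in a class of bounded expansion with the same diameter bound, i.e.\ that the reduction of \cref{thm:steiner} (contracting strongly connected components of the terminal-induced subgraph, then bounding $\nabla_0$ of the contraction by $\nabla_s$ of the original via a depth-$s$ minor argument) goes through identically for $\overleftarrow G$. Since depth-$r$ minors are closed under arc reversal and $\nabla_r(\overleftarrow G)=\nabla_r(G)$, and since the strongly connected components of $\overleftarrow{G}[T]$ are exactly those of $G[T]$ with the same diameter $s$, this is immediate, but it is the one place where one must be slightly attentive rather than invoking \cref{thm:steiner} as a pure black box. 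The rest — correctness of the union and the factor-$2$ bound — is the routine gluing argument sketched above.
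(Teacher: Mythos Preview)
Your proposal is correct and follows essentially the same approach as the paper: fix a terminal as root, solve \DST\ on $G$ and on the arc-reversed digraph, and output the union of the two solutions. The paper's proof is considerably terser; you have actually filled in several points (invariance of $\nabla_r$ under reversal, that the diameter $s$ of strong components in $G[T]$ is preserved, and the decision-vs.-search issue) that the paper leaves implicit.

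One small gap worth closing: your sentence ``any two terminals are mutually reachable, so $G[T\cup S]$ is strongly connected'' only addresses terminals, not the non-terminals in $S=S_{\text{out}}\cup S_{\text{in}}$. To conclude that \emph{all} of $G[T\cup S]$ is strongly connected you should note that in a minimum \DST\ solution every non-terminal lies on some root-to-terminal path (else it could be deleted), so each $s\in S_{\text{out}}$ satisfies $r\to s\to t$ for some terminal $t$, and hence $s\to t\to r$ via $S_{\text{in}}$; symmetrically for $s\in S_{\text{in}}$. The paper's proof does not spell this out either, so you are in good company, but the step as written is not quite complete.
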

\begin{proof}
  Note first that if $H$ is obtained from a
  digraph $G$ by reversing the 
  orientation of all edges of $G$, then for all $r$, $\nabla_r(H) =
  \nabla_r(G)$. Now, given a digraph $G$, a number $k$ and a set $T$ of
  terminals, we can fix a terminal $t\in T$ and solve $P_1 \coloneqq \DST(G, t,
  T\setminus\{t\}, k)$ and $P_2 \coloneqq \DST(H, t, T\setminus\{t\}, k)$, where
  $H$ is obtained from $G$ by reversing all edges. We then take the union of
  the two solutions $S_1$ and $S_2$ for $P_1$ and~$P_2$. Clearly,
  if $\textsc{Scss}(G, k, T)$ has a solution $S$ of size $k$ then $S$ is
  also a solution for the two subproblems. Hence, $|P_1|, |P_2|\leq k$
and therefore $|S_1\cup S_2|\leq 2k$ as claimed. 
\end{proof}
We close the section by showing that for bounded expansion classes,
the parameterization $k+s$ in \cref{thm:steiner} cannot be replaced by
taking only~$k$ as parameter. This follows immediately from a result
of~\cite{Jones13} where it is shown that \textsc{Set Cover} can be
reduced to $\DST$ on $2$-degenerate graphs. It is straight forward to
modify this example so that the resulting class of graphs has bounded
directed expansion.

\begin{theorem}
  The $\DST$-problem restricted to classes of digraphs of bounded
  expansion parameterized by the solution size $k$ is $W[2]$-hard.
\end{theorem}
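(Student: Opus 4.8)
We need to show that when we parameterize $\DST$ by the solution size $k$ alone (dropping the "$+s$" for the diameter of strongly connected terminal components), the problem becomes $W[2]$-hard even on classes of digraphs of bounded directed expansion. The plan is to take the existing reduction from \textsc{Set Cover} to $\DST$ on $2$-degenerate graphs from~\cite{Jones13} and verify that the class of digraphs it produces has bounded directed expansion — not merely bounded degeneracy ($\nabla_0$ bounded), which is a priori weaker than bounded expansion ($\nabla_r$ bounded for all $r$).

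**Step 1: Recall the reduction.** First I would reproduce the \textsc{Set Cover}-to-$\DST$ reduction of~\cite{Jones13}. Given a set cover instance with universe $U = \{u_1,\dots,u_m\}$, family $\mathcal{F} = \{F_1,\dots,F_n\}$, and budget $k$, one builds a digraph with a root $r$, one "set vertex" $x_F$ for each $F \in \mathcal{F}$ with an arc $r \to x_F$, and one "element vertex" $y_u$ for each $u \in U$ which is a terminal, with an arc $x_F \to y_u$ whenever $u \in F$. The set vertices are the only non-terminals, so a solution of size $\le k$ corresponds exactly to a set cover of size $\le k$. Since $W[2]$-hardness of \textsc{Set Cover} parameterized by solution size is standard, this gives $W[2]$-hardness of $\DST$ parameterized by $k$; the point of~\cite{Jones13} is that the constructed graph is $2$-degenerate.

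**Step 2: Boost to bounded expansion.** The main obstacle is that $2$-degeneracy ($\nabla_0 \le 1$) does not imply bounded expansion, so we cannot invoke \cref{thm:steiner} for a contradiction directly; we must exhibit a modified reduction whose output class genuinely has bounded expansion. The natural fix is to subdivide: replace each arc $x_F \to y_u$ (and, if needed, each arc $r \to x_F$) by a directed path of length $2$ through a fresh subdivision vertex $z_{F,u}$. The resulting digraph is (the orientation of) an incidence-type graph — essentially bipartite-like with subdivision vertices — and one checks that for any $r \ge 0$, a depth-$r$ minor of such a graph has average density bounded by an absolute constant, because contracting short paths in a subdivided bipartite structure cannot create dense minors (this is the standard argument that $1$-subdivisions of arbitrary graphs, or bounded subdivisions of such, have bounded expansion; see the grad computations for subdivisions). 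Here I would need the easy fact that the underlying undirected graph is a subgraph of a fixed-depth subdivision of a graph of bounded degeneracy — or more cleanly, note the underlying undirected graph is itself bipartite with the subdivision vertices having degree $2$, and apply the known bound $\nabla_r$ of a $q$-subdivision in terms of the original density. Crucially the subdivision vertices are non-terminals, but each has out-degree $1$ towards a terminal and must be bought to connect that terminal, so a minimum solution still corresponds to a minimum set cover (one subdivision vertex per element, plus the chosen set vertices); one adjusts the budget to $k' = k + m$, or better, routes all subdivision vertices for a single element through one vertex so the correspondence is exact with budget $k$.

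**Step 3: Assemble the hardness statement.** With the modified reduction producing, for every set cover instance, a digraph lying in a single fixed class $\CCC$ of bounded directed expansion (the bound $f(r)$ being an absolute constant independent of the instance), and with the parameter of the $\DST$ instance bounded by a function of the set cover parameter, the reduction is a parameterized reduction. Since \textsc{Set Cover} is $W[2]$-hard, $\DST$ on $\CCC$ parameterized by $k$ alone is $W[2]$-hard. I expect the one delicate point to be making the solution-size correspondence exact despite the introduced subdivision vertices — this is handled by the "one subdivision vertex per element" trick (a single fresh vertex $z_u$ per element $u$, with arcs $x_F \to z_u$ for all $F \ni u$ and arc $z_u \to y_u$), after which a $\DST$ solution of size $\le k+m$ exists iff a set cover of size $\le k$ exists, and one reparameterizes accordingly; alternatively, one makes the $z_u$ themselves terminals and keeps $k$, observing $G[T]$ then has all strongly connected components trivial so the modification is consistent with the theorem's framing. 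I will not grind through the explicit grad bounds for the subdivided graph, as the needed inequality $\nabla_r(\text{$2$-subdivision of }H) \le c$ for $H$ of bounded degeneracy is standard and follows from the definitions in \cref{sec:definitions}.
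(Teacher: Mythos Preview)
Your Step~2 contains a genuine error: the claim that the $1$-subdivided incidence digraph has bounded directed expansion is false. With the orientation $x_F \to z_{F,u} \to y_u$, the unsubdivided bipartite graph reappears as a depth-$1$ directed minor. Take as branch set for each set $F$ the star $\{x_F\} \cup \{z_{F,u} : u \in F\}$ --- here $x_F$ is a source reaching every $z_{F,u}$ by a path of length one, so condition~(3.2) of the directed-minor definition is satisfied, and since this branch set receives no model arc, conditions~(3.1) and~(3.3) are vacuous --- and for each element $u$ take the singleton $\{y_u\}$. The arcs $z_{F,u} \to y_u$ then realise the full bipartite incidence graph as a depth-$1$ minor; when the set system encodes $K_{n,n}$ this gives $\nabla_1 \ge n/2$, so the class is not of bounded directed expansion. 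The parenthetical assertion that ``$1$-subdivisions of arbitrary graphs have bounded expansion'' is simply not a true statement, directed or undirected: the $1$-subdivision of $K_n$ has $K_n$ as a depth-$1$ minor. The same objection applies to any fixed-length subdivision, and the ``one $z_u$ per element'' variant in Step~3 does not even reduce $\nabla_0$.

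There is, moreover, a sanity check showing that no construction along these lines can succeed. In every variant you describe, the terminal set $T$ induces an acyclic digraph, so every strong component of $G[T]$ is a singleton and the parameter $s$ of \cref{thm:steiner} equals~$0$. If the output class genuinely had bounded directed expansion, \cref{thm:steiner} would then yield an FPT algorithm for $\DST$ on it parameterised by $k+0 = k$, directly contradicting the $W[2]$-hardness you are trying to establish. Any correct modification of the Jones et al.\ reduction must therefore force the terminals to have strong components of unbounded diameter; a subdivision with acyclic terminals cannot achieve this. The paper itself only asserts that the modification is ``straight forward'' without spelling it out, but the missing ingredient in your proposal is precisely a gadget that makes the terminals strongly connected with growing diameter while simultaneously keeping all shallow directed minors sparse.
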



\section{VC-dimension and domination}\label{sec:vc}

We come to another algorithmic application on graphs of directed 
bounded expansion, namely, the approximation of the \textsc{Distance-$r$
Dominating Set} problem. We study the VC-dimension of set systems
corresponding to $r$-neighborhoods in digraphs of bounded
expansion and derive an $\Oof(k\log k)$-approximation algorithm for
the \textsc{Distance-$r$ Red-Blue Dominating Set} problem and an
$\Oof(k^2\log k)$-approximation algorithm for the \textsc{Strongly Connected
Distance-$r$ Dominating Set} problem on classes of directed bounded
expansion.

\subsection{VC-dimension and neighborhood complexity}

Let $\FFF\subseteq 2^A$ be a family of subsets of a set $A$. For a set
$X\subseteq A$, we denote $X\cap \FFF=\{X\cap F : F\in \FFF\}$.  The
set $X$ is \emph{shattered by $\FFF$} if $X\cap \FFF=2^X$.  The
\emph{Vapnik-Chervonenkis dimension}, short \emph{VC-dimension}, of~$\FFF$ is the maximum size of a set $X$ that is shattered by $\FFF$.

Note that if $\FFF$ has VC-dimension $d$, then also
$B\cap \FFF$ for every subset $B\subseteq A$ of the ground set has VC-dimension
at most $d$.
The following theorem was first proved by Vapnik and
Chervonenkis~\cite{vapnik1971uniform}, and rediscovered by
Sauer~\cite{sauer1972density} and
Shelah~\cite{shelah1972combinatorial}. It is often called the
Sauer-Shelah lemma in the literature.
\begin{theorem}\label{thm:sauer_shelah}
  If $|A|\leq n$ and $\FFF\subseteq 2^A$ has VC-dimension $d$,
  then
  $|\FFF|\leq \sum_{i=0}^{d}\binom{n}{i}\in \Oof(n^d)$.
\end{theorem}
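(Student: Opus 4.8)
The plan is to prove the stronger \emph{Sauer--Shelah--Pajor} statement, namely that the family~$\FFF$ shatters at least $\abs{\FFF}$ distinct subsets of~$A$, i.e.\ $\abs{\FFF} \le \abs{\{\, S \subseteq A : S \text{ is shattered by } \FFF \,\}}$. Granting this, the theorem follows immediately: since $\FFF$ has VC-dimension~$d$, every shattered set has at most $d$ elements, so the number of shattered sets is at most $\sum_{i=0}^{d}\binom{\abs{A}}{i} \le \sum_{i=0}^{d}\binom{n}{i}$, and the estimate $\sum_{i=0}^{d}\binom{n}{i}\in\Oof(n^d)$ is the routine polynomial bound on a truncated sum of binomial coefficients. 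Thus everything reduces to the combinatorial inequality, which I would prove by induction on $\abs{A}$.

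For the base case $\abs{A}=0$ we have $\FFF\subseteq\{\emptyset\}$; if $\FFF=\{\emptyset\}$ it shatters $\emptyset$, and if $\FFF=\emptyset$ the inequality is trivial. For the inductive step, fix an element $x\in A$ and set $A'\coloneqq A\setminus\{x\}$. Define two families on $A'$: the projection $\FFF'\coloneqq\{\,F\setminus\{x\}:F\in\FFF\,\}$ and the ``link'' $\FFF''\coloneqq\{\,F\in\FFF: x\notin F \text{ and } F\cup\{x\}\in\FFF\,\}$. A short counting argument shows $\abs{\FFF}=\abs{\FFF'}+\abs{\FFF''}$: projecting onto $A'$ is injective except that it identifies the pairs $F,\,F\cup\{x\}$ that both lie in~$\FFF$, and such pairs are in bijection with~$\FFF''$.

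It remains to compare shattered sets. First, any $S\subseteq A'$ shattered by~$\FFF'$ is also shattered by~$\FFF$: lift each trace $F'\cap S$ to some $F\in\FFF$ with $F\setminus\{x\}=F'$, which has the same intersection with~$S$. Second, if $S\subseteq A'$ is shattered by~$\FFF''$, then $S\cup\{x\}$ is shattered by~$\FFF$: for every $T\subseteq S$ choose $F\in\FFF''$ with $F\cap S=T$; then $F\cap(S\cup\{x\})=T$ (as $x\notin F$) and $(F\cup\{x\})\cap(S\cup\{x\})=T\cup\{x\}$, and both $F$ and $F\cup\{x\}$ lie in~$\FFF$, so all $2^{\abs{S}+1}$ subsets of $S\cup\{x\}$ occur as traces. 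The shattered sets produced the first way do not contain~$x$, while those produced the second way all contain~$x$, so the two collections are disjoint, and $S\mapsto S\cup\{x\}$ is injective. Hence the number of sets shattered by~$\FFF$ is at least (number shattered by~$\FFF'$) $+$ (number shattered by~$\FFF''$), which by the induction hypothesis is at least $\abs{\FFF'}+\abs{\FFF''}=\abs{\FFF}$, completing the induction.

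The only delicate point is the bookkeeping in the inductive step: establishing the exact identity $\abs{\FFF}=\abs{\FFF'}+\abs{\FFF''}$ and checking that the two families of lifted shattered sets are genuinely disjoint. Once this is set up, the VC-dimension hypothesis is used only at the very end, and no separate ``$\mathrm{VC}(\FFF'')\le d-1$'' step is needed. One could instead run a classical double induction on~$n$ and~$d$, using $\mathrm{VC}(\FFF')\le d$ and $\mathrm{VC}(\FFF'')\le d-1$ directly against the Pascal identity $\sum_{i\le d}\binom{n}{i}=\sum_{i\le d}\binom{n-1}{i}+\sum_{i\le d-1}\binom{n-1}{i}$, but the Pajor formulation is cleaner and yields the stronger shattering statement for free.
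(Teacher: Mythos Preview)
Your proof is correct; it is the clean Pajor-style induction that yields the stronger shattering inequality $\abs{\FFF}\le\abs{\{S\subseteq A:\FFF\text{ shatters }S\}}$ and then reads off the bound from the VC-dimension hypothesis. The counting identity $\abs{\FFF}=\abs{\FFF'}+\abs{\FFF''}$ and the disjointness argument are handled carefully, and the final $\Oof(n^d)$ estimate is immediate for fixed~$d$.

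There is nothing to compare against, however: the paper does not give a proof of this theorem at all. It is stated as a classical result with attribution to Vapnik--Chervonenkis, Sauer, and Shelah, and is used as a black box in the subsequent VC-dimension arguments. Your write-up is a perfectly standard self-contained proof of the cited lemma.
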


\begin{definition}
In the
\textsc{Distance-$r$ Red-Blue Dominating Set} problem, we are given a
digraph $G$ and two sets $R,B\subseteq V(G)$ and an integer $k$, and
asked whether there exists a subset $D\subseteq B$ of at most $k$
\emph{blue} vertices such that each \emph{red} vertex from $R$ is at
distance at most~$r$ to a vertex in $D$.  We allow that $R$ and $B$
intersect.
\end{definition}

The study of the distance-$r$
dominating set problem in context of bounded VC-dimension motivates
the following definition. 
Let $G$ be a digraph and $r\geq 1$. The \emph{distance-r VC-dimension}
of $G$ is the VC-dimension
of the set family $\{N_r^-(v)\colon v\in V(G)\}$ over the set $V(G)$.

According to \cref{thm:sauer_shelah}, the distance-$r$ VC-dimension of
a graph  {is} bounded if the \emph{distance-$r$ neighborhood
  complexity} of its sets is polynomially bounded.  Let $G$ be a
digraph, let $X\subseteq V(G)$ and let $r\geq 1$. The
\emph{distance-$r$ neighborhood complexity of $X$ in $G$}, denoted
$\nu^-(G)$, is defined by
\[\nu^-(G,X) \coloneqq \abs{\{N^-_r(v)\cap X \mid v\in V(G)\}}.\]

Analogously, one can define the \emph{distance-$r$ out-neighborhood
  complexity} when using $N_r^+(v)$ and the \emph{distance-$r$ mixed
  neighborhood complexity} when using $(N_r^+(v)\cup N_r^-(v))$ in
the above definition and our proofs can be analogously carried out for
these measures.

It was proved in~\cite{reidl2016characterising} that a class $\CCC$ of
undirected graphs has bounded expansion, if and only if, for every
$r\geq 1$ there is a constant $c_r$ such that for all $G\in \CCC$ and
all $X\subseteq V(G)$ we have $\nu(G,X)\leq c_r\cdot |X|$.  The
analogous statement for classes of directed graphs does not hold, not
even for $r=1$, as pointed out in~\cite{kreutzer2017structural}.
However, we prove that the distance-$r$ neighborhood complexity of a
digraph can be bounded in terms of its weak $r$-coloring numbers. 

The weak $r$-coloring numbers for undirected graphs were introduced
by Kierstead and Yang~\cite{kierstead03} and they play a key role 
in the algorithmic theory of graphs of undirected bounded expansion, 
ever since Zhu~\cite{zhu2009colouring} proved that these classes
can be characterized by the weak coloring numbers. 

Let $G$ be a digraph.  By $\Pi(G)$ we denote the set of all linear
orders of $V(G)$.  For $r\geq 0$, we say that $u$ is \emph{weakly
  $r$-reachable} from~$v$ with respect to an order~$L\in \Pi(G)$ if
there is a path $P$ of length at most~$r$, connecting $u$ and $v$, \emph{in
either direction}, such that $u$ is minimum among the vertices of $P$
with respect to $L$. By $\dWReach{r}[G,L,v]$ we denote the set of
vertices that are weakly $r$-reachable from~$v$ with respect to~$L$.
We define the \emph{weak $r$-coloring number $\dwcol{r}(G)$} of~$G$
as
\begin{align*}
  \dwcol{r}(G) & \coloneqq  \min_{L\in\Pi(G)}\:\max_{v\in V(G)}\:
                 \bigl|\dWReach{r}[G,L,v]\bigr|\,.
\end{align*}
Note that $\dwcol{r}(G)$ is a {\em
  monotone parameter}, in the sense that if $H\subseteq G$, then $\dwcol{r}(H)\leq \dwcol{r}(G)$. 

\begin{theorem}[\cite{kreutzer2017structural}]\label{thm:wcol-be}
  A class $\CCC$ of digraphs has bounded expansion if, and only if,
  there is $f\colon\N\rightarrow\N$ such that $\dwcol{r}(G) \leq f(r)$
  for all $G\in \CCC$ and all $r\geq 1$.
\end{theorem}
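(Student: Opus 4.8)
The plan is to transfer to the directed setting the classical equivalence, for undirected graphs, between bounded expansion and boundedness of the generalized colouring numbers (Kierstead--Yang~\cite{kierstead03}, Zhu~\cite{zhu2009colouring}, Dvo\v{r}\'ak~\cite{dvovrak2013constant}); the only genuinely new point is that every reachability witness we produce must be an honest directed path, and every shallow minor we extract must satisfy the directed branch-set conditions~(3.1)--(3.3) of \cref{sec:definitions}.

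For the implication from bounded directed weak colouring numbers to bounded expansion, it suffices by \cref{lem:densityminors} to bound the topological grad $\widetilde{\nabla}_r(G)$ by a function of $c\coloneqq\dwcol{2r}(G)$. Fix a digraph $G$, a topological depth-$r$ minor $H\minor_r^tG$ with model $\delta$, and a linear order $L$ of $V(G)$ realising $\dwcol{2r}(G)\le c$. Each arc $e=(u,v)$ of $H$ is realised by a directed path $\delta(e)$ of length at most $2r$ from $\delta(u)$ to $\delta(v)$; let $m_e$ be the $L$-minimal vertex of $\delta(e)$. The initial segment of $\delta(e)$ up to $m_e$ is a directed path from $\delta(u)$ to $m_e$ of length at most $2r$ on which $m_e$ is $L$-minimal---this is exactly where we use that subdivision paths in a \emph{directed} topological minor are themselves directed---so $m_e$ is weakly $2r$-reachable from $\delta(u)$. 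Charging each arc $e$ of $H$ to the pair formed by a branch vertex that weakly $2r$-reaches $m_e$ and the vertex $m_e$ itself, and exploiting that the subdivision paths are internally disjoint and avoid all other branch vertices, a routine counting argument---verbatim as in Zhu's undirected proof---bounds $|E(H)|$ by $h(c)\cdot|V(H)|$ for a fixed function $h$; hence $\widetilde{\nabla}_r(G)$, and therefore by \cref{lem:densityminors} also $\nabla_r(G)$, is bounded along $\CCC$.

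For the converse---if $\CCC$ has bounded expansion, then $\dwcol{r}(G)$ is bounded for every $r$---I would route through a directed analogue $\dadm{r}$ of Dvo\v{r}\'ak's $r$-admissibility number: informally, $\dadm{r}(G)$ is the least $k$ for which some order $L$ of $V(G)$ has the property that no vertex $v$ is an endpoint of more than $k$ directed paths of length at most $r$ that pairwise meet only in $v$ and each of which has exactly one vertex besides $v$ that is $L$-smaller than $v$, namely its other endpoint (the paths being allowed to run in either direction, matching the definition of $\dWReach{r}$). Two ingredients are then needed. First, the sandwich $\dadm{r}(G)\le\dcol{r}(G)\le\dwcol{r}(G)$ together with a reverse bound of the form $\dwcol{r}(G)\le\dadm{r}(G)\cdot(\dadm{r}(G))^{r-1}$ up to an additive constant, both obtained by the same greedy order-construction arguments as in the undirected case; being combinatorics of linear orders, this transfers essentially without change. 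Second, a bound $\dadm{r}(G)\le g(\nabla_r(G))$: from the families of short directed paths witnessing high admissibility one extracts, by a standard argument, a topological minor of depth $\Oof(r)$ whose average degree grows with $\dadm{r}(G)$; since classes of bounded expansion have bounded topological grads at every depth (and by \cref{lem:densityminors}), this forces $\dadm{r}(G)$, and hence $\dwcol{r}(G)$, to be bounded along $\CCC$.

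The main obstacle is the second ingredient of the converse. In the undirected world, ``a vertex with many pairwise almost-disjoint short paths through it yields a dense shallow minor'' is a short argument; here one must ensure that the object assembled from the weakly reachable paths---which may genuinely run in either direction around $v$---is a \emph{directed} minor in the sense of \cref{sec:definitions}, so that each contracted branch set carries an internal source reaching all of its out-boundary and an internal sink reached from all of its in-boundary, with the required internal connectivity. Arranging the chosen in-paths and out-paths at every vertex so that the contracted pieces meet these directed connectivity constraints, while retaining enough of them to keep the density large, is the delicate point; everything else is a faithful transcription of the undirected arguments.
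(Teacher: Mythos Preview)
This theorem is not proved in the present paper; it is quoted from~\cite{kreutzer2017structural} and restated without proof in \cref{sec:def-col}, so there is no in-paper argument to compare your proposal against. That said, the paper does import from the same source exactly the ingredients your sketch relies on: the bound $\dwcol{r}(G)\le 2\,\dadm{r}(G)^r$ is stated in \cref{sec:def-col}, the obstruction characterisation of large admissibility is \cref{lem:adm}, and the projection/closure bound that turns many almost-disjoint short paths into a dense bounded-depth minor is \cref{lem:closure}. Your forward direction (bounded $\dwcol{2r}$ implies bounded $\widetilde{\nabla}_r$) via the minimum-on-path vertex is precisely \cref{lem:wreach-sep} together with a degree count, and is correct as stated. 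Your converse direction via admissibility is the route of~\cite{kreutzer2017structural}, and the paper's proof of \cref{lem:colour-be} displays the same machinery in action.

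One correction on where the difficulty lies. You flag conditions~(3.1)--(3.3) on the branch sets as the delicate point, but in fact they come essentially for free: the paths witnessing high admissibility are genuine directed paths with the hub vertex $v$ as one endpoint, so when their internal vertices are absorbed into $v$'s branch set the result is star-shaped around $v$, and $v$ itself serves simultaneously as the required source and sink (every in-vertex reaches $v$, and $v$ reaches every out-vertex). The real work---visible in the paper's own use of \cref{lem:closure} inside \cref{lem:colour-be}---is controlling how the witnessing paths overlap \emph{away} from the hubs: one first passes to an $r$-closure of the hub set so that every remaining vertex has small $r$-projection, and then contracts paths one at a time, bounding how many other paths each contraction can kill. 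That overlap bookkeeping is where the argument is non-trivial, and it is essentially orientation-agnostic.
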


The next lemma shows that the weak $r$-coloring numbers are very
useful to describe local separation properties in graphs of bounded
expansion. 

\begin{lemma}\label{lem:wreach-sep}
  Let $G$ be a digraph and let $r\geq 1$. Let $P$ be a path of length
  at most $r$ with endpoints $u$ and $v$ in either direction. Let $L$ be an order of
  $V(G)$. Then $\dWReach{r}[G,L,u]\cap \dWReach{r}[G,L,v]$ contains a
  vertex of~$P$.
\end{lemma}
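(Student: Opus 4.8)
The claim is that for a path $P$ of length at most $r$ joining $u$ and $v$ (traversed in either direction along the arcs), the sets $\dWReach{r}[G,L,u]$ and $\dWReach{r}[G,L,v]$ share a vertex of $P$. The natural candidate is the $L$-minimum vertex $w$ of $P$ itself. I would argue that $w$ is weakly $r$-reachable from $u$ and weakly $r$-reachable from $v$, so that $w \in \dWReach{r}[G,L,u]\cap \dWReach{r}[G,L,v]$, which suffices. Note that by convention (as used elsewhere in the excerpt) $v$ is weakly $r$-reachable from itself, so the endpoint cases are not special.

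First I would set $w$ to be the $\wcolorder$-minimum — i.e.\ $L$-minimum — vertex among the vertices of $P$. Write $P = x_0 x_1 \cdots x_\ell$ with $x_0 = u$, $x_\ell = v$, and $\ell \le r$; say $w = x_j$. Then $P_1 \coloneqq x_0 x_1 \cdots x_j$ is a subpath of $P$ of length $j \le \ell \le r$ connecting $u$ and $w$ in either direction (it is a subwalk of $P$, hence still a valid path traversed along the arcs in the same mixed fashion), and $w$ is $L$-minimum among the vertices of $P_1$ since it is even $L$-minimum among the vertices of the larger set $V(P)$. By the definition of weak reachability this witnesses $w \in \dWReach{r}[G,L,u]$. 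Symmetrically, $P_2 \coloneqq x_j x_{j+1} \cdots x_\ell$ has length $\ell - j \le r$, connects $w$ and $v$, and $w$ is $L$-minimum on $P_2$, so $w \in \dWReach{r}[G,L,v]$.

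Combining the two memberships gives $w \in \dWReach{r}[G,L,u] \cap \dWReach{r}[G,L,v]$, and since $w$ is a vertex of $P$ the lemma follows.

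**Main obstacle.** This is essentially a definitional argument, so there is no real obstacle; the only point requiring a moment's care is confirming that a subpath of $P$ is again an admissible witness for weak reachability — the definition in the excerpt allows the connecting path of length at most $r$ to be traversed "in either direction," and one should check this is interpreted as: each arc of the path may point either way, so that any contiguous sub-sequence of vertices of $P$ is itself such a path. Granting that reading, the argument above is complete.
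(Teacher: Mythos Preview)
Your proof is correct and follows exactly the paper's approach: take the $L$-minimum vertex of $P$ and observe it lies in both weak reachability sets. Your only hesitation---the reading of ``in either direction''---is slightly off (it means $P$ is a directed path from $u$ to $v$ or from $v$ to $u$, not that individual arcs may point arbitrarily), but this does not affect the argument, since a contiguous subpath of a directed path is again a directed path and hence a valid witness.
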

\begin{proof}
  Let $z$ be the minimal vertex of $P$ with respect to $L$. Then
  $z\in \dWReach{r}[G,L,u]$ and $z\in \dWReach{r}[G,L,v]$.
\end{proof}

Using \cref{lem:wreach-sep} we can well control the interaction 
of distance-$r$ neighborhoods with a set $X$.  Let~$G$ be a
digraph and let~$L$ be a linear order on $V(G)$ and let $r\geq 1$.
Let $A\subseteq V(G)$ be enumerated as $a_1,\ldots, a_{|A|}$,
consistently with the order.  For $v\in V(G)$ let $D_r^-(v,A)$ denote
the \emph{distance-$r$ vector} of $v$ and~$A$, that is, the vector
$(d_1,\ldots, d_{|A|})$, where $d_i=\dist(a_i,v)$ if
$0\leq \dist(a_i,v)\leq r$, and $\infty$ otherwise. Here $\dist(a_i,v)$ is
the length of a shortest path from $a_i$ to $v$.

\begin{lemma}\label{lem:nc}
  Let $G$ be a digraph, let $X\subseteq V(G)$ and let $r\geq 1$.  Let
  $c\coloneqq \dwcol{r}(G)$.  Then the number of distinct distance-$r$
  vectors $D_r^-(v,X)$ is bounded by {$((r+2)\cdot c\cdot |X|)^c$}, and in
  particular,
\[\nu_r^-(G,X)\leq ((r+2)\cdot c\cdot |X|)^c.\]
\end{lemma}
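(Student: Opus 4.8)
The strategy is to encode each distance-$r$ vector $D_r^-(v,X)$ by a small amount of data coming from the weak reachability sets, using \cref{lem:wreach-sep} as the separation tool. Fix an order $L$ witnessing $\dwcol{r}(G)=c$. The key observation is that if $x\in X$ and $\dist(x,v)\le r$, then any shortest $x$--$v$ path $P$ (of length $\le r$) has, by \cref{lem:wreach-sep}, a vertex $z\in \dWReach{r}[G,L,x]\cap\dWReach{r}[G,L,v]$. So every $x\in X$ that is "active'' for $v$ (i.e.\ within distance $r$) routes through the small set $\dWReach{r}[G,L,v]$, which has size at most $c$.

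\medskip
\noindent\textbf{Encoding.} For a vertex $v$, consider the set $W(v):=\dWReach{r}[G,L,v]$, $|W(v)|\le c$. For each $z\in W(v)$ record the pair $(\dist(v,z\text{ or }z\text{ to }v),\,?)$ — more precisely, one should record for each $z\in W(v)$ the distance from $z$ to $v$ along paths witnessing weak reachability, but since we need the vector of distances \emph{from} the $a_i$'s we argue as follows. Given $x=a_i\in X$ with $\dist(a_i,v)\le r$, pick a witnessing path $P$ and a common vertex $z\in W(x)\cap W(v)$ on $P$; then $\dist(a_i,v)=\dist(a_i,z)+\dist(z,v)$ (as $z$ lies on a shortest path). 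The quantity $\dist(a_i,z)$ depends only on $a_i$ and $z$, not on $v$; and $z$ ranges over $W(v)$. So the finite distance $d_i=\dist(a_i,v)$ is determined once we know, for each of the $\le c$ vertices $z\in W(v)$, the value $\dist(z,v)\in\{0,1,\dots,r\}$ together with the identity of $z$ — except that $z$ need not lie in $X$, so its identity among $O(c|X|)$ candidates must be specified. Concretely: the candidate set for the $z$'s is $\bigcup_{x\in X}\dWReach{r}[G,L,x]$, of size at most $c\cdot|X|$. Hence the encoding of $v$ is: a choice of at most $c$ elements $z_1,\dots,z_c$ from a set of size $\le c|X|$, together with a value in $\{0,1,\dots,r\}\cup\{\infty\}$ — that is $r+2$ choices — for each, recording $\dist(z_j,v)$ (with $\infty$ meaning "$z_j\notin W(v)$'', so we may as well fix a slot for each of the $c$ positions). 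This gives at most $\bigl((c|X|)\cdot(r+2)\bigr)^c=((r+2)\cdot c\cdot|X|)^c$ possibilities.

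\medskip
\noindent\textbf{Why the encoding determines $D_r^-(v,X)$.} I claim two vertices $u,v$ with the same encoding have $D_r^-(u,X)=D_r^-(v,X)$. Take any $a_i\in X$. If $d_i(v)=\dist(a_i,v)\le r$, a witnessing shortest path meets $W(v)$ in some $z$ with $\dist(z,v)$ recorded, and $d_i(v)=\dist(a_i,z)+\dist(z,v)$; since $u$ has the same recorded $(z,\dist(z,v))$ data and $\dist(a_i,z)$ is independent of the base point, we get $\dist(a_i,u)\le \dist(a_i,z)+\dist(z,u)=d_i(v)$, so $\dist(a_i,u)\le r$ as well and by symmetry the two distances are equal (apply the same argument starting from $u$). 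If $d_i(v)=\infty$ then by the symmetric argument $d_i(u)$ cannot be finite either. This establishes the injectivity of the encoding and hence the bound on the number of distance-$r$ vectors; the bound on $\nu_r^-(G,X)$ follows since $N_r^-(v)\cap X$ is recoverable from $D_r^-(v,X)$ (the set of $a_i$ with finite $i$-th coordinate).

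\medskip
\noindent\textbf{Main obstacle.} The delicate point is the bookkeeping in the encoding: the "common vertex'' $z$ guaranteed by \cref{lem:wreach-sep} lies in $W(x)$ for the \emph{particular} $x=a_i$, so a priori different $a_i$'s route through different small sets $W(a_i)$, and I must argue that all the relevant $z$'s are drawn from the single polynomial-size pool $\bigcup_{x\in X}W(x)$ while still being \emph{recorded via} $v$'s own small set $W(v)$ — the product $(r+2)\cdot c\cdot|X|$ raised to the power $c$ must not secretly hide a larger exponent. I need to be careful that it is $W(v)$ (size $\le c$), not some larger union, that indexes the $c$ coordinates of the encoding, with the $|X|$ factor entering only through the \emph{names} of those $\le c$ vertices; that is exactly what the argument above arranges. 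A secondary care is the shortest-path/additivity claim $d_i=\dist(a_i,z)+\dist(z,v)$, which holds because \cref{lem:wreach-sep} can be applied to a \emph{shortest} $a_i$--$v$ path, on which every vertex $z$ satisfies this additivity.
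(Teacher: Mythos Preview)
Your proposal is correct and follows essentially the same approach as the paper: fix an order $L$ witnessing $\dwcol{r}(G)=c$, define the pool $W=\bigcup_{x\in X}\dWReach{r}[G,L,x]$ of size at most $c|X|$, encode each $v$ by the pair $\bigl(W\cap\dWReach{r}[G,L,v],\,D_r^-(v,W\cap\dWReach{r}[G,L,v])\bigr)$, and use \cref{lem:wreach-sep} on a shortest $a_i\to v$ path to show this encoding determines $D_r^-(v,X)$; the count $((r+2)c|X|)^c$ is the same.

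One small point worth noting: the paper's written proof only explicitly draws the weaker conclusion $N_r^-(u)\cap X=N_r^-(v)\cap X$ from encoding equality, whereas you correctly carry out the symmetric inequality argument to get $\dist(a_i,u)=\dist(a_i,v)$ and hence the full distance-vector equality claimed in the lemma statement. Your additivity observation (that on a shortest $a_i\to v$ path through $z$ one has $\dist(a_i,v)=\dist(a_i,z)+\dist(z,v)$, by combining the triangle inequality with the suffix being a shortest $z\to v$ path) is exactly the missing step.
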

\begin{proof}
  Let
  $W\coloneqq \dWReach{r}[G,L,X]=\bigcup_{x\in X}\dWReach{r}[G,L,x]$.
  We claim that if
  \[D_r^-(u, W\cap \dWReach{r}[G,L,u])=D_r^-(v, W\cap
  \dWReach{r}[G,L,v]),\]
  then \[N_r^-(u)\cap X = N_r^-(v)\cap X.\]  To see this, fix $u$ and
  $v$ with
  $D_r^-(u, W\cap \dWReach{r}[G,L,u])=D_r^-(v, W\cap
  \dWReach{r}[G,L,v])$
  and some $x\in X$. Assume that $x\in N_r^-(u)\cap X$. We prove that
  $x\in N_r^-(v)\cap X$. Let $P$ be a shortest path from $x$ to $u$.
  By \cref{lem:wreach-sep}, $P$ contains a vertex $z$ of
  $W\cap \dWReach{r}[G,L,u]$ and because $P$ is a shortest path, the
  subpath of $P$ between $z$ and $u$ is of minimal distance, say
  distance $r'\leq r$.  Because
  $D_r^-(u, W\cap \dWReach{r}[G,L,u])=D_r^-(v, W\cap
  \dWReach{r}[G,L,v])$,
  also $v$ is at distance at most $r'$ to $z$.  Then the initial part
  of $P$ from $x$ to $z$ together with the path from $z$ to $v$
  witnesses that $x\in N_r^-(v)\cap X$.  The case that
  $x\in N_r^-(v)\cap X$ is symmetric.

  Now, since $\abs{W}\leq c\cdot |X|$ and we have
  $\abs{W\cap \dWReach{r}[G,L,v]}\leq c$ for all $v\in V(G)$, we have
  {$\abs{\{D_r^-(v, W\cap \dWReach{r}[G,L,v]) : v\in V(G)\}}\leq
  (c\cdot \abs{X})^c \cdot (r+2)^c$.}
  As argued above, this number of distinct distance profiles bounds
  the number of neighborhoods in $\nu^-_r(G,X)$.
\end{proof}


\begin{corollary}\label{crl:be-vc}
  Let $G$ be a digraph and $r\geq 1$. Then the distance-$r$
  VC-dimension of $G$ is bounded by {\mbox{$(r+2)\cdot 
  (2 \dwcol{r}(G))^2$}.}
\end{corollary}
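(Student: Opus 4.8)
The plan is to combine the neighborhood‑complexity bound of \cref{lem:nc} with the Sauer–Shelah lemma (\cref{thm:sauer_shelah}) in a ``self‑bounding'' argument. Recall that the distance‑$r$ VC‑dimension of $G$ is the VC‑dimension of the family $\FFF \coloneqq \{N_r^-(v) : v\in V(G)\}$ over the ground set $V(G)$. Let $c \coloneqq \dwcol{r}(G)$ and suppose $X\subseteq V(G)$ is a set that is shattered by $\FFF$; write $d \coloneqq |X|$. Shattering means that $X\cap \FFF = 2^X$, so in particular $|X\cap \FFF| = 2^d$. On the other hand, $X\cap \FFF = \{N_r^-(v)\cap X : v\in V(G)\}$, which is exactly the quantity $\nu_r^-(G,X)$ counted in \cref{lem:nc}. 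Hence $2^d \leq \nu_r^-(G,X) \leq ((r+2)\cdot c\cdot d)^c$.

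The remaining step is purely arithmetic: from $2^d \leq ((r+2)\, c\, d)^c$ I need to extract an explicit upper bound on $d$ of the claimed form $(r+2)\cdot(2c)^2$. Taking logarithms base $2$ gives $d \leq c\log_2\!\big((r+2)\,c\,d\big) = c\big(\log_2(r+2) + \log_2 c + \log_2 d\big)$. Using the standard estimate $\log_2 d \leq d/2$ (valid for all $d\geq 1$) lets me absorb the $\log_2 d$ term: $d \leq c\log_2((r+2)c) + cd/2$, so $d/2 \leq c\log_2((r+2)c)$ and thus $d \leq 2c\log_2((r+2)c)$. A further crude bound of the logarithm — e.g.\ $\log_2((r+2)c) \leq (r+2)c$ or, more wastefully, bounding $2c\log_2((r+2)c)$ by $(r+2)(2c)^2$ — finishes the estimate; one checks that $2c\log_2((r+2)c) \leq (r+2)\cdot 4c^2 = (r+2)(2c)^2$ holds for all $r\geq 1$, $c\geq 1$ since $\log_2((r+2)c) \leq (r+2)c \leq 2(r+2)c^2$.

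The main (and only) obstacle is making the arithmetic simplification clean and in the exact stated form; the conceptual content is entirely carried by \cref{lem:nc}. One mild subtlety worth flagging: \cref{lem:nc} bounds $\nu_r^-(G,X)$ for a fixed $X$, and we are applying it with $X$ equal to the shattered set, which is legitimate since the lemma holds for every $X\subseteq V(G)$; also $\dwcol{r}$ is a monotone parameter, so restricting attention to $X$ does not increase $c$. I would present the argument as: assume $X$ with $|X| = d$ is shattered, derive $2^d \leq ((r+2)c\,d)^c$ from \cref{lem:nc} plus the definition of shattering, then do the logarithmic manipulation above to conclude $d \leq (r+2)(2c)^2$, which bounds the VC‑dimension as claimed.
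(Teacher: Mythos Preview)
Your proof is correct and follows essentially the same route as the paper: both derive $2^{|X|}\le((r+2)\,c\,|X|)^c$ from \cref{lem:nc} and then solve this inequality for $|X|$, the paper by a case split on whether $|X|\ge (r+2)c$ and you by a direct bootstrap via $\log_2 d\le d/2$. One tiny slip: the inequality $\log_2 d\le d/2$ fails at $d=3$ (where $\log_2 3\approx 1.585>1.5$), but since any $d\le 3$ already satisfies the claimed bound $(r+2)(2c)^2\ge 12$, this is harmless; also, your remark about monotonicity of $\dwcol{r}$ is unnecessary, as \cref{lem:nc} is applied to the full graph $G$ with the subset $X$.
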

\begin{proof}
{Let $c\coloneqq \dwcol{r}(G)$ and let $X\subseteq V(G)$ be the largest set which is shattered by
$\{N_r^-(v) : v\in V(G)\}$. 
Let $\FFF=\{N_r^-(v)\cap X :v\in V(G)\}$. As
$X$ is shattered, $\FFF$ has VC-dimension $|X|$
and contains $2^{|X|}$ elements. 
On the other hand, according to \Cref{lem:nc} we have 
$\nu_r^-(G,X)\leq ((r+2)\cdot c\cdot |X|)^c$.
Hence $\FFF$ has at most $((r+2)\cdot c\cdot |X|)^c$ elements, 
which implies that $2^{|X|}\leq ((r+2)\cdot c\cdot |X|)^c$. 
Assuming $|X|\geq ((r+2)\cdot c)$, we get
\[2^{|X|}\leq |X|^{2c} \Leftrightarrow |X|/\log|X|\leq 2c\]
which is violated for $|X|\geq (2c)^2$, as $c\geq 2$ (unless $G$ 
is an edgeless graph in which case the claim trivially holds).  }
\end{proof}

\subsection{Approximation of distance-r red-blue dominating sets}

For our approximation algorithm we will make use of the following
algorithm of Br\"onnimann and Goodrich.

\begin{theorem}[Br\"onnimann and Goodrich~\cite{bronnimann1995almost}]\label{thm:bronnimann}
  For every fixed dimension $d\geq 1$, there is a polynomial-time
  algorithm for finding a hitting set in a set system $\FFF$ of
  VC-dimension $d$ of size $\Oof(k\cdot \log k)$, where $k$ is the
  size of a minimum hitting set for $\FFF$.
\end{theorem}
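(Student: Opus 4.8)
The plan is to use the classical reweighting (multiplicative weights) technique of Br\"onnimann and Goodrich, which reduces the construction of a small hitting set to the repeated computation of $\eps$-nets. Throughout, let $A$ be the ground set of $\FFF$ with $|A|=n$, and recall the Haussler--Welzl $\eps$-net theorem: a set system of VC-dimension $d$ admits an $\eps$-net of size $\Oof((d/\eps)\log(1/\eps))$, and for fixed $d$ such a net can be computed in polynomial time (for weighted instances, by treating the integer weights as multiplicities; one checks that the total weight stays polynomial, so this remains a polynomial-size instance). Since the algorithm is not given $k$ in advance, I would first wrap everything in an outer loop that guesses $k$ by successive doubling $k=1,2,4,\ldots$; for the smallest feasible guess the inner procedure below succeeds within the iteration bound established below, while a too-small guess is detected once that bound is exceeded, which costs only a constant factor overall.

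For a fixed guess $k$, the inner procedure maintains a weight function $w$ on $A$, initially $w\equiv 1$. In each round it sets $\eps\coloneqq 1/(2k)$ and computes an $\eps$-net $N\subseteq A$ of the current weighted system, of size $\Oof((d/\eps)\log(1/\eps))=\Oof(dk\log k)=\Oof(k\log k)$ since $d$ is fixed. If $N$ hits every set of $\FFF$, the procedure outputs $N$ and halts. Otherwise there is some $S\in\FFF$ with $N\cap S=\emptyset$; by the defining property of an $\eps$-net we then have $w(S)<\eps\, w(A)$, and the procedure doubles $w(x)$ for every $x\in S$ and moves to the next round. Testing whether $N$ is a hitting set, and if not producing an unhit $S$, is done by scanning the polynomially many sets of $\FFF$, so each round runs in polynomial time.

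The heart of the proof is the termination bound. On one hand, each doubling step replaces $w(A)$ by $w(A)+w(S)<(1+\eps)\,w(A)$, so after $t$ rounds $w(A)<n\,(1+\eps)^t\le n\,e^{\eps t}=n\,e^{t/(2k)}$. On the other hand, fix an optimal hitting set $H^\ast$ with $|H^\ast|=k$; every unhit set $S$ meets $H^\ast$, so each doubling step doubles the weight of at least one element of $H^\ast$, and by averaging, after $t$ rounds some $x\in H^\ast$ has had its weight doubled at least $t/k$ times, whence $w(A)\ge w(x)\ge 2^{t/k}$. Combining the two estimates yields $2^{t/k}\le n\,e^{t/(2k)}$, that is $t\,(\ln 2-\tfrac{1}{2})\le k\ln n$, so $t=\Oof(k\log n)$. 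Thus the inner procedure halts after $\Oof(k\log n)$ polynomial-time rounds and, run with the correct guess of $k$, returns a hitting set of size $\Oof(k\log k)$, as claimed.

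The step I expect to require the most care is the polynomial-time, \emph{deterministic} construction of a small $\eps$-net for the weighted set system in each round, since the Haussler--Welzl bound is a priori probabilistic; one either invokes the standard derandomizations (method of conditional expectations, or discrepancy-based constructions) valid for systems of bounded VC-dimension, or, exploiting that here $\FFF$ is given explicitly and has only polynomially many sets, argues directly that a net of the stated size can be extracted greedily. Either way, this is exactly the point at which the hypothesis that the dimension $d$ is a fixed constant is used.
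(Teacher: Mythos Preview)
The paper does not prove this theorem at all: it is quoted verbatim from Br\"onnimann and Goodrich~\cite{bronnimann1995almost} and used as a black box in the proof of \cref{crl:red-blue-ds}, so there is no ``paper's own proof'' to compare against. What you have written is essentially the original Br\"onnimann--Goodrich argument (multiplicative-weight reweighting combined with Haussler--Welzl $\eps$-nets, doubling search for $k$), and your termination analysis and size bound are correct; your caveat about deterministic $\eps$-net computation is also the right place to flag the only nontrivial implementation detail.
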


\begin{theorem}\label{crl:red-blue-ds}
  Let $\CCC$ be a class of bounded expansion and let $r\geq 1$. There
  is a polynomial time algorithm which on input
  $G\in \CCC, R,B\subseteq V(G)$ computes a distance-$r$ red-blue
  dominating set of $G$ of size $\Oof(k\cdot \log k)$, where $k$ is
  the size of a minimum distance-$r$ red-blue dominating set in $G$.
\end{theorem}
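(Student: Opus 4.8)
The plan is to reduce the \textsc{Distance-$r$ Red-Blue Dominating Set} problem to an instance of the abstract hitting set problem and then invoke the algorithm of Br\"onnimann and Goodrich (\cref{thm:bronnimann}). Concretely, given $G\in\CCC$, $R,B\subseteq V(G)$ and the radius $r$, I would build the set system $\FFF$ on ground set $B$ whose sets are, for each red vertex $\rho\in R$, the collection $F_\rho\coloneqq N_r^+(\rho)\cap B$ of blue vertices that can $r$-reach $\rho$ — or, switching the roles to match the paper's in-neighborhood convention, $F_\rho\coloneqq \{b\in B : \rho\in N_r^+(b)\} = \{b\in B : b\in N_r^-(\rho)\}$. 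A subset $D\subseteq B$ is a distance-$r$ red-blue dominating set of $G$ exactly when $D$ is a hitting set for $\{F_\rho : \rho\in R\}$, since $D$ hits $F_\rho$ iff some blue vertex in $D$ $r$-reaches $\rho$. Thus the minimum hitting set size equals the minimum red-blue dominating set size $k$, and any hitting set returned by the algorithm is a valid dominating set.

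The key step is to bound the VC-dimension of $\FFF$ by a constant depending only on $r$ and $\CCC$. Here I would use the observation in the paper that restricting a set system to a subset of the ground set does not increase VC-dimension: $\FFF = \{N_r^-(\rho) : \rho\in R\}$ restricted to $B$, so $\mathrm{VC}(\FFF)$ is at most the distance-$r$ VC-dimension of $G$ (taking the in-neighborhood version, which is symmetric to the development above; alternatively work with the out-neighborhood variant, for which the paper notes the proofs carry over verbatim). By \cref{crl:be-vc} the distance-$r$ VC-dimension of $G$ is at most $(r+2)\cdot(2\dwcol{r}(G))^2$, and by \cref{thm:wcol-be}, since $\CCC$ has bounded expansion, there is a function $f$ with $\dwcol{r}(G)\le f(r)$ for all $G\in\CCC$. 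Hence $\mathrm{VC}(\FFF)\le (r+2)\cdot(2f(r))^2\eqqcolon d$, a constant once $r$ and $\CCC$ are fixed.

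With the dimension bounded, I would feed $\FFF$ to the algorithm of \cref{thm:bronnimann}, which in polynomial time produces a hitting set of size $\Oof(k\log k)$ where $k$ is the minimum hitting set size, equivalently the minimum distance-$r$ red-blue dominating set size. Translating back gives the desired dominating set of size $\Oof(k\log k)$. The only points requiring care are bookkeeping ones: the set system $\FFF$ can be constructed in polynomial time (for each $\rho\in R$ one runs a bounded-radius BFS in $G$, or its reverse, intersecting with $B$), and the constant hidden in the $\Oof$ of Br\"onnimann–Goodrich depends on $d$, hence on $r$ and $\CCC$ — acceptable since $r$ and $\CCC$ are fixed.

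I do not expect a genuine obstacle here; the substance of the argument was already done in \cref{lem:nc} and \cref{crl:be-vc}. The mild subtlety is matching conventions: the paper defines distance-$r$ VC-dimension via $N_r^-$, while the dominating set condition ``$\rho$ is at distance $\le r$ from some $b\in D$'' is most naturally phrased via $N_r^+$ of the blue vertices; one resolves this by either observing that hitting ``$b$ $r$-reaches $\rho$'' is the same as hitting ``$\rho\in N_r^-(b)$ viewed as a family indexed by $\rho$ over ground set $B$'', or by appealing to the out-neighborhood analogue of \cref{crl:be-vc} that the paper explicitly says holds. Either way the VC bound applies and the theorem follows.
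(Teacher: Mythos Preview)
Your proposal is correct and follows essentially the same route as the paper: form the set system $\mathcal{F}=\{N_r^-(\rho)\cap B : \rho\in R\}$, observe that hitting sets coincide with red-blue distance-$r$ dominating sets, bound the VC-dimension via \cref{crl:be-vc} (using that restriction to $B$ cannot increase it), and apply \cref{thm:bronnimann}. One small slip: in your last paragraph ``$\rho\in N_r^-(b)$'' should read ``$b\in N_r^-(\rho)$'', as you had it correctly earlier.
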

\begin{proof}
Let $B\subseteq V(G)$ and $R\subseteq V(G)$. Let
$\mathcal{F}=\{N_r^-(v)\cap B : v\in R\}$. Then a hitting set
of~$\mathcal{F}$ is a blue distance-$r$ dominating set of
$R$. According to \cref{crl:be-vc}, $\mathcal{F}$ has bounded
VC-dimension on any bounded expansion class. Now conclude with
\cref{thm:bronnimann}.
\end{proof}

With slightly more effort we can approximate the connected distance-$r$ dominating
set problem. 

\begin{theorem}\label{thm:approx-cds}
  Let $\CCC$ be a class of bounded expansion and let $r\geq 1$. There
  is a polynomial time algorithm which on input $G\in \CCC$ computes a
  strongly connected distance-$r$ dominating set of $G$ of size
  $\Oof(k^2\cdot \log k)$, where $k$ is the size of a minimum strongly
  connected distance-$r$ dominating set of $G$.
\end{theorem}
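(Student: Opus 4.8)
The plan is to bootstrap from the red-blue domination approximation of \cref{crl:red-blue-ds}: first compute a distance-$r$ dominating set $D_0$ of size $\Oof(k\log k)$, then make it strongly connected by adding few further vertices. The naive version of this fails, since an arbitrary distance-$r$ dominating set can contain vertices out of which no strongly connected subgraph on any superset can ever be built (for instance a vertex that does not reach any other vertex). The way around this is that we only want an approximation, so we may afford to \emph{guess} one vertex of an optimal solution. Let $D^\ast$ be a minimum strongly connected distance-$r$ dominating set, so $|D^\ast|=k$, and let $t^\ast$ be any vertex of $D^\ast$. The algorithm iterates over all candidates $t\in V(G)$ for $t^\ast$ and all candidates $\kappa\in\{1,\dots,|V(G)|\}$ for $k$; in the analysis we fix $t=t^\ast$, $\kappa=k$ and show that this run outputs a set of the required size, while every run either fails or outputs a valid strongly connected distance-$r$ dominating set (strong connectivity and distance-$r$ domination are checkable in polynomial time), so returning a smallest valid output is correct.

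For the fixed guess, I would set $B\coloneqq\{\,u\in V(G) : \dist_G(u,t^\ast)\le k-1\,\}$. Since $G[D^\ast]$ is strongly connected on $k$ vertices, every vertex of $D^\ast$ reaches $t^\ast$ within $k-1$ steps inside $G[D^\ast]$, so $D^\ast\subseteq B$; as $D^\ast$ distance-$r$ dominates $V(G)$, it is a distance-$r$ red-blue dominating set of size $k$ with red set $R=V(G)$ and blue set $B$. Running the algorithm of \cref{crl:red-blue-ds} on $(G,R,B)$ --- this is the only step where bounded expansion of $\CCC$ enters, via the resulting bounded VC-dimension --- yields $D_0\subseteq B$ with $|D_0|=\Oof(k\log k)$ that distance-$r$ dominates $V(G)$. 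Next I would connect $D_0$ through $t^\ast$: for each $d\in D_0$ we have $\dist_G(d,t^\ast)\le k-1$ because $d\in B$, and $\dist_G(t^\ast,d)\le k-1+r$ because some $d^\ast\in D^\ast$ distance-$r$ dominates $d$ while $t^\ast$ reaches $d^\ast$ within $k-1$ steps in $G[D^\ast]$. Fixing a shortest path $Q_d$ from $d$ to $t^\ast$ and a shortest path $P_d$ from $t^\ast$ to $d$, we output $D\coloneqq\{t^\ast\}\cup D_0\cup\bigcup_{d\in D_0}\bigl(V(P_d)\cup V(Q_d)\bigr)$.

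The verification is routine. $G[D]$ is strongly connected: any vertex of $D$ lies on some $P_d$ or $Q_d$, hence reaches $t^\ast$ (continue along $Q_d$, or along $P_d$ to $d$ and then $Q_d$) and is reached from $t^\ast$ (follow $P_d$, or $P_d$ to $d$ and then $Q_d$), all inside $D$. Domination is immediate since $D\supseteq D_0$. For the size, $|D|\le 1+|D_0|+|D_0|\cdot\bigl((k-1)+(k-1+r)\bigr)=\Oof(k^2\log k)$, using $|D_0|=\Oof(k\log k)$ and that $r$ is a fixed constant. The running time is polynomial: there are $\Oof(|V(G)|^2)$ guesses, each needing one call to the polynomial-time algorithm of \cref{crl:red-blue-ds}, $\Oof(|D_0|)$ breadth-first searches, and the polynomial validity check.

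I expect the crux to be the restriction of the dominating set to the reverse ball $B$ around the guessed vertex $t^\ast$: this is simultaneously what makes the backward connection cheap --- each $d\in D_0$ is automatically within distance $k-1$ of the single vertex $t^\ast$ --- and what could a priori destroy the existence of a small dominating set, were it not that a \emph{strongly connected} optimum $D^\ast$ is forced to lie entirely inside $B$. Everything after that (the forward distance bound, the strong-connectivity bookkeeping, the counting) is straightforward.
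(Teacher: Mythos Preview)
Your proof is correct and matches the paper's approach: guess a vertex of an optimal solution and the value of $k$, restrict the blue set to a ball around the guessed vertex, run the red-blue approximation of \cref{crl:red-blue-ds}, and then connect the output through the guessed vertex via short paths. The only cosmetic difference is that the paper takes the blue set to be the \emph{strongly connected} $k$-neighbourhood of the guessed vertex (so both the forward and backward distance bounds are immediate), whereas you use only the backward $(k{-}1)$-ball and recover the forward bound via the additional observation that $D^\ast$ itself $r$-dominates each $d\in D_0$.
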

\begin{proof}
  Assume we know the size $k$ of an optimal strongly connected
  distance-$r$ dominating set (we will incrementally {test} all values
  $1,\ldots, k$ until we find a valid solution).  We guess one vertex
  $v\in V(G)$ which is a central vertex of an optimal strongly
  connected distance-$r$ dominating set $D$. Observe that $D$ has
  radius at most $k$, and hence, we can restrict our search for an
  approximate solution to the strongly connected $k$-neighborhood of
  $v$, which we color blue. We color the rest of the graph red and
  now search for an approximate distance-$r$ red-blue dominating set
  using \cref{crl:red-blue-ds}. We find a solution $D'$ of size
  $\Oof(k\cdot \log k)$, which may not be connected. {Because we
  restricted the blue vertices to the strong $k$-neighborhood of $v$,
  for each $w\in D'$ there is a closed walk $W_{v,w}$ 
  of length at most $k$ which contains both~$v$ and $w$. Now
  taking the union of all vertices of the walks $W_{v,w}$ for
  $w\in D'$ gives us a dominating set of size at most $k$ times
  larger than $D'$.} Hence,
  we compute an $\Oof(k^2\cdot \log k)$ approximation to a strongly
  connected distance-$r$ dominating set.
\end{proof}

\section{Kernelization on classes of bounded crownless expansion}

A powerful method in parameterized complexity is
\emph{kernelization}. A kernelization algorithm is a polynomial-time
preprocessing algorithm that transforms a given instance into an
equivalent one whose size is bounded by a function of the parameter
only, independently of the overall input size.  We are mostly
interested in kernelization algorithms whose output guarantees are
polynomial in the parameter. As the dominating set problem is
$\mathsf{W}[2]$-hard in general, we cannot expect a kernelization
algorithm in general. Again, the situation is quite different in
sparse graphs. The \textsc{Dominating Set} problem admits linear kernels
on planar graphs~\cite{alber2004polynomial}, bounded genus
graphs~\cite{bodlaender2009meta}, apex-minor free graphs~\cite{fomin2010bidimensionality},
$H$-minor free graphs~\cite{fomin2012linear} and $H$-topological minor free
graphs~\cite{fomin2013linear}. It admits polynomial kernel{s}
on graphs of bounded degeneracy~\cite{philip2012polynomial}.
The \textsc{Distance-$r$ Dominating Set} problem admits a linear kernel
on classes of bounded expansion~\cite{DrangeDFKLPPRVS16}, 
and almost linear kernel on nowhere dense classes of 
graphs~\cite{kreutzer2017polynomial}. We are not aware of 
any kernelization results on directed graphs, though, it is easy
to observe that the result of~\cite{philip2012polynomial}
also holds on digraphs of bounded degeneracy. 

We prove that for every fixed value of $r\geq 1$, the
distance-$r$ dominating set problem admits a polynomial kernel on
every class of bounded crownless expansion. For this, we first prove
in \cref{sec:duality} a polynomial duality theorem between the size of
a largest $r$-scattered set and a smallest distance-$r$ dominating set
in these classes.  We show that this duality theorem does not hold in
classes of directed bounded expansion without the additional
assumption on crown-freeness. In \cref{sec:kernel} we then adapt a
method developed in~\cite{DrangeDFKLPPRVS16} for kernelization on
classes of undirected bounded expansion to the directed case.

\subsection{A polynomial duality theorem}\label{sec:duality}

Denote by $\gamma_r(G)$ the size of a smallest set $X$ such that
$N_r^+(X)=V(G)$. Denote by $\alpha_r(G)$ the size of a largest set $Y$
such that for all $x,y\in Y$ there is no $u\in V(G)$ with
$x,y\in N_r^+(u)$, that is, the largest set which is
$r$-scattered. Clearly, $\gamma_r(G)\geq \alpha_r(G)$, because no
vertex in $G$ can $r$-dominate more than one vertex of $Y$. As shown
by Dvo\v{r}\'ak~\cite{dvovrak2013constant}, in a class $\CCC$ of
undirected bounded expansion, it holds that
$\gamma_r\leq c\cdot \alpha_r(G)$ for some constant $c$ depending only
on $\CCC$. This is not true for directed graphs.

\begin{theorem}
There is a class of directed bounded expansion such that for every constant $c$ we have 
$\gamma_1(G)\geq c$ for infinitely many $G\in \CCC$ and $\alpha_1(G)={2}$ for all $G\in \CCC$. 
\end{theorem}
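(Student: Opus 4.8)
The plan is to exhibit one explicit class $\CCC=\{G_k:k\ge 2\}$ and then do three things: argue that $\CCC$ has bounded expansion, compute $\alpha_1(G_k)=2$, and compute $\gamma_1(G_k)=\lceil k/2\rceil+1$, which is unbounded. The digraph $G_k$ is obtained from the crown $S_k$ — with original vertices $v_1,\dots,v_k$ and subdivision vertices $v_{ij}$ ($1\le i<j\le k$), arcs $(v_{ij},v_i),(v_{ij},v_j)$ — by adding one new vertex $z$ together with an arc $(z,v_{ij})$ for every $i<j$. So $N^+(z)=\{v_{ij}\}$, $N^+(v_{ij})=\{v_i,v_j\}$, $N^+(v_i)=\emptyset$. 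The role of $z$ is exactly to repair the fact that in a plain crown the subdivision vertices form a huge scattered set: in $G_k$ any two subdivision vertices have the common in‑neighbour $z$.

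For $\alpha_1(G_k)=2$ I would use that a set $U$ is $1$-scattered if and only if the closed in-neighbourhoods $N_1^-(u)=\{u\}\cup N^-(u)$, $u\in U$, are pairwise disjoint. Here $N_1^-(z)=\{z\}$, $N_1^-(v_{ij})=\{v_{ij},z\}$, and $N_1^-(v_m)=\{v_m\}\cup\{v_{ij}:m\in\{i,j\}\}$. Checking the four pair types: any two subdivision vertices meet in $z$; any two $v_i,v_j$ meet in $v_{ij}$; $z$ and a subdivision vertex meet in $z$; a subdivision vertex $v_{ij}$ and an original vertex $v_m$ meet iff $m\in\{i,j\}$. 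Hence a $1$-scattered set contains at most one original vertex, at most one subdivision vertex, never $z$ together with a subdivision vertex, and if it contains a subdivision vertex $v_{ij}$ and an original vertex $v_m$ then $m\notin\{i,j\}$ and no third vertex can be adjoined (it would be co-dominated with one of the two). So every $1$-scattered set has size at most $2$, while $\{z,v_1\}$ is scattered; thus $\alpha_1(G_k)=2$.

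For $\gamma_1$ unbounded: no arc enters $z$, so every set $X$ with $N_1^+(X)=V(G_k)$ must contain $z$; then $z$ dominates $\{z\}\cup\{v_{ij}\}$ and only the $v_i$ remain. A vertex dominates $v_m$ only if it is $v_m$ itself or some $v_{ij}$ with $m\in\{i,j\}$, and each $v_{ij}$ dominates exactly $\{v_i,v_j\}$. So the part of $X$ inside $\{v_i\}\cup\{v_{ij}\}$ must cover the index set $\{1,\dots,k\}$ using singletons and pairs, which needs at least $\lceil k/2\rceil$ elements, and a near-perfect matching attains it. Hence $\gamma_1(G_k)=\lceil k/2\rceil+1$, so for every $c$ every $G_k$ with $k\ge 2c$ has $\gamma_1(G_k)\ge c$.

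The main obstacle is showing $\CCC$ has bounded expansion. Here $\nabla_0(G_k)\le 3$ is immediate, since in any subgraph every arc originates at $z$ or at a subdivision vertex, contributing at most $1$ respectively $2$ arcs per vertex. For $r\ge 1$ the subtlety is that adjoining a high-out-degree vertex to a bounded-expansion digraph need not preserve bounded expansion in general; here it does, and I would prove it by the orientation argument that already keeps crowns sparse (cf.\ \cite{kreutzer2012directed}): every directed path in $G_k$ has length at most $2$, and apart from $z$ and the $v_{ij}$ every vertex is a sink, so in any directed model of an $H$ with $H\minor_r G_k$ a branch set that is the tail of a model arc cannot be reached, along an internal directed path, from an in-vertex, which forces the number of model arcs to be linear in the number of branch sets and gives $\nabla_r(G_k)\le c_r$ uniformly in $k$. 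Making this last step fully rigorous is where the real work lies; the rest is direct computation.
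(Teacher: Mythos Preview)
Your construction is exactly the one the paper uses (their apex vertex is called $a$, yours $z$), and your computations of $\alpha_1(G_k)=2$ and $\gamma_1(G_k)=\lceil k/2\rceil+1$ agree with the paper's, only argued in more detail. The paper in fact does not verify that $\{G_k\}$ has bounded expansion at all, so on that point you already go beyond it.

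Your sketch of the bounded-expansion argument, however, is garbled as written. The clause ``a branch set that is the tail of a model arc cannot be reached, along an internal directed path, from an in-vertex'' is the \emph{negation} of condition (3.1) in the definition of a directed model, so it cannot be what forces sparsity. The clean argument uses condition (3.2) instead. Every branch set $B$ has a source $s_B\in B$ that reaches, inside $B$, every out-vertex of $B$; the model arcs leaving $B$ are arcs of $G_k$ whose tail is such an out-vertex. If $s_B=v_m$ then $s_B$ reaches nothing but itself and $v_m$ is a sink, so $B$ has out-degree $0$ in $H$. If $s_B=v_{ij}$ then the vertices reachable from $s_B$ lie in $\{v_{ij},v_i,v_j\}$, whose only outgoing arcs in $G_k$ are $(v_{ij},v_i)$ and $(v_{ij},v_j)$, so $B$ has out-degree at most $2$ in $H$. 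At most one branch set has $s_B=z$, and that one has out-degree at most $|V(H)|-1$. Summing over all branch sets gives $|E(H)|\le 3\,|V(H)|$, hence $\nabla_r(G_k)\le 3$ for every $r\ge 0$. With this correction your proposal is complete and matches the paper's approach.
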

\begin{proof}
  Let $n\in \N$. Let $G_n$ be the digraph with vertex set
  $\{v_1,\ldots, v_n\}\cup \{w_{ij} : 1\leq i<j\leq n\}\cup\{a\}$ and
  arc set
  $\{(w_{ij},v_i), (w_{ij},v_j) : 1\leq i<j\leq n\}\cup \{(a,w_{ij}) :
  1\leq i<j\leq n\}$,
  that is, $G_n$ is obtained from a $1$-subdivision of a clique of
  size $n$ with all subdivision arcs pointing away from the
  subdivision vertex, together with an apex vertex adjacent to all
  subdivision vertices. Then
  $\gamma_1(G_n)=\left\lceil n/2\right\rceil +1$. Every subdivision
  vertex can dominate $2$ principle vertices. The apex vertex
  dominates all subdivision vertices, and this is best possible. We
  have $\alpha_1(G)={2}$, as for all $x,y\in V(G_n){\setminus\{a\}}$ we either have
  $(x,y)\in E(G_n)$, or there is $u\in V(G_n)$ with $x,y\in N(u)$.
\end{proof}

If we however have a class of graphs which has bounded crownless
expansion, then $\gamma_r$ and $\alpha_r$ are polynomially related. We
will apply the algorithm of Dvo\v{r}\'ak~\cite{dvovrak2013constant} to
the digraph $G$ and prove that it finds both a dominating set and a
polynomially smaller independent set. Our
proof is inspired by a recent result of Malliaris and Shelah on stability 
theory~\cite{malliaris2014regularity}, which allows us to apply 
a Ramsey type argument{, but} with polynomial instead of exponential 
bounds. 

Let $T$ be a (rooted) binary tree, where each vertex (except the root)
is marked as a left or right successor of its predecessor. We call $w$
a \emph{left (right) descendant} of $v$ if the first successor on the
unique $v$-$w$ path in $T$ is a left (right) successor.

Fix an enumeration $a_1,\ldots, a_{\ell}$ of a set $A{\subseteq V(G)}$.  The
\emph{$r$-independence tree} of $(a_1,\ldots,a_{\ell})$ is a binary
tree which is constructed recursively as follows. We make $a_1$ the
root of the tree. Assume that $a_1,\ldots, a_i$ have already been
inserted into the tree. In order to insert the next element $a_{i+1}$,
we follow a root-leaf path to find a position for it.  Starting from
the root $a_1$, at each point we are at some node $a_j$ and we {have} to
decide whether we continue along the left or to the right branch at
$a_j$.  If there is an element $u$ such that
$a_j,a_{i+1}\in N_r^+(u)$, we continue along the right branch at
$a_j$, otherwise we follow the left branch. If there is no right
successor (or left successor, respectively), we insert $a_{i+1}$ as a
right (or left child, respectively) of $a_j$.

\begin{lemma}\label{lem:number-of-nodes}
  Let $T$ be a rooted binary tree and let $t\geq 1$ be an integer. 
  Assume that no root-leaf path in~$T$ contains
  a sub-sequence $a_1,\ldots, a_t$ (of pairwise distinct elements)
  such that $a_j$ is a right
  descendant of $a_i$ for all $1\leq i<j\leq t$. If $T$ has height at
  most~$h$, then $T$ has at most $h^{t+1}$ vertices.
\end{lemma}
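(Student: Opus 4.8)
The plan is to bound the number of vertices of $T$ by controlling, for each vertex $v$, the length of the maximal chain of "right ancestors" of $v$, i.e.\ the number of vertices $a$ such that $v$ is a right descendant of $a$. Call this quantity the \emph{right-depth} of $v$. The hypothesis says precisely that every vertex has right-depth at most $t-1$. Combined with the height bound $h$, this will limit how many vertices can appear at each "level'' in a suitable sense, and induction on $t$ will close the argument.

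First I would set up the induction on $t$. For the base case $t=1$: the hypothesis with $t=1$ forbids any nonempty right-chain, so $T$ has no right edges at all, meaning $T$ is a path going only left; since its height is at most $h$ it has at most $h+1\le h^{2}$ vertices. For the inductive step, fix $t\ge 2$ and consider the root $a_1$ of $T$. Split $T$ at the root: let $T_L$ be the left subtree and $T_R$ the right subtree. The left subtree $T_L$ still satisfies the hypothesis for the \emph{same} value $t$ (no root-leaf path of $T_L$ gains a right-chain by forgetting $a_1$), has height at most $h-1$, so by induction on $h$ (or by treating $h$ as the primary induction and $t$ as a parameter) it has at most $(h-1)^{t+1}$ vertices. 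The key point is the right subtree $T_R$: every vertex of $T_R$ that had right-depth $d$ in $T$ now has right-depth $d-1$ in $T_R$ when we drop $a_1$, because $a_1$ contributed one to every right-chain passing into $T_R$. Hence $T_R$ satisfies the hypothesis with parameter $t-1$, has height at most $h-1$, and so by the induction on $t$ has at most $(h-1)^{t}$ vertices.

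Putting this together, I would like to conclude
\[
|V(T)| \;\le\; 1 + |V(T_L)| + |V(T_R)| \;\le\; 1 + (h-1)^{t+1} + (h-1)^{t},
\]
and then check this is at most $h^{t+1}$. For $h\ge 2$ this is a routine estimate: $(h-1)^{t+1}+(h-1)^{t}+1 \le (h-1)^{t}\cdot h + 1 \le h^{t+1}$ since $(h-1)^t\cdot h \le h^t\cdot h - 1$ for $h\ge2,t\ge1$; the degenerate cases $h=0$ (empty tree) and $h=1$ (single vertex) are immediate. To make the double induction clean I would phrase it as induction on $h$, with the statement quantified over all $t\ge 1$ simultaneously, using the $t-1$ case inside the same height bound $h-1$; alternatively a lexicographic induction on $(h,t)$ works.

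The main obstacle — really the only subtle point — is bookkeeping the two parameters correctly: one must be sure that passing to $T_R$ genuinely decreases the forbidden right-chain length from $t$ to $t-1$ (this is where the root's membership in every right-chain through $T_R$ is used), while passing to $T_L$ does \emph{not} decrease $t$ but does decrease $h$; and then verifying that the arithmetic $1+(h-1)^{t+1}+(h-1)^{t}\le h^{t+1}$ survives. Everything else is a straightforward recursion on the tree structure.
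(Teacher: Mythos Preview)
Your induction is correct and gives a genuinely different proof from the paper's. The paper argues by direct encoding: each vertex $u$ of $T$ is uniquely determined by the set $P\subseteq\{0,\dots,h\}$ consisting of the height of $u$ together with the heights of every ancestor $w$ for which $u$ is a right descendant of $w$; one checks that any two such ancestors are right descendants of one another, so the hypothesis caps $|P|$ by (roughly) $t$, and the number of admissible subsets is at most $\sum_{i\le t+1}\binom{h}{i}\le h^{t+1}$. Your argument replaces this encoding by a structural recursion on the root, using that the right subtree inherits the hypothesis with parameter $t-1$ while the left subtree keeps parameter $t$ but loses one in height. Both routes are short; the paper's encoding is one-shot and avoids the two-parameter induction bookkeeping, whereas your recursion makes transparent \emph{why} the parameter drops on the right side and yields cleaner arithmetic.

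One small slip that does not affect correctness: your description of the base case $t=1$ is off by one. A sequence $a_1$ of length~$1$ satisfies the right-descendant condition vacuously, so the hypothesis with $t=1$ actually forces $T$ to be empty; the description ``no right edges'' is the case $t=2$. Since an empty tree certainly has at most $h^{2}$ vertices, the bound still holds and the induction goes through unchanged.
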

\begin{proof}
  We can describe the position of each node $u$ of $T$ by a set
  $P\subseteq \{0,\ldots, h\}$ such that $h(u)\in P$, where $h(u)$
  denotes the height of $u$ in~$T$, and $h(w)\in P$ for all $w$ such
  that $u$ is a right descendant of $w$ in~$T$.  The position of $u$
  in $T$ is then found by following a path from the root which turns
  right at the smallest $|P|-1$ levels which are contained in $P$ and
  which stops at the largest level in $P$ (which is the number
  $h(u)$). It hence suffices to count the number of possible such sets
  $P$. Since by assumption, no path in $T$ contains a sub-sequence
  $a_1,\ldots, a_t$ such that $a_j$ is a right descendant of $i$ for
  all $1\leq i<j\leq t$, every set $P$ describing a position in $T$
  has at least $1$ and at most $t+1$ entries.  We conclude that there
  are at most
\[\sum_{i=1}^{t+1}\binom{h}{i}\leq h^{t+1}\]
elements in $T$. 
\end{proof}

Unfortunately, we cannot completely avoid the usage of Ramsey 
arguments, however, the numbers will be fixed constants depending only on the radius $r$, the density $c$ at depth $r$ and the order of the crown that we assume is excluded at depth $r$. 
\pagebreak

\begin{lemma}[Finite Canonical Ramsey Theorem \cite{erdos1950combinatorial}]
\label{thm:CR}
{For every integer $k$ there exists an integer $n$ with the following property:
	Given any $f:[n]\times [n]\rightarrow \mathbb N$, there exists an subset
        $C\subseteq [n]$ of size $k$ such that either{ of the following holds.}}
	\begin{enumerate}
		\item {For all $a_1,b_1,a_2,b_2\in C$, we have $f(a_1,b_1)=f(a_2,b_2)$.}
		\item {For all $a_1,b_1,a_2,b_2\in C$, we have $f(a_1,b_1)=f(a_2,b_2)\Leftrightarrow a_1=a_2$.}
		\item {For all $a_1,b_1,a_2,b_2\in C$, we have $f(a_1,b_1)=f(a_2,b_2)\Leftrightarrow b_1=b_2$.}
		\item {For all $a_1,b_1,a_2,b_2\in C$, we have $f(a_1,b_1)=f(a_2,b_2)\Leftrightarrow (a_1=a_2\text{ and }b_1=b_2)$.}
		\end{enumerate}
\end{lemma}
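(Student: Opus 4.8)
The statement is the classical Canonical Ramsey Theorem of Erd\H{o}s and Rado \cite{erdos1950combinatorial}, with the two arguments of $f$ read as an increasing pair $a_i<b_i$ (equivalently, as a $2$-element subset of $[n]$), which is the form in which it is invoked. The plan is to deduce it from the ordinary Ramsey theorem for $4$-uniform hypergraphs applied to a bounded auxiliary colouring that records equality patterns, and then to read off the four cases from the homogeneous set that Ramsey produces.

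First I would fix $k$, set $K := k + c_0$ for a suitable absolute constant $c_0$, and define, for each $4$-element set $S = \{s_1<s_2<s_3<s_4\}\subseteq [n]$, its \emph{type} $\tau(S)$ to be the partition of the six pairs of $S$ into $f$-colour classes, i.e.\ the information, for all $e,e'\in\binom{S}{2}$, of whether $f(e)=f(e')$. There are at most $B_6$ (the sixth Bell number) possible types, so $\tau$ is a colouring of $\binom{[n]}{4}$ with a bounded number of colours. Choosing $n \ge R^{(4)}(B_6;K)$, Ramsey's theorem for $4$-uniform hypergraphs yields a set $H\subseteq [n]$ with $|H| = K$ all of whose $4$-subsets receive the same type $\tau_0$.

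The remaining task is to show that $\tau_0$, being realised on every $4$-subset of $H$, forces $f$ on $\binom{H}{2}$ (after deleting a bounded number of exceptional elements of $H$) to be constant, or to depend only on the smaller endpoint, or only on the larger endpoint, or to be injective --- precisely cases 1--4. The point is that on $H$ whether $f(e)=f(e')$ depends only on the order type of $e\cup e'$, a configuration on at most four points, so the whole behaviour of $f$ on $H$ is pinned down by a constant number of local comparisons: whether two disjoint increasing pairs, or two pairs sharing their minimum, or two pairs sharing their maximum, or a nested pair, or a crossing pair, can receive the same colour. I would then enumerate the consistent combinations of these local choices, using transitivity of equality together with homogeneity to eliminate the mixed ones, and conclude that one of the four canonical patterns holds on a subset of $H$ of size $\ge K - c_0 = k$.

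The only step that needs real care is this last enumeration: ruling out every ``mixed'' homogeneous equality pattern and identifying the $O(1)$ vertices that must sometimes be removed. This is the standard, slightly tedious core of the Erd\H{o}s--Rado argument, and since the lemma is needed later only as a black box with constants depending on $r$, the excluded crown order, and the density bound at depth $r$, in the write-up I would simply cite \cite{erdos1950combinatorial} for it instead of reproducing the case analysis.
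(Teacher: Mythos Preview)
The paper does not prove this lemma at all: it is stated with a citation to Erd\H{o}s and Rado \cite{erdos1950combinatorial} and then used as a black box in the proof of the subsequent lemma on crown minors. Your sketch is the standard derivation of the canonical Ramsey theorem from the ordinary $4$-uniform Ramsey theorem, and it is correct; your closing remark that in the write-up you would simply cite \cite{erdos1950combinatorial} rather than reproduce the case analysis is exactly what the paper does.
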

\begin{lemma}\label{lem:alternation-rank}
  {For {all} integers $r,c,K$ there exists an integer $N$
    such that the following property holds.  Let $G$ be a digraph with maximum
    out-degree at most $c$ and let $S, T$ be subsets of vertices of $G$, such
    that $|T|\geq N$ and for each $t,t'\in T$ there exist a vertex
    $s=s(t,t')\in S$, a directed path $P_{s,t}$ of length at most $r$ from $s$
    to~$t$ and a directed path $P_{s,t'}$ of length at most $r$ from $s$ to
    $t'$. Then $G$ contains a crown of order $ K$ as a depth-$r$
    minor.}
\end{lemma}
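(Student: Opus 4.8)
The plan is to carve a depth-$r$ model of the crown $S_K$ out of the witnessing paths once $|T|$ is large enough, where the threshold $N$ will be an iterated exponential in $r$, $c$, $K$. Throughout set $b\coloneqq 1+c+\dots+c^r$, so that $|N^+_i(v)|\le b$ for every vertex $v$ and every $i\le r$; this is the only use of the out-degree bound, and it is essential that we have no control over in-degrees.

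\textbf{Cleaning up $T$.} Consider the auxiliary digraph on $T$ with an arc $t\to t'$ whenever $t'\in N^+_r(t)$. It has out-degree at most $b$, so its underlying undirected graph has average degree at most $2b$ and hence contains an independent set $T'$ with $|T'|\ge |T|/(2b+1)$. Replacing $T$ by $T'$, we may assume no vertex of $T$ is $r$-reachable from another vertex of $T$; then for any $t,t'\in T$ the witness $s(t,t')$ lies outside $T$, the paths $P_{s,t},P_{s,t'}$ may be taken to be shortest paths (hence simple, of length between $1$ and $r$), and no internal vertex of such a path lies in $T$. Call the subdigraph $P_{s,t}\cup P_{s,t'}$, together with the three marked vertices $s(t,t'),t,t'$, the \emph{configuration} of the pair $\{t,t'\}$; it has at most $2r+1$ vertices, so there are only finitely many (a function of $r$) isomorphism types of configurations. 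By Ramsey's theorem applied to this finite colouring of the pairs of $T$, we pass to a still-large subset $T_1$ on which all configurations have one fixed type $\tau$ (so $\tau$ fixes the two path lengths and the length of the common prefix of the two paths).

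\textbf{Distinct witnesses.} Fix a linear order $t_1<\dots<t_m$ of $T_1$ and apply the Finite Canonical Ramsey Theorem (\cref{thm:CR}) to the colouring $f(i,j)\coloneqq s(t_i,t_j)$ for $i<j$, obtaining a large $C\subseteq T_1$ in one of the four canonical regimes. In regime~(1) a single vertex $r$-dominates all of $C$; in regime~(2) the common value $f(1,2)=f(1,3)=\dots$ $r$-dominates all of $C$; in regime~(3) the common value $f(1,m')=f(2,m')=\dots$ does; in each case $|C|\le b+1$, a contradiction once $|C|>b+1$. Hence we are in regime~(4): the witnesses $s_{ij}\coloneqq s(t_i,t_j)$ are pairwise distinct, and none of them lies in $T$.

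\textbf{Extracting the crown.} I would now build a depth-$r$ model of $S_K$ greedily over $K$ rounds, maintaining a large ``active'' set $A\subseteq C$ and a bounded ``used'' set $U$. In round $\ell$: (i) pick the next principal vertex $t^\ast\in A$ subject to $t^\ast\notin N^+_r(U)$ and to a further bounded list of forbidden choices ensuring that the configurations joining $t^\ast$ to the previously chosen principal vertices are disjoint from $U$ and from one another — only $\Oof(|U|\cdot b)=\Oof(1)$ choices are excluded; (ii) run the sunflower lemma on the bounded-size vertex sets of the configurations $\{C_{t^\ast,t}:t\in A\}$ (each with its marked terminals removed), refining $A$ to a large $A'$ on which they form a sunflower with core $Y$. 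The crucial observation is that $Y$ can only contain vertices on the $t^\ast$-branch: a core vertex on the common prefix would be $r$-reachable to both $t^\ast$ and every $t\in A'$, hence would $r$-dominate more than $b$ terminals, and a core vertex on a $t$-branch would $r$-dominate all of $A'$; so every vertex of $Y$ $r$-reaches $t^\ast$, and after at most $r$ further pigeonhole steps (each costing a factor $c$ from $A'$) we may take $Y$ to be a subpath ending at the in-neighbour of $t^\ast$. Set the principal branch set $\delta(v_{t^\ast})\coloneqq Y\cup\{t^\ast\}$ (a directed path into $t^\ast$, a legal branch set since principal vertices of $S_K$ have empty out-side), take the sunflower petals of the configurations between $t^\ast$ and the earlier principal vertices as the corresponding subdivision branch sets (after deleting the suffixes already absorbed into the two incident principal branch sets), add all these vertices to $U$, and set $A\coloneqq A'$. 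After $K$ rounds the chosen principal vertices, the distinct witnesses, and the petals assemble into a directed model of $S_K$ of depth $r$, i.e.\ $S_K\minor_r G$. Choosing $N$ large enough that $A$ survives the $K$ sunflower (and pigeonhole) rounds — an iterated exponential in $r,c,K$ — completes the proof.

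\textbf{Main obstacle.} The delicate part is step~(iii): making all witnesses and all path interiors pairwise disjoint, verifying conditions (1)--(3.3) of a directed model for the resulting branch sets, and keeping $|U|$ bounded across all $K$ rounds so that only $\Oof(1)$ active terminals are ever forbidden and the sunflower steps can be iterated. The asymmetry between out-degree (bounded) and in-degree (unbounded) is exactly what makes this nontrivial: a single vertex may lie on the paths coming from unboundedly many witnesses and hence cannot simply be avoided; the way around it is the absorption of repeated ``core'' vertices into principal branch sets, justified by the observation above that such cores always point at a single terminal.
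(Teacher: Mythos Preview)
Your cleanup of $T$, the Ramsey step fixing the configuration type, and the single application of \cref{thm:CR} to force pairwise distinct witnesses are all sound and match the paper's opening. The gap is in the crown extraction. There is first a bookkeeping mismatch: in round $\ell$ your sunflower is over $\{C_{t^\ast,t}:t\in A\}$ with $A$ the current active set, but the earlier principals $t^{\ell'}$ have already left $A$, so ``the sunflower petals of the configurations between $t^\ast$ and the earlier principal vertices'' are not petals of this sunflower. Reading it the other way (use the round-$\ell'$ petal of $C_{t^{\ell'},t^\ell}$, which is well-defined since $t^\ell\in A_{\ell'}$) gives no disjointness between petals from \emph{different} rounds. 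More seriously, the bounded-exclusion claim fails: to keep a used vertex $u\in U$ off the configuration $C_{t^\ast,t^{\ell'}}$ you exclude $t^\ast\in N^+_r(u)$, which handles the case that $u$ lies on the $t^\ast$-branch; but if $u$ lies on the $t^{\ell'}$-branch the constraint is on the path $P_{s(t^\ast,t^{\ell'}),\,t^{\ell'}}$, which depends on $t^\ast$ only through the witness. With unbounded in-degree, unboundedly many witnesses can route through $u$ on their way to the fixed $t^{\ell'}$, so no finite list of forbidden $t^\ast$ suffices --- and nothing forces such a $u$ into $Y_{\ell'}$, so absorption does not help.

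The missing idea, and the paper's route, is to apply \cref{thm:CR} not once to the witness but iteratively to \emph{every coordinate} $\lambda_k(i,j)$ of the concatenated path vector from $t_i$ through $s(t_i,t_j)$ to $t_j$. The out-degree bound kills the constant regime at each coordinate, and kills the ``function of $j$'' regime on the $t_i$-side (a vertex there $r$-reaches $t_i$, so it cannot be the same for all $i$), and symmetrically on the $t_j$-side. Thus every coordinate is either an injective function of $i$, of $j$, or of the pair $(i,j)$. Under the convention that shortest paths to a common terminal merge after they first meet, this forces a monotone split of each path into three segments: one depending only on $i$ (contracted into the branch set of $t_i$), one only on $j$ (branch set of $t_j$), and a middle segment injective in $(i,j)$ --- whose vertex sets are therefore \emph{automatically} pairwise disjoint across all pairs and serve as the subdivision branch sets, with no greedy rounds and no cross-round bookkeeping.
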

\begin{proof}
  {First note that we can assume that the paths from $s(t,t')$ to
    $t$ and $t'$ are non-intersecting shortest paths, that $s(t,t')$ are chosen
    in such a way that the sum of the length of the paths to $t$ and $t'$ is
    minimum, and that if the path from $s(t,t')$ to $t$ intersects the path from
    $s(t,t'')$ to $t$, then they share the same subpath after they first meet.}

  {We order $T$ as $t_1,\dots,t_n$. For every $1\leq i<j\leq n$
    we denote by $\lambda(i,j)$ the vector formed by the internal vertices of
    the path from $s(t_i,t_j)$ to $t_i$ in reverse order, followed by
    $s(t_i,t_j)$, then by the internal vertices of the path from $s(t_i,t_j)$ to
    $t_j$.  By a standard Ramsey argument, if $T$ is sufficiently large we can
    extract a large subset $T_1\subseteq T$ such that for all $ t_i,t_j \in T_1$
    with $i<j$ the path from $s(t_i,t_j)$ to $t_i$ and the path from
    $s(t_i,t_j)$ to $t_j$ have {the same} length{s} $\ell_1$, and
    $\ell_2$, respectively, independently of the choice of $i<j$. Denote by
    $\lambda_k(i,j)$, $1\leq k\leq \ell_1+\ell_2$, the $k$th component of the
    vector $\lambda(i,j)$. }

  {By applying iteratively for $k=1,2,\ldots$, \Cref{thm:CR}
    there is a large subset $T_2\subseteq T_1$ such that for every~$k$, on $T_2$
    either all vertices $\lambda_k(i,j)$ are the same or all distinct
    for all pairs $(i,j)$, or $\lambda_k(i,j)$ is an injective function of $i$
    or an injective function of $j$.  First note that no function
    $\lambda_k(i,j)$ of $(i,j)$ can be constant {if $\abs{T_2}>c^r$} as every vertex
    can reach at most $c^r$ vertices in
    $T_2$ in at most $r$ steps. Similarly, for $k\le \ell_1$, the function $\lambda_k(i,j)$ cannot
    be an injective function of $j$ and, for $k\ge \ell_1$, the function $\lambda_k(i,j)$
    cannot be an injective function of $i$. In particular, 
    $(i,j)\neq (i',j')$ implies
    $s(t_i,t_j)\neq s(t_{i'},t_{j'})$.}

    {Moreover, as we assume that if the path
    from $s(t,t')$ to $t$ intersects the path from $s(t,t'')$ to $t$, 
    then they share the same subpath after they first meet, the 
    general situation will be as follows: from the first
    coordinate to some coordinate $a<\ell_1$ the vertices are given by an
    injective function of $i$, then until some coordinate $b>\ell_1$ the
    vertices are different on all $\lambda_k(i,j)$, then after $b$ the
    vertices are given by injective functions of $j$.  Let~$t_i^-\coloneqq \lambda_a(i,j)$
    (resp. $t_i^+\coloneqq \lambda_b(i,j)$) be the element at 
    coordinate~$a$ (resp. $b$) of the vectors~$\lambda$ corresponding to $t_i$.}

  {Now observe that the assumption that paths are shortest paths
    imply that the paths from $s(t_i,t_j)$ to $t_i^-$ (resp. from $s(t_i,t_j)$
    to $t_j^+$) are vertex disjoint. However these two families may intersect,
    but each path of a family intersects at most $r$ paths from the other. By a
    standard Ramsey argument we can assume (again by considering smaller $T_3$)
    that they do not intersect. Also for $i\neq j$, the path from $t_i^-$ to $t_i$ cannot
    intersect the path from $t_j^+$ to $t_j$ as their intersection would
    contradict the minimality assumption on $s(t_i,t_j)$. Furthermore, if the
    path from $t_i^-$ to $t_i$ intersects the path from $t_i^+$ to $t_i$, then
    they share their subpath after they first meet.}

  {Contracting the paths from $t_i^-$ to $t_i$ and the paths from
    $t_i^+$ to $t_i$, as well as all remaining paths from $s(t_i,t_j)$ to
    $t_i^-$ and $t_j^+$ (excluding these vertices) we get the required crown
    shallow minor.}
\end{proof}

\begin{theorem}\label{thm:duality}
  Let $G$ be a digraph with $\dwcol{r}(G)\leq c$ and
  $S_q\not\minor_r \hspace{-2.2pt}G$. {Then there exists
  $N=N(c,q,r)$ such that 
  $\gamma_r(G)\in \Oof(\alpha_r(G)^N)$.}
\end{theorem}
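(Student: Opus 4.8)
The plan is to follow Dvořák's approach to $r$-domination in bounded expansion classes, but to replace the usual exponential Ramsey extraction step by the polynomial bound coming from \Cref{lem:number-of-nodes} together with the Finite Canonical Ramsey Theorem (\Cref{thm:CR}) and the crown-exclusion lemma (\Cref{lem:alternation-rank}). First I would recall Dvořák's greedy/LP-type procedure: given $G$ with $\dwcol{r}(G) \le c$, one can in polynomial time compute a set $D$ that $r$-dominates $V(G)$ together with an $r$-scattered set $Y \subseteq V(G)$ such that $|D| \le g(c) \cdot |Y|$ for some function $g$ — this uses only the bounded weak $r$-coloring number, which $G$ has by hypothesis, so it transfers verbatim to the directed setting once distances and neighborhoods are read in the $N^+_r$-direction. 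Thus $\gamma_r(G) \le |D| \le g(c)\cdot |Y|$, and it remains to bound $|Y|$ polynomially in $\alpha_r(G)$.

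The core of the argument is therefore: if $G$ has $\dwcol{r}(G)\le c$ and excludes $S_q$ as a depth-$r$ minor, then any $r$-scattered set $Y$ produced by the algorithm — or more robustly, \emph{any} set at all — has size bounded polynomially in $\alpha_r(G)$. Here is where I would use the $r$-independence tree construction defined just before \Cref{lem:number-of-nodes}. Fix an arbitrary enumeration $a_1,\dots,a_\ell$ of a set $A \subseteq V(G)$ and build its $r$-independence tree $T$. By construction, a root-leaf path on which every node is a \emph{left} descendant of all its ancestors corresponds exactly to an $r$-scattered subset of $A$; hence the height of $T$ is at most $\alpha_r(G)+1$ (a long all-left chain would be an oversized scattered set). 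Now I apply \Cref{lem:number-of-nodes} with $h = \alpha_r(G)+1$: it suffices to show that no root-leaf path contains a sequence $a_{i_1},\dots,a_{i_t}$ with each $a_{i_j}$ a \emph{right} descendant of every earlier one, for some fixed $t = t(c,q,r)$. Such a right-chain of length $t$ means: for every pair among $t_1,\dots,t_t$ there is a common vertex $r$-reaching both (that is what "$a_j$ is a right descendant of $a_i$" records, via the insertion rule "if some $u$ has $a_j, a_{i+1}\in N^+_r(u)$ go right"). But $\dwcol{r}(G)\le c$ bounds the grad at depth $r$ (via \Cref{thm:wcol-be} and \Cref{lem:densityminors}), hence bounds the maximum out-degree relevant to \Cref{lem:alternation-rank} after a cleanup step; so by \Cref{lem:alternation-rank}, a right-chain of length $N(r,c,q)$ forces a crown of order $q$ as a depth-$r$ minor — contradicting $S_q\not\minor_r G$. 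Thus $t := N(r,c,q)-1$ works, and \Cref{lem:number-of-nodes} gives $\ell = |A| \le (\alpha_r(G)+1)^{t+1}$.

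Putting the pieces together: apply the previous paragraph to $A = Y$ (the scattered set returned by the domination algorithm), getting $|Y| \le (\alpha_r(G)+1)^{N(r,c,q)}$, and then $\gamma_r(G) \le g(c)\cdot |Y| \le g(c)\cdot(\alpha_r(G)+1)^{N(r,c,q)} = \Oof(\alpha_r(G)^{N})$ for a suitable $N = N(c,q,r)$, which is exactly the claimed bound.

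I expect two points to require the most care. First, the reduction in \Cref{lem:alternation-rank} assumes \emph{bounded maximum out-degree}, whereas we only have bounded $\dwcol{r}$ (equivalently bounded $\nabla_r$); I would handle this by a standard preprocessing argument — passing to a bounded-degree "core" via transitive-closure/reachability bookkeeping at depth $r$, or directly re-running the Ramsey extraction using that each vertex $r$-reaches at most $\nabla$-many vertices, as is in fact already done inside the proof of \Cref{lem:alternation-rank} ("as every vertex can reach at most $c^r$ vertices"). Second, one must check that the all-left paths in the $r$-independence tree really are $r$-scattered: the insertion rule goes left at $a_j$ precisely when there is \emph{no} common vertex $r$-reaching both $a_j$ and the new element, so a full left-chain $a_{i_1} \prec \dots \prec a_{i_m}$ guarantees pairwise-scatteredness among consecutive pairs, and a short additional argument (or strengthening the tree's invariant to "left descendant" rather than "left child") upgrades this to genuine $r$-scatteredness of the whole chain, giving the height bound $\alpha_r(G)+1$. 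Everything else is bookkeeping.
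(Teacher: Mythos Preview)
Your high-level plan matches the paper's, but there is a real gap in how you connect the pieces. The first paragraph asserts that Dvořák's procedure yields an $r$-\emph{scattered} set $Y$ with $|D|\le g(c)\,|Y|$ and that this ``transfers verbatim to the directed setting''. If that were true you would be done immediately, since then $|Y|\le\alpha_r(G)$ by definition and $\gamma_r(G)\le g(c)\,\alpha_r(G)$ --- a linear bound. But the paper exhibits, just before this theorem, a bounded-expansion class with $\alpha_1=2$ and $\gamma_1$ unbounded; so the set $A$ that the greedy procedure along a $\dwcol{r}$-order produces is in general \emph{not} $r$-scattered in the directed case. Correspondingly, your second paragraph overcorrects: the $r$-independence tree argument cannot bound $|A|$ polynomially in $\alpha_r(G)$ for ``any set at all'' (take $A=V(G)$ in a star with all arcs out of the center: $\alpha_1=1$, crowns of order $3$ are excluded as depth-$r$ minors, $\dwcol{r}\le 2$, yet $|A|$ is arbitrary).

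What makes the argument work --- and what your ``cleanup step'' only gestures at --- is a property specific to the set $A$ built by the algorithm: because each $a\in A$ was the $L$-smallest undominated vertex when chosen and all of $\dWReach{2r}[G,L,a]$ was immediately added to the dominating set, one proves $|N_r^+(u)\cap A|\le c$ for every $u\in V(G)$. This is exactly the bounded-reach-into-$T$ hypothesis that the proof of \cref{lem:alternation-rank} actually uses, and without it you cannot forbid long right-chains in the tree. (A smaller slip: the height of the $r$-independence tree is not $\alpha_r(G)+1$; an all-left chain bounds only the number of left turns on a root-leaf path, not the height. You need the left-chain bound and the right-chain bound $t$ \emph{together} to get height at most $\alpha_r(G)+t$, after which \cref{lem:number-of-nodes} yields $|A|\le(\alpha_r(G)+t)^{t+1}$.)
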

\begin{proof}
  Fix an order $L$ witnessing that $\dwcol{r}(G)\leq c$.  We compute a
  distance-$r$ dominating set $D$ as follows. Initialize
  $D\coloneqq \emptyset$, $A\coloneqq \emptyset$ and
  $N\coloneqq V(G)$. While there is a vertex $v\in N$, the set of
  non-dominated vertices, pick the smallest such vertex $v$ with
  respect to $L$. Add $v$ to $A$ and $\dWReach{2r}[G,L,v]$ to
  $D$. Mark all newly dominated vertices, that is, remove
  $N_r^+[\dWReach{2r}[G,L,v]]$ from $N$. If $N=\emptyset$, return
  $D$. Clearly,~$D$ is a distance-$r$ dominating set of $G$. We prove
  that we find a large $r$-independent subset of $A$.

  Construct the undirected graph $H$ with vertex set $A$ such that two
  vertices $a,b\in A$ are connected in~$H$ if there is $u\in V(G)$
  such that $a,b\in N_r^+(u)$. An independent set in~$H$ corresponds
  to an $r$-scattered subset of $A$ in $G$.

  We claim that every vertex $u\in V(G)$ satisfies $|N_r^+(u)\cap A|\leq c$.
  Fix $u\in V(G)$. Assume towards a contradiction that $|N_r^+(u)\cap A|>
  c$. For each $a\in N_r^+(u)\cap A$ fix a path $P_{ua}$ of length at most $r$
  from $u$ to $a$. For each path $P_{ua}$, denote by $m_{ua}$ its minimal
  element with respect to $L$. Since $\dwcol{r}(G)\leq c$, we have
  $|\{m_{ua} : N_r^+(u)\cap A\}|\leq c$. Since we have more than $c$ paths
  $P_{ua}$, there must be two paths $P_{ua_1}, P_{ua_2}$, $a_1\neq a_2$, which
  have the same element $m$ as their minimal element. {Without
    loss of generality assume that $a_1<a_2$.}  Since~$m$ is the smallest
  vertex on the path $P_{ua}$, the subpath of $P_{ua_1}$ between $m$ and $a_1$
  certifies that $m$ is weakly $r$-reachable from $a_1$. Hence, when $a_1$ was
  added to $A$, the element $m$ was added to the set~$D$. Now, the subpath of
  $P_{ua_2}$ between $m$ and $a_2$ shows that $a_2$ is at distance at most $r$
  from $m$, and hence $a_2$ is marked as dominated at this point. This again
  proves $a_2\not\in A$, a contradiction.

  We now build the $r$-independence tree $T$ of $a_1,\ldots, a_\ell$ (the
  enumeration of $A$ with respect to $L$).  {Using
    \Cref{lem:alternation-rank}, we conclude that there is $N'=(c,r,q)$ such
    that $T$ does not contain a path with $s=N'$ right descendants. Let
    $N=N'+1$.}

  Hence, by \cref{lem:number-of-nodes}, if we have
  {$|A|> (m+N)^{N}$}, then we find a sequence of length $m$ with
  all left descendants. This set is an $r$-scattered set,
  which proves the theorem.
\end{proof}

Clearly, the $r$-independence tree of a sequence of vertices can be
computed in polynomial time, which gives us the following corollary.

\begin{corollary}\label{crl:dvorak}
  Let $\CCC$ be a class of digraphs which has bounded crownless
  expansion.  Then for every $r\in \N$, there is a polynomial time
  algorithm which computes a distance-$r$ dominating set $D$ with
  $|D|\leq p(\gamma_r(G))$ for some polynomial $p$.
\end{corollary}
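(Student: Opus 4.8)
The plan is to observe that the corollary is essentially just the algorithmic repackaging of \Cref{thm:duality}, together with the standard fact that all the data it needs can be produced in polynomial time. First I would unpack the hypothesis: since $\CCC$ has bounded crownless expansion there is a function $f$ with $\nabla_{r'}(G)\le f(r')$ and $S_{f(r')}\not\minor_{r'} G$ for all $G\in\CCC$ and all $r'$; by \Cref{thm:wcol-be} this gives constants $c=c(\CCC,r)$ with $\dwcol{r}(G)\le c$ and $c'=c'(\CCC,r)$ with $\dwcol{2r}(G)\le c'$, and we put $q\coloneqq f(r)$ so that $S_q\not\minor_r G$. Hence every $G\in\CCC$ meets the hypotheses of \Cref{thm:duality} with these fixed parameters, and in particular the constant $N=N(c,q,r)$ there depends only on $\CCC$ and $r$.

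Next I would make the algorithm explicit. Compute a linear order $L$ of $V(G)$ together with the sets $\dWReach{2r}[G,L,v]$ realising, up to a constant factor, that $\dwcol{2r}(G)\le c'$; on a fixed bounded-expansion class this can be done in polynomial time by the usual algorithms for generalized colouring numbers. Then run the greedy procedure from the proof of \Cref{thm:duality}: while some vertex is still undominated, take the $L$-smallest such vertex $v$, put it into $A$, add $\dWReach{2r}[G,L,v]$ to $D$, and delete $N_r^+[\dWReach{2r}[G,L,v]]$ from the set of undominated vertices. This loop performs at most $|V(G)|$ iterations, each polynomial, so $D$ and $A$ are computed in polynomial time; by construction $D$ is a distance-$r$ dominating set and $|D|\le\sum_{v\in A}|\dWReach{2r}[G,L,v]|\le c'\cdot|A|$.

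Finally I would bound $|A|$ in terms of $\gamma_r(G)$, which is exactly what \Cref{thm:duality} already proves: forming the $r$-independence tree of the $L$-enumeration of $A$ (polynomial-time computable, though used only in the analysis) and applying \Cref{lem:alternation-rank} and \Cref{lem:number-of-nodes} shows that $A$ contains an $r$-scattered set of size $m$ whenever $|A|>(m+N)^{N}$. Taking the contrapositive gives $|A|=\Oof(\alpha_r(G)^{N})$, and since trivially $\alpha_r(G)\le\gamma_r(G)$ we conclude $|D|\le c'|A|\le p(\gamma_r(G))$ for the polynomial $p(x)=\Oof(x^{N})$, whose degree depends only on $\CCC$ and $r$. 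I do not expect a genuine obstacle here; the only point needing care is the algorithmic bookkeeping — certifying that $L$ and the radius-$2r$ weak-reachability sets are obtained in polynomial time on the fixed class, and being explicit that the $r$-independence tree is never traversed by the algorithm, so that no super-polynomial search hides inside it. All the combinatorial substance is already contained in \Cref{thm:duality}.
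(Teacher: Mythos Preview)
Your proposal is correct and follows essentially the same approach as the paper: the corollary is just the algorithmic reading of \Cref{thm:duality}, and the paper's own justification is the one-line remark that the $r$-independence tree can be computed in polynomial time (with the computation of a good order $L$ deferred to \Cref{thm:compute-wcol}). You have simply unpacked these ingredients more explicitly than the paper does, including the bookkeeping for $L$ and the $\dWReach{2r}$ sets, so there is nothing to add.
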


Furthermore, we will need the duality theorem for subsets of vertices.
The following theorem is proved exactly as \cref{thm:duality},
starting the algorithm with $N=X$.

\begin{theorem}\label{thm:duality-setwise}
  Let $G$ be a digraph with $\dwcol{r}(G)\leq c$ and
  $S_q\not\minor_r G$.  Let $X\subseteq V(G)$. Denote by
  $\gamma_r(G,X)$ the size of a smallest set $D\subseteq V(G)$ with
  $X\subseteq N_r^+(D)$ and by $\alpha_r(G,X)$ the size of a largest
  set $Y\subseteq X$ such that for all $y\neq y'\in Y$ there is no
  $u\in V(G)$ with $y,y'\in N_r^+(u)$. Then {there is $N=N(c,r,q)$ such that 
  $\gamma_r(G,X)\in \Oof(\alpha_r(G,X)^N)$.}
  Furthermore, there is a polynomial time algorithm which on input
  $G$, $X\subseteq V(G)$ and $k\geq 1$ either computes a distance-$r$
  dominating set $D$ of $X$ with $|D|\leq p(k)$ for some polynomial
  $p$ {(of degree $N$)}, or outputs that no such set of size $k$ exists, together with
  an $r$-scattered set of size $k+1$.
\end{theorem}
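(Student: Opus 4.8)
The plan is to prove \Cref{thm:duality-setwise} by revisiting the proof of \Cref{thm:duality} and observing that every step goes through verbatim when the algorithm only tries to dominate the vertices of $X$. First I would run exactly the greedy procedure of \Cref{thm:duality}, but initialize the set of ``not-yet-dominated'' vertices to be $X$ rather than $V(G)$: as long as some $v\in X$ is not yet distance-$r$ dominated, pick the $L$-smallest such $v$, add it to $A$, add $\dWReach{2r}[G,L,v]$ to $D$, and remove from the active set all vertices of $X$ that lie in $N_r^+[\dWReach{2r}[G,L,v]]$. At termination $D$ satisfies $X\subseteq N_r^+(D)$, and $A\subseteq X$. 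The two structural facts used in \Cref{thm:duality} still hold: (i) every $u\in V(G)$ has $|N_r^+(u)\cap A|\le c$, since the argument there only used $\dwcol{r}(G)\le c$ together with the fact that once the $L$-minimal vertex $m$ on a witnessing path is placed in $D$, any later candidate $a_2\in N_r^+(u)\cap A$ would already be dominated — this does not depend on which set we are trying to dominate; and (ii) the $r$-independence tree $T$ of $a_1,\dots,a_\ell$ has height at most $c$, again for the same reason. Hence \Cref{lem:alternation-rank} applies to $T$ to give a bound $N'=N'(c,r,q)$ on the length of any all-right-descendant path, and \Cref{lem:number-of-nodes} with $N=N'+1$ shows that $|A|>(m+N)^N$ forces an $r$-scattered subset of $A\subseteq X$ of size $m$. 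Since $\alpha_r(G,X)\ge$ (the size of any $r$-scattered subset of $X$) and $\gamma_r(G,X)\le |D|\le (2c)^{2r}\cdot|A|$ (each $\dWReach{2r}$ set has size at most $\wcol$-type bound; really $|D|\le \dwcol{2r}(G)\cdot|A|\le (2r\cdot c)^{O(1)}|A|$, a constant times $|A|$), we get $\gamma_r(G,X)\in\Oof(\alpha_r(G,X)^N)$, which is the claimed duality.

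For the algorithmic statement, given $G$, $X$ and a target $k$, I would run the greedy procedure above and monitor $|A|$. If at some point $|A|$ exceeds $(k+1+N)^N$, then by the combinatorial argument just sketched we can \emph{extract} — in polynomial time, since the $r$-independence tree is built explicitly and a longest all-left-descendant path is read off directly — an $r$-scattered subset of $X$ of size $k+1$, and we output ``no distance-$r$ dominating set of $X$ of size $k$ exists'' together with this scattered set (correctness: an $r$-scattered set of size $k+1$ is a lower bound $\alpha_r(G,X)\ge k+1$, hence $\gamma_r(G,X)\ge k+1>k$). Otherwise the procedure terminates with $|A|\le (k+1+N)^N$ and returns a dominating set $D$ of $X$ with $|D|\le \dwcol{2r}(G)\cdot|A|\le c'\cdot(k+1+N)^N =: p(k)$, a polynomial of degree $N$ in $k$ with constants depending only on $c,r,q$. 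All steps — picking the $L$-minimal active vertex, computing weakly-reachable sets of bounded radius, updating the active set, maintaining the $r$-independence tree — are clearly polynomial-time, as already noted for \Cref{crl:dvorak}.

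The only genuinely load-bearing point, and the one I would write out most carefully, is that \Cref{lem:alternation-rank} really does apply to the $r$-independence tree $T$ built here. Concretely: suppose $a_{i_1},\dots,a_{i_t}$ is a path in $T$ with each $a_{i_{j+1}}$ a right descendant of $a_{i_j}$; by construction of the $r$-independence tree this means that for every pair $j<j'$ there is a vertex $s_{jj'}\in V(G)$ with $a_{i_j},a_{i_{j'}}\in N_r^+(s_{jj'})$. Setting $S:=V(G)$ (or, if one wants $S$ to be a genuine parameter, the set of such witnesses) and $T_{\mathrm{Ramsey}}:=\{a_{i_1},\dots,a_{i_t}\}$, the hypothesis of \Cref{lem:alternation-rank} is met — every pair of elements of $T_{\mathrm{Ramsey}}$ has a common in-reacher within distance $r$ — and since $G$ need not have bounded out-degree, one first passes to the standard reduction used inside the proof of \Cref{lem:alternation-rank} (we may assume, after reorienting attention to the relevant paths, that only $\le\dwcol{r}(G)\le c$ many distinct $L$-minimal path-vertices occur, which plays the role of the out-degree bound $c$ there, exactly as in the proof of fact (i) above). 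Then \Cref{lem:alternation-rank} yields: if $t\ge N'(c,r,q)$, a crown $S_q$ appears as a depth-$r$ minor of $G$, contradicting $S_q\not\minor_r G$. Hence $t<N'(c,r,q)$, i.e.\ no all-right-descendant path of length $N'$ exists, and we take $N:=N'+1$. I expect this adaptation of the out-degree hypothesis to be the subtlest part; once it is in place, the rest is a routine transcription of \Cref{thm:duality} with $X$ in place of $V(G)$.
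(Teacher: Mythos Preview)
Your proposal is correct and follows exactly the paper's approach: the paper's own proof of \Cref{thm:duality-setwise} is literally the single sentence ``proved exactly as \Cref{thm:duality}, starting the algorithm with $N=X$,'' which is precisely what you carry out. One small slip worth fixing: your claim (ii) that the $r$-independence tree has height at most $c$ is false and is not a fact used in \Cref{thm:duality}; the height bound required to invoke \Cref{lem:number-of-nodes} comes instead from the dichotomy that either some root--leaf path has length exceeding $m+N$ (and then, since at most $N'$ of its steps can be right by \Cref{lem:alternation-rank}, it contains an all-left-descendant subsequence of length~$m$, which is $r$-scattered), or the height is at most $m+N$ --- but you arrive at the correct conclusion $|A|>(m+N)^N \Rightarrow$ scattered set of size $m$ regardless, and your identification of the out-degree hypothesis in \Cref{lem:alternation-rank} as the delicate point (resolved via the fact $|N_r^+(u)\cap A|\le c$) is apt.
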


In \cref{thm:duality-setwise}, we need to be able to compute a 
good weak $r$-coloring order $L$ in order to find the described 
distance-$r$ dominating set. We prove that this is possible in 
{the next section}. 


\subsection{Building the kernel}\label{sec:kernel}

We prove that on classes of bounded crownless expansion we can for
every fixed value of $r$ find a polynomial kernel for the directed
distance-$r$ dominating set problem.  In the following, fix $\CCC$ and
$r$.

\begin{definition}[$r$-domination core]\label{def:domcore}
  Let~$G$ be a digraph. A set $Z\subseteq V(G)$ is an
  \emph{$r$-domination core} in~$G$ if every minimum-size set which
  $r$-dominates $Z$ also $r$-dominates~$G$.
\end{definition}

Clearly, the set $V(G)$ is an $r$-domination core. We will show how to
iteratively remove vertices from this trivial core, to arrive at
smaller and smaller domination cores, until finally, we arrive at a
core of polynomial size {in $k$}. Observe that we do not require that every
$r$-dominating set for $Z$ is also an $r$-dominating set for~$G$;
there can exist dominating sets for $Z$ which are not of minimum size
and which do not dominate the whole graph.

\begin{lemma}
  \label{lem:reduce-corce}
  There exists a polynomial $p$ and a polynomial-time algorithm that,
  given an $r$-domination core $Z \subseteq V(G)$ with $|Z| > p(k)$,
  either correctly decides that $G$ cannot be dominated by $k$
  vertices, or finds a vertex $z \in Z$ such that $Z \setminus \{z\}$
  is still an $r$-domination core.
\end{lemma}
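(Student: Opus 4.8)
The plan is to follow the approach of Drange et al.~\cite{DrangeDFKLPPRVS16}, adapted to the directed setting via the machinery built in \cref{sec:duality}. Assume we are given an $r$-domination core $Z$ with $|Z|$ larger than a polynomial $p(k)$ to be determined. First I would run the algorithm of \cref{thm:duality-setwise} on the pair $(G, Z)$ with the target value $k$: either it outputs an $r$-scattered set of size $k+1$ inside $Z$, in which case no $k$-element set can $r$-dominate $Z$ (since $\gamma_r(G,Z)\geq \alpha_r(G,Z)$) and a fortiori none can dominate $G$, so we correctly reject; or it produces a distance-$r$ dominating set $D$ of $Z$ with $|D|\leq p'(k)$ for a fixed polynomial $p'$ of degree $N=N(c,r,q)$. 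In the latter case $D$ is a set of at most $p'(k)$ ``domination witnesses'' and the goal becomes to find a redundant vertex $z\in Z$, i.e.\ one such that every minimum-size $r$-dominator of $Z\setminus\{z\}$ still $r$-dominates all of $Z$ (and hence, since $Z$ was a core, all of $G$).

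The key structural step is a neighborhood-complexity argument relative to $D$. For each $z\in Z$ consider its distance-$r$ in-profile with respect to $D$, namely the vector $D_r^-(z, D)$ recording $\dist(d, z)$ for each $d\in D$ (truncated at $r$). By \cref{lem:nc}, since $\dwcol{r}(G)\leq c$ is bounded on $\CCC$, the number of distinct such profiles is at most $((r+2)\cdot c\cdot |D|)^c$, which is polynomial in $k$. Hence if $|Z|$ exceeds this bound times $(2r+1)^{|D|}$-many ``types'', a counting argument along the lines of~\cite{DrangeDFKLPPRVS16} yields a large collection of vertices of $Z$ sharing the same profile. The point of sharing a profile is that such vertices are dominated ``in the same way'' by any solution that goes through $D$; but a generic minimum solution need not go through $D$, so — exactly as in the undirected case — one argues that a solution either intersects a bounded-radius ball around the shared profile, or it must dominate all the profile-twins from ``outside,'' and since there are more twins than $k$ plus the profile bound, one of them, call it $z$, can be safely removed: every minimum dominator of $Z\setminus\{z\}$ is forced to also cover $z$. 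Making this precise requires combining \cref{thm:duality-setwise} applied to auxiliary sets $X$ (the twins and their local neighborhoods) with \cref{lem:wreach-sep} to control how a hypothetical solution separates from $z$.

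The main obstacle I expect is the bookkeeping that replaces the undirected ``closed-neighborhood'' arguments of~\cite{DrangeDFKLPPRVS16}: in the directed setting one has to be careful that distance-$r$ domination is not symmetric, so ``$z$ is dominated like its twins'' must be phrased purely in terms of in-reachability from $D$, and the fallback case where a solution dominates a twin by a short path \emph{not} captured by $D$ has to be handled by enlarging $D$ (re-invoking \cref{thm:duality-setwise} on the relevant $X$) rather than by any undirected symmetry. A second, smaller obstacle is pinning down the polynomial $p$: its degree will be governed by $N=N(c,r,q)$ from \cref{thm:duality-setwise} together with the exponent $c$ from \cref{lem:nc}, so $p(k)$ will be something like $k^{\Oof(N\cdot c)}$, and one must verify that the iterative removal (Lemma applied repeatedly until $|Z|\leq p(k)$) keeps the core property invariant — which is immediate from \cref{def:domcore}, since if $Z$ is a core and $Z\setminus\{z\}$ is a core, transitivity of ``is a core'' along the chain is trivial. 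Once this lemma is in hand, the kernel itself follows by repeatedly removing vertices until $|Z|\leq p(k)$ and then doing a standard bounded-radius ball-growing step around $Z$ to restrict the rest of $G$, again mirroring~\cite{DrangeDFKLPPRVS16}.
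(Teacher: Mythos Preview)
Your opening move---run \cref{thm:duality-setwise} to either reject or obtain a small $D$ dominating $Z$---matches the paper, but the core argument you sketch has a genuine gap. Classifying $Z$ by distance profiles with respect to~$D$ alone is not enough: even if many $z_1,\ldots,z_m\in Z$ share the same vector $D_r^-(z_i,D)$, a hypothetical minimum dominator $D'$ of $Z\setminus\{z\}$ may cover all of $z_1,\ldots,z_m$ from a \emph{single} vertex $v\notin D$ along paths that never touch~$D$. Your counting ``more twins than $k$'' then yields no contradiction, because nothing forces distinct elements of $D'$ for distinct twins. (Your aside about $(2r+1)^{|D|}$ types is also off: $|D|$ is polynomial in $k$, so that factor would be exponential.)

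What the paper actually does---and what your ``enlarging $D$'' remark gestures at but does not pin down---is an iterative build-up that manufactures, in addition to a small set $X$, a large subset $W\subseteq Z\setminus X$ that is \emph{$r$-scattered in $G-X$}. Concretely: set $Y_1\coloneqq D$, then alternate $X_i\coloneqq \dWReach{r}[G,L,Y_i]$ with a fresh call to \cref{thm:duality-setwise} on $(G-X_i,\,Z\setminus X_i)$ with parameter $q(|X_i|)$. A weak-reachability argument via \cref{lem:wreach-sep} shows this terminates in at most $c=\dwcol{r}(G)$ rounds, at which point the call returns a large $r$-scattered set $W$ rather than a dominator. Only now do you take profiles, namely $D_r^-(\cdot,X)$ on $W$ (not on all of $Z$, and with respect to $X$, not $D$), find a class $\kappa$ of size $>k+1$, and pick any $z\in\kappa$. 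The verification then splits cleanly: if a path from $D'$ to some twin $s\in\kappa\setminus\{z\}$ meets $X$, the shared profile forces $z$ to be dominated too; if it avoids $X$, the scatteredness of $W$ in $G-X$ forces the dominators $v(s)$ to be pairwise distinct, whence $|D'|>k$. The scatteredness of $W$ is the missing ingredient in your plan, and obtaining it is precisely the purpose of the iterated closure; without it the redundancy argument does not go through.
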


Hence, by gradually reducing $|Z|$, we arrive at the following
theorem.

\begin{theorem}\label{thm:dcore}
  There exists a polynomial $p$ and a polynomial-time algorithm that,
  given an instance $(G,k)$ where $G\in \CCC$, either correctly
  decides that $G$ cannot be dominated by $k$ vertices, or finds an
  $r$-domination core $Z \subseteq V(G)$ with $|Z| \leq p(k)$.
\end{theorem}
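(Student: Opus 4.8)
The plan is to derive Theorem~\ref{thm:dcore} from Lemma~\ref{lem:reduce-corce} by a straightforward iteration argument. We start with the trivial $r$-domination core $Z_0 \coloneqq V(G)$, which is an $r$-domination core since every minimum-size set $r$-dominating $V(G)$ trivially $r$-dominates $G$. Let $p$ be the polynomial furnished by Lemma~\ref{lem:reduce-corce}. We repeatedly apply the algorithm of Lemma~\ref{lem:reduce-corce}: as long as the current core $Z_i$ satisfies $|Z_i| > p(k)$, we run the algorithm on $Z_i$. If it reports that $G$ cannot be $r$-dominated by $k$ vertices, we halt and output that answer, which is correct. Otherwise it returns a vertex $z \in Z_i$ such that $Z_{i+1} \coloneqq Z_i \setminus \{z\}$ is still an $r$-domination core, and we continue. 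Once $|Z_i| \leq p(k)$ we halt and return $Z \coloneqq Z_i$.

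**Correctness and running time.**

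First I would argue correctness: by induction, every $Z_i$ produced in the loop is genuinely an $r$-domination core, since $Z_0$ is and each step preserves this property by Lemma~\ref{lem:reduce-corce}; hence the returned set $Z$ is a valid $r$-domination core with $|Z| \leq p(k)$, and the early-exit answer is correct by the guarantee of the lemma. For the running time, note that $|Z_{i+1}| = |Z_i| - 1$, so the loop runs for at most $|V(G)| \leq n$ iterations before $|Z_i|$ drops to at most $p(k)$ (in fact to at most $\max\{p(k), 0\}$, but it never drops below $p(k)$ without the loop terminating first). Each iteration invokes the polynomial-time algorithm of Lemma~\ref{lem:reduce-corce} once, plus $\Oof(1)$ bookkeeping, so the total running time is $n$ times a polynomial in the input size, which is again polynomial. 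Since $\CCC$ is fixed and $r$ is fixed, the polynomial $p$ and all constants depend only on $\CCC$ and $r$, as required.

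**The main obstacle.**

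There is essentially no obstacle in this theorem itself: it is the routine "iterate the reduction rule" wrapper that turns a single core-shrinking step into a full preprocessing algorithm. The entire difficulty of the kernelization is concentrated in Lemma~\ref{lem:reduce-corce} --- proving that, once $|Z|$ exceeds a suitable polynomial threshold $p(k)$, one can in polynomial time either certify infeasibility or irrevocably discard a vertex of $Z$ while keeping the core property. That is where the polynomial duality theorem (Theorem~\ref{thm:duality-setwise}), the bounded weak-coloring-number machinery, and the crownlessness assumption all get used, following the approach of~\cite{DrangeDFKLPPRVS16}. For the present theorem, the only minor point worth stating carefully is that the threshold $p$ in Theorem~\ref{thm:dcore} is literally the same polynomial as in Lemma~\ref{lem:reduce-corce}: the loop invariant we maintain is precisely "$Z_i$ is an $r$-domination core", and we stop exactly when $|Z_i| \leq p(k)$, so no loss is incurred in the degree of $p$.
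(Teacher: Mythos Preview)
Your proposal is correct and matches the paper's approach exactly: the paper states that \cref{thm:dcore} follows by ``gradually reducing $|Z|$'' via repeated applications of \cref{lem:reduce-corce}, starting from the trivial core $V(G)$, which is precisely the iteration argument you give. Your write-up is in fact more detailed than what the paper provides.
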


\begin{proof}[of \cref{lem:reduce-corce}]
  Let $p_1$ be the polynomial of the algorithm of
  \cref{thm:duality-setwise}.  Given~$Z$, we first apply the algorithm
  of that theorem to~$G$,~$Z$, and the parameters~$r$ and~$k$.  Thus,
  we either find a distance-$r$ dominating set $Y_1$ of $Z$ such that
  $|Y_1|\leq p_1(k)$, or we find a subset $W \subseteq Z$ of size at
  least~$k+1$ that is $r$-scattered in~$G$.  In the latter case,
  since~$W$ is an obstruction to an $r$-dominating set of size at
  most~$k$, we may terminate the algorithm and provide a negative
  answer.  Hence, assume that~$Y_1$ has been successfully constructed.

  {In the first phase, w}e inductively construct sets $X_1,Y_2,X_2,Y_3,X_3,\ldots$ with
  $Y_1\subseteq X_1\subseteq Y_2\subseteq X_2\subseteq \ldots$ as
  follows:
  \begin{itemize}
  \item If $Y_i$ is already defined, then set
    $X_i=\dWReach{r}[G,L,Y_i]$.
  \item If $X_i$ is already defined, then apply the algorithm of
    \cref{thm:duality-setwise} to $G - X_i$, $Z \setminus X_i$, and
    the parameters~$r$ and $q(|X_i|)$, where {
    $q(x)=(k+1)\cdot ((r+2)\cdot c\cdot x)^c$.}
    \begin{enumerate}
    \item\label{big-scattered} Suppose the algorithm finds a set
      $W\subseteq Z \setminus X_i$ that is $r$-scattered in $G - X_i$
      and has cardinality greater than $q(|X_i|)$. Then we let
      $X = X_i$, terminate the procedure and proceed to the second
      phase with the pair $(X,W)$.
    \item\label{small-domset} Otherwise, the algorithm has found an
      $r$-dominating set $D_{i+1}$ in the graph $G-X_i$ of size at
      most $p_1(q(|X_i|))$. Then define
      $Y_{i+1}\coloneqq X_i\cup D_{i+1}$ and proceed.
    \end{enumerate}
  \end{itemize}
  Set $q'(x)=c\cdot (q(x)+p_1(x))$. As $\dwcol{r}(G)$ is bounded by
  $c$, an induction shows that $|X_i|\leq \left(q'(k)\right)^i$.
  Hence, the algorithm consecutively finds $r$-dominating sets
  $D_2,D_3,D_4,\ldots$ and constructs sets $X_2,X_3,X_4,\ldots$ up to
  the point when case (\ref{big-scattered}) is encountered.  Then the
  construction is terminated. We claim that case (\ref{big-scattered})
  is always encountered after a constant number of iterations, more
  precisely, after at most $c$ many iterations.

  Towards proving this, assume that a vertex $z$ lies in
  $Z\setminus X_i$ for some $i\geq 1$.  For each~$D_j$, which is an
  $r$-dominating set of $Z\setminus X_{j-1}$ in $G-X_{j-1}$, fix a
  shortest path $P_j$ in $G-X_{j-1}$ from some $d_j\in D_j$ to
  $z$. The smallest vertex $x_j$ on that path is weakly $r$-reachable
  from~$d_j$, as well as from $z$, and hence is added to the set $X_j$
  (see \cref{lem:wreach-sep}).  Hence, after $i$ iterations, we have
  constructed a set $\{x_1,\ldots, x_i\}$ of vertices weakly
  $r$-reachable from $z$, which proves that the procedure must stop
  after at most $\dwcol{r}(G)=c$ many steps.
  Therefore, the construction terminates within~$c$ iterations with a
  pair~$(X,W)$ with the following properties:
  \begin{itemize}
  \item $|X|\leq \left(q'(k)\right)^c$ and $|W| > q(|X|)$;
  \item $X$ $r$-dominates $Z$ (because $Y_1\subseteq X$);
  \item $W \subseteq Z \setminus X$ and $W$ is $r$-scattered in
    $G - X$.
  \end{itemize}
  We now define an equivalence relation $\simeq$ on $W$ (recall
  that $D_r^-(u,X)$ denotes the distance vector of $u$ towards $X$): 
  for $u,v\in W$, let
  \begin{align*}
    u\simeq v\quad \Leftrightarrow \quad D_r^-(u,X) = D_r^-(v,X).
  \end{align*}

  According to \cref{lem:nc}, $\simeq$ has at most
  {
    $q(x)=(k+1)\cdot ((r+2)\cdot c\cdot |X|)^c$} equivalence classes.  Since we have that
  $|W|>q(|X|)$, we infer that there is a class $\kappa$ of relation
  $\simeq$ with $|\kappa|>k+1$. Note that we can find such a class
  $\kappa$ in polynomial time, by computing the classes of $\simeq$
  directly from the definition and examining their sizes. Let~$z$ be
  an arbitrary vertex of $\kappa$. We claim that $Z\setminus \{z\}$ is
  an $r$-domination core.

  To see this, let $Z'=Z\setminus \{z\}$. Take any minimum-size set
  $D$ which $r$-dominates $Z'$ in $G$.  If $D$ also dominates $z$,
  then $D$ is a minimum-size set which $r$ dominates $Z$, hence, as
  $Z$ is an $r$-domination core, also~$D$ is an $r$-dominating set in
  $G$, and the claim follows. Hence, towards a contradiction, assume
  that $z$ is not $r$-dominated by $D$.

  Every vertex $s\in \kappa\setminus \{z\}$ is $r$-dominated by
  $D$. For each such $s$, let $v(s)$ be a vertex of $D$ that
  $r$-dominates~$s$, and let $P(s)$ be a path of length at most $r$
  that connects $v(s)$ with $s$. We claim that for each
  $s\in \kappa\setminus \{z\}$, the path $P(s)$ does not pass through
  any vertex of $X$ (in particular $v(s)\notin X$). Also,
  vertices $v(s)$ for $s\in \kappa\setminus \{z\}$ are pairwise
  distinct. Suppose otherwise and let $w$ be the vertex of
  $V(P(s))\cap X$ that is closest to $s$ on $P(s)$. Then, as
  $D_r^-(s, X)=D_r^-(z, X)$, also $z$ is $r$-dominated by $w$,
  contradicting our assumption that $z$ is not $r$-dominated by $D$.

  For the second part of the claim, suppose $v(s)=v(s')$ for some
  distinct $s,s'\in \kappa\setminus \{z\}$. Then $v(s)$ together with
  the paths $P(s)$ and $P(s')$ would contradict with the fact
  that $W$ is $r$-scattered in $G-X$.

  This however is not possible, as $\kappa$ has more than $k$
  elements.  Hence $D'$ is a $Z$-dominator, which gives us the desired
  contradiction.
\end{proof}

Now that it remains to dominate a subset $Z$, we may keep one
representative from each equivalence class in the equivalence
relation:
$u\cong_{Z,r} v\Leftrightarrow N_r^+(u)\cap Z=N_r^+(v)\cap Z$.  As
before, there are only polynomially many equivalence classes, hence
from a polynomial domination core we can construct a polynomial
kernel.

\begin{theorem}\label{thm:kernel}
  Let $\CCC$ be a class of bounded expansion. There is a polynomial
  time algorithm which on input $G$, $k$ and $r$ computes a subgraph
  $G'\subseteq G$ and a set $Z\subseteq V(G')$ such that $G$ can be
  $r$-dominated by $k$ vertices if, and only if, $Z$ can be
  $r$-dominated by $k$ vertices in $G'$ and $|Z|\leq p(k)$.
\end{theorem}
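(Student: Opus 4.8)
The plan is to combine the domination core theorem (\cref{thm:dcore}) with a final equivalence-class reduction, following the template of~\cite{DrangeDFKLPPRVS16}. First I would run the algorithm of \cref{thm:dcore} on the input $(G,k)$: either it correctly reports that $G$ cannot be $r$-dominated by $k$ vertices (in which case we output a trivial negative instance, say a single vertex with $k=0$, so the kernel is valid), or it produces an $r$-domination core $Z\subseteq V(G)$ with $|Z|\leq p_0(k)$ for some polynomial $p_0$. The point of the core is that any minimum-size set that $r$-dominates $Z$ also $r$-dominates $G$; so once $Z$ is small, it suffices to encode the problem ``is there a size-$k$ set $r$-dominating $Z$'' into a small graph.

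The next step is to shrink the set of \emph{potential dominators}. Having fixed the small core $Z$, I would partition $V(G)$ by the relation $u\cong_{Z,r}v \Leftrightarrow N_r^+(u)\cap Z = N_r^+(v)\cap Z$ — two vertices are equivalent if they $r$-dominate exactly the same subset of $Z$ — and keep a single representative from each class. The number of classes is exactly the distance-$r$ out-neighborhood complexity $\nu_r^+(G,Z)$, which by the out-neighborhood analogue of \cref{lem:nc} (the proofs carry over verbatim, as the text notes) is at most $((r+2)\cdot c\cdot |Z|)^c$ where $c = \dwcol{r}(G)$ is a constant for the class $\CCC$; since $|Z|\leq p_0(k)$ this is polynomial in $k$. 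Let $R$ be a set of chosen representatives, one per class, together with all of $Z$ (so that every vertex of $Z$ survives). Then $|R|$ is polynomial in $k$.

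Then I would define $G'$ to be the subgraph induced by $R$ \emph{together with} enough additional structure to certify the relevant $r$-paths: concretely, for each representative $v\in R$ and each $z\in N_r^+(v)\cap Z$ we add a shortest $v$--$z$ path in $G$ into $G'$ (this is the standard ``add witnesses for distance-$r$ domination'' step; the internal vertices of these paths are new but there are at most $r|R||Z|$ of them, still polynomial in $k$). The equivalence to check is: $G$ can be $r$-dominated by $k$ vertices $\iff$ $Z$ can be $r$-dominated by $k$ vertices within $G'$. The forward direction uses the core property — a minimum $r$-dominating set of $G$ $r$-dominates $Z$, and replacing each of its vertices by the representative of its $\cong_{Z,r}$-class gives a size-$\le k$ set in $R$ that $r$-dominates $Z$, with the required paths present in $G'$ by construction. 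The backward direction uses the core property again in the other direction: a size-$\le k$ set in $G'$ that $r$-dominates $Z$ is (after possibly shrinking to minimum size) a minimum-size $r$-dominator of $Z$ in $G$ as well — here one must be slightly careful that distances in $G'$ upper-bound distances in $G$, which holds since $G'\subseteq G$, so an $r$-path in $G'$ is an $r$-path in $G$ — hence it $r$-dominates all of $G$.

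The main obstacle I expect is the backward direction's bookkeeping: one needs the domination core machinery to interact correctly with the passage to $G'$, because a set that $r$-dominates $Z$ in $G'$ might not be of minimum size \emph{in $G$}, and the core property only promises something about minimum-size $Z$-dominators. The clean way around this is to observe that we only need the decision version, so we may assume the tested set has size exactly the minimum needed to dominate $Z$ in $G$ (if a smaller-in-$G'$ dominator exists we can still argue via a minimum one), and that a set $r$-dominating $Z$ in the subgraph $G'$ is in particular a set $r$-dominating $Z$ in $G$ of the same size; combining with \cref{def:domcore} closes the loop. Finally one bundles $p_0$, the neighborhood-complexity polynomial, and the factor $r|Z|$ into a single polynomial $p$ with $|V(G')|\le p(k)$, and sets the output core to be $Z$ itself (which has $|Z|\le p(k)$), completing the kernelization.
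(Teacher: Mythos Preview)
Your proposal is correct and follows essentially the same approach as the paper: compute a polynomial domination core via \cref{thm:dcore}, then reduce the set of potential dominators by keeping one representative per $\cong_{Z,r}$-class, bounding the number of classes via the (out-)neighbourhood analogue of \cref{lem:nc}. The paper's own argument is in fact only the one-paragraph sketch immediately preceding the theorem; your path-adding construction for $G'$ and the bookkeeping for the backward direction (pass to a minimum-size $Z$-dominator in $G$ and then invoke \cref{def:domcore}) fill in exactly the details the paper leaves implicit.
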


Formally, in \cref{thm:kernel} we are not computing a kernel for
distance-$r$ dominating set, as we do not compute an instance of the
distance-$r$ dominating set problem on $G'$ but rather a red-blue
instance. As observed by Drange et al.~\cite{DrangeDFKLPPRVS16}, such
an annotated instance can be translated back to the standard problem
in the following way: add two fresh vertices $w,w'$, add a directed
path of length $r$ from $w$ to $w'$, and add a directed path of length
$r$ from $w$ to each vertex of $V(G')\setminus Z$.  Then the obtained
graph $G''$ has a distance-$r$ dominating set of size at most $k+1$ if,
and only if, $G'$ admits a set of at most $k$ vertices that
$r$-dominates~$Z$.

\section{Generalized coloring numbers}

This section contains more results about the generalized coloring numbers,
which, except for \cref{thm:compute-wcol}, are not necessary for the
results presented in the main body of the paper. The results are mainly
concerned with structural properties of classes of bounded expansion 
and the weak coloring numbers. 

For many algorithmic applications it is useful to compute an order of
the vertices as a preprocessing step. For example, the coloring
number $\col(G)$ of a digraph $G$ is the minimum integer $k$ such that
there exists a linear ordering $L$ of $V(G)$ such that each vertex $v$
has at most $k$ smaller neighbors.  It is easily seen that the
coloring number of $G$ is equal to its degeneracy.  Recall that a
graph $G$ is $k$-degenerate if every subgraph of $G$ has a vertex of
degree at most $k$. Hence the coloring number is a structural measure
that measures the edge density of subgraphs of $G$. The coloring
number gets its name from the fact that it bounds the chromatic number
-- we can simply color the vertices in the order $L$ such that every
uncolored vertex gets a color not used by its at most $\col(G)$
smaller neighbors. This bound is very useful, as computing the
chromatic number of $G$ is NP-complete, whereas the coloring number
can be computed by a greedy algorithm in linear time.


This section is structured as follows. {We already defined the weak coloring numbers $\dwcol{r}(G)$ in \Cref{sec:vc}, we will define the related measure $\dadm{r}(G)$ in
\cref{sec:def-col}.}  In \cref{sec:dtd}, we study the limit parameter
$\dwcol{\infty}(G)$. In undirected graphs, this parameter is equal to
the well known structural measure tree-depth, which motivates us to
call this new measure \emph{directed tree-depth}.  In
\cref{sec:tdcol}, we introduce the concept of low directed tree-depth
colorings, generalizing the very successful concepts of low tree-depth
colorings for undirected graphs~\cite{nevsetvril2006tree} to directed
graphs. This concept allows us to decompose a more complex graph into
a few parts whose structure is simpler and whose interaction is highly
regular. We prove that classes of directed bounded expansion are
exactly those classes which admit low directed tree-depth colorings.
Finally, in \cref{sec:tfa}, we introduce the concept of directed
transitive fraternal augmentations, which we use to compute
generalized coloring orders in linear time.

\subsection{Generalized coloring numbers}\label{sec:def-col}

For a digraph $G$ and a natural number $r$, the
\emph{$r$-admissibility} $\dadm{r}[G,L, v]$ of $v$ with respect to~$L$
is the maximum size $k$ of a family $\{P_1,\ldots,P_k\}$ of paths of
length at most~$r$ with one end~$v$, and the other end at a vertex $w$
with $w\leq_Lv$, and which satisfy $V(P_i)\cap V(P_j)=\{v\}$ for all
$1\leq i< j\leq k$. As for $r>0$ we can always let the paths end in
the first vertex smaller than $v$, we can assume that the internal
vertices of the paths are larger than~$v$.  Note that
$\dadm{r}[G,L,v]$ is an integer, whereas $\dWReach{r}[G,L, v]$ is a
vertex set.  The \emph{$r$-admissibility} $\dadm{r}(G)$ of~$G$~is
\[\dadm{r}(G)  \coloneqq  \min_{L\in\Pi(G)}\max_{v\in V(G)}\dadm{r}[G,L,v].\]

Note that $\dadm{r}(G)$ and $\dwcol{r}(G)$ are {\em
  monotone parameters}, in the sense that if $H$ is a sub-digraph of~$G$, then $\dadm{r}(H)\leq \dadm{r}(G)$ 
and $\dwcol{r}(H)\leq \dwcol{r}(G)$. As proved in
\cite{kreutzer2017structural}, for all $r\geq 1$ it holds that
$\dwcol{r}(G)\leq 2\cdot \dadm{r}(G)^r$. 

It will be very useful to
work with the following characterization of classes of bounded
expansion.

\begin{theorem}[\cite{kreutzer2017structural}]
  A class $\CCC$ of digraphs has bounded expansion if, and only if,
  there is $f\colon\N\rightarrow\N$ such that $\dwcol{r}(G) \leq f(r)$
  for all $G\in \CCC$ and all $r\geq 1$.
\end{theorem}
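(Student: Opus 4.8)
The claim is an equivalence of two boundedness conditions, so the plan is to prove the two implications separately. The direction ``bounded directed weak coloring numbers $\Rightarrow$ bounded expansion'' is the routine one and can be handled by a direct charging argument; the converse is the substantial part, and for it I would lean on the two facts already quoted in the excerpt: the inequality $\dwcol{r}(G)\le 2\cdot\dadm{r}(G)^r$, which lets me replace weak coloring numbers by admissibility, and \cref{lem:densityminors}, which lets me pass freely between $\nabla$ and $\widetilde{\nabla}$.

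For the easy implication, assume $\dwcol{s}(G)\le c$ for every $G\in\CCC$ and every $s$; by \cref{lem:densityminors} it suffices to bound $\widetilde{\nabla}_r(G)$. Fix $H\minor_r^t G$ with branch map $\delta$ and an order $L$ realizing $\dwcol{2r}(G)\le c$. Orient every arc of $H$ towards its $L$-smaller branch vertex; then $|E(H)|$ is at most the maximum out-degree of this orientation times $|V(H)|$, so it is enough to bound that out-degree. Given a vertex $u$ of $H$ and an arc oriented out of $u$, realized by a directed path $Q$ of length at most $2r$ starting at $\delta(u)$, its $L$-minimum vertex lies in $\dWReach{2r}[G,L,\delta(u)]$; and because the realizing paths of a topological minor are internally disjoint from each other and from all branch vertices, distinct such arcs out of $u$ have distinct $L$-minimum vertices. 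Hence the out-degree of $u$ is at most $|\dWReach{2r}[G,L,\delta(u)]|\le c$, so $\widetilde{\nabla}_r(G)\le\dwcol{2r}(G)$ and $\CCC$ has bounded expansion.

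For the hard implication, by $\dwcol{r}(G)\le 2\cdot\dadm{r}(G)^r$ it suffices to show that a class of bounded expansion satisfies $\dadm{r}(G)\le g(r)$ for some $g$ depending only on $\CCC$. I would obtain a good ordering by greedily peeling off vertices of small local $r$-connectivity and reading the peeling in reverse; the correctness of this reduces to the following quantitative claim: in every sub-digraph $G'$ of a $G\in\CCC$ there is a vertex $v$ such that the maximum number of internally vertex-disjoint directed paths of length at most $r$ from $v$ to $V(G')\setminus\{v\}$ is bounded by a function of $r$ and the grads of $\CCC$. To prove the claim, suppose not, so that every vertex of $G'$ has more than $k$ such paths; then a Menger-type/counting argument should produce from $G'$ a depth-$O(r)$ topological minor of average degree exceeding any prescribed bound once $k$ is large enough, contradicting $\widetilde{\nabla}_{O(r)}(G')\le\widetilde{\nabla}_{O(r)}(G)$ (and passing to $\nabla$ via \cref{lem:densityminors}). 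An alternative route, matching the tools set up in \cref{sec:tfa}, is to run $O(\log r)$ rounds of directed transitive fraternal augmentation on $G$: bounded expansion keeps the degeneracy of the augmented digraph bounded, and a degeneracy order of the augmentation is a weak $r$-coloring order of $G$ of bounded width.

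The main obstacle is exactly this density bound for the peeling step (equivalently, for the transitive fraternal augmentation). In the directed setting the short internally disjoint paths witnessing large admissibility can run in mixed orientations, so one cannot read off a dense shallow minor as directly as in the undirected case: the work lies in showing that, whatever the orientations of these paths, having many of them at every vertex of every subgraph still forces a dense depth-$O(r)$ (topological) minor. Everything else ---the two charging arguments, the conversions between $\minor_r$ and $\minor_r^t$ via \cref{lem:densityminors}, and the verification that the peeling order has the claimed width--- is then routine given the quoted admissibility/weak-coloring inequality.
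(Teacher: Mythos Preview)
The paper does not prove this statement: it is quoted (twice, as \cref{thm:wcol-be} and again in \cref{sec:def-col}) from \cite{kreutzer2017structural} without proof, so there is no in-paper argument to compare your proposal against. Your outline is nevertheless the standard one and is consistent with the surrounding machinery the paper does develop.

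Two small remarks. In your easy direction, antiparallel arcs $(u,v)$ and $(v,u)$ of $H$ have realizing paths sharing both endpoints, so both can have $L$-minimum $\delta(v)$; your injectivity claim therefore only gives $\widetilde\nabla_r(G)\le 2\,\dwcol{2r}(G)$, which is of course fine qualitatively. For the hard direction, your peeling route is exactly what underlies \cref{lem:adm} (also quoted from \cite{kreutzer2017structural}), and the paper uses that lemma in \cref{lem:colour-be} to carry out precisely the ``many short internally disjoint paths at every vertex $\Rightarrow$ dense shallow minor'' extraction you describe. Your alternative transitive-fraternal-augmentation route is developed in \cref{sec:tfa}, but note that the out-degree bound proved there for the augmentation takes $\dwcol{r}(G)\le c$ as a \emph{hypothesis}, so as written it is circular for your purpose; making it self-contained requires the separate observation that each augmentation step only introduces edges of a bounded-depth minor, which the present paper does not spell out. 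Your identification of the main obstacle---turning ``every vertex of every subgraph has many short internally disjoint paths, possibly of mixed orientation'' into a dense shallow directed minor---is accurate, and that is exactly the content deferred to \cite{kreutzer2017structural}.
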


\subsection{The limit parameter -- directed tree-depth}\label{sec:dtd}

If $G$ is an $n$-vertex graph, we denote by $\dwcol{\infty}(G)$ the
number $\dwcol{n}(G)$. If $G$ is undirected, then
$\dwcol{\infty}(G)=\td(G)$, where $\td(G)$ denotes the tree-depth of
$G$ (see Lemma 6.5 in \cite{nevsetvril2012sparsity}). This motivates
our study of the limit parameter in directed graphs, which we call the
\emph{directed tree-depth} of a digraph~$G$.
We start we an easy example.
\begin{example}
	Let $P_{n}$ be a directed path of order $n$ (hence length $n-2$). Then 
	\[\dwcol{\infty}(P_{n})=\lceil \log_2 (n+1)\rceil.\]
        It follows from an easy induction that
        $\dwcol{\infty}(P_{n})\geq \lceil \log_2 (n+1)\rceil$: {Let $k$ be
        minimal such that}
        $n{\le} 2^k-1$ and let $r$ be the minimum vertex of $P_{n}$
        (for a linear order witnessing $\dwcol{\infty}(P_{n})$). Note
        that $r$ is weakly reachable from every vertex in $P_{n}$, and
        that the deletion of $r$ breaks $P_{n}$ into two directed path
        $P_a$ and $P_b$, one of them (say~$P_a$) having order at least
        $2^{k-1}-1$. Using the restriction of the linear order of
        $P_{n}$ on this path we get
        \[\dwcol{\infty}(P_{n})\geq \dwcol{\infty}(P_{2^{k-1}-1})+1.\]

        Conversely, ordering the vertex set of $P^{2^n-1}$ with its
        mid vertex $r$ as its minimum, then the mid vertices of the
        sub-directed paths obtained by deleting $r$, etc. we get an
        ordering of $P_{n}$ witnessing $\dwcol{\infty}(P_{n})\leq n$.
\end{example}

\begin{lemma}\label{lem:wcol_infty}
  Let $G$ be a directed graph such that $\dwcol{\infty}(G)\leq c$ for
  some constant $c$. Then $G$ does not contain a directed path with
  length greater than $2^c-2$ and all directed topological minors of
  $G$ have arc density at most~$4c$.
\end{lemma}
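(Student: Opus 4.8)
The plan is to prove the two assertions separately. The first assertion -- that $G$ contains no directed path of length greater than $2^c-2$ -- follows immediately from the worked \emph{Example} preceding the lemma, combined with the monotonicity of $\dwcol{r}$ under subgraphs. Concretely, if $G$ contained a directed path $P_n$ on $n$ vertices with $n > 2^c-1$, then since $P_n$ is a subgraph of $G$ we would have $\dwcol{\infty}(P_n) \le \dwcol{\infty}(G) \le c$; but the Example shows $\dwcol{\infty}(P_n) = \lceil \log_2(n+1)\rceil \ge c+1$, a contradiction. (One has to be slightly careful that ``$\dwcol{\infty}$'' is defined relative to the number of vertices of the ambient graph; but since weak reachability along a path of length $\le r$ only involves $r+1$ vertices, the value $\dwcol{n}(H)$ for a subgraph $H$ on fewer than $n$ vertices agrees with $\dwcol{\infty}(H)$, so the monotonicity argument goes through.) A directed path of length greater than $2^c - 2$ has more than $2^c - 1$ vertices, so this is exactly the bound claimed.

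For the second assertion, I would first reduce the statement about \emph{all} directed topological minors to a statement about edge density, via the definition of $\widetilde\nabla$. A directed topological minor $H$ of $G$ is obtained by choosing branch vertices and internally-disjoint connecting paths; the key point is that each such connecting path, being a directed path inside $G$, has length at most $2^c-2$ by the first part. So every directed topological minor of $G$ is in fact a \emph{topological depth-$r$ minor} for $r = 2^{c-1}-1$ (using the convention $H \minor_r^t G$ when all paths have length at most $2r$), and hence it suffices to bound $\widetilde\nabla_r(G)$ for this value of $r$. To get a handle on this, I would bound $\dwcol{r}(G)$: since $\dwcol{r}$ is monotone in $r$, we have $\dwcol{r}(G) \le \dwcol{\infty}(G) \le c$ for every finite $r$, in particular for $r = 2^{c-1}-1$.

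The remaining step is the standard fact that bounded weak $r$-coloring number bounds the density of topological depth-$r$ minors. The cleanest route is: if $H \minor_r^t G$, then $H$ is a subgraph of a digraph $H'$ with $\dwcol{r}(H') \le \dwcol{r}(G) \le c$ obtained by contracting the connecting paths (more precisely, one argues directly that $\widetilde\nabla_r(G) \le \dwcol{\infty}(G) \le c$ is too weak; instead use that $2\nabla_0(K) \le 2\dwcol{1}(K) \le 2\dwcol{r}(K)$ applied to the relevant minor). Let me instead argue in terms of degeneracy directly: any $H$ with $H \minor_r^t G$ has the property that its underlying structure has an ordering (inherited from the optimal weak-coloring order on $G$, restricted to the branch vertices) in which each vertex has at most $\dwcol{r}(G) \le c$ smaller neighbors among the branch vertices, because an arc of $H$ corresponds to a path of length $\le 2r = 2^c-2$ in $G$ whose minimum endpoint-reaches witness weak $r$-reachability. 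Hence $H$ is $c$-degenerate, so $|E(H)| \le c\,|V(H)|$, giving arc density at most $c$; the bound $4c$ in the statement is a comfortable over-estimate absorbing the factor-$2$ slack between $\nabla$ and $\widetilde\nabla$ and between directed and underlying-undirected edge counts. The main obstacle I anticipate is bookkeeping the precise relationship between ``directed topological minor'' (with no depth restriction) and ``topological depth-$r$ minor'', namely verifying carefully that the first part of the lemma genuinely forces every connecting path to be short so that the depth-$r$ theory applies; once that reduction is in place, the density bound is routine.
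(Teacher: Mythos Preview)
Your plan is correct. Part~1 is identical to the paper's. For Part~2, both you and the paper first use Part~1 to reduce to bounded-depth topological minors, and both rest on the same key observation: a branch vertex $v$ weakly reaches the minimum vertex on each incident subdivision path, and these minima are pairwise distinct by internal disjointness of the paths.

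The execution differs. The paper argues by contradiction: assuming some topological minor $H$ has arc density exceeding $4c$, it passes to the underlying undirected graph, extracts (via Diestel's Proposition~1.2.2) a subgraph of minimum degree at least $2c$, strips one arc from each bidirected pair to get a directed subgraph $H'$ of minimum total degree at least $c$, and then observes that the \emph{largest} branch vertex of $H'$ weakly $\infty$-reaches more than $c$ vertices in $G$, contradicting $\dwcol{\infty}(G)\le c$. You instead argue degeneracy directly: in the order on branch vertices inherited from an optimal $L$, each branch vertex has at most $c-1$ smaller branch neighbors, so $H$ is $(c-1)$-degenerate. Your route avoids the undirected detour and the Diestel lemma, and in fact yields a sharper constant than $4c$.

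Two points to tighten in your write-up. First, the subdivision paths have length up to $2r$, so the minima are weakly $2r$-reachable from $v$, not $r$-reachable; this is harmless since $\dwcol{2r}(G)\le\dwcol{\infty}(G)\le c$ anyway, but the indexing should match. Second, you should spell out why the minima on distinct paths from $v$ are distinct and not equal to $v$ (the other endpoint is a smaller branch vertex, so the minimum is strictly below $v$; the paths share only $v$), and note that a bidirected pair $(v,u),(u,v)$ in $H$ can produce the same minimum twice --- this costs at most a factor of~$2$ in the density bound, still well under $4c$.
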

\begin{proof}
  If $G$ contains a directed path $P_{2^c}$ of length $2^c-1$, then
  $\dwcol{\infty}(G)\geq \dwcol{\infty}(P_{2^c})=c+1$, contradicting
  the assumption.
%

  Now, let $r\coloneqq 2^c$. Then every directed topological minor
  of~$G$ is a topological depth-$r$ minor of~$G$. Let $H$ be the
  densest topological depth-$r$ minor of $G$ and assume towards a
  contradiction that $|E(H)|/|V(H)|>4c$. Let $\bar{H}$ be the
  underlying undirected graph of~$H$. We apply Proposition 1.2.2
  of~\cite{Diestel} to $\bar{H}$, that is, we iteratively remove small
  degree vertices of $\bar{H}$ to obtain a graph $\bar{H}'$ with
  minimum degree
  $\delta(\bar{H}')\geq |E(\bar{H})|/|V(\bar{H})|\geq 2c$. Let $H'$ be
  a directed subgraph of $H$ induced by $V(\bar{H}')$, where we remove
  for each pair of bi-directed arcs $(u,v), (v,u)$ one of the two arcs
  (but keep the other). Then~$H'$ has minimum degree (in-degree plus
  out-degree) at least $c$.

  Let $L$ be an order of $V(G)$ witnessing that
  $\wcol_\infty(G)\leq c$. This order also induces an order on
  $V(H')$. Let $v$ be the largest vertex of $G$ that corresponds to a
  vertex of $H'$. Then $v$ weakly reaches more than~$c$ vertices in
  $G$; it weakly reaches at least one vertex on each path connecting
  $v$ with the vertices of $G$ corresponding to its neighbors in
  $H$. Together with the vertex $v$ itself we get
  $\wcol_\infty(G)\geq \wcol_\infty(H')>c$, contradicting our
  assumption.
\end{proof}

%

\subsection{Sparse directed tree-depth colourings}\label{sec:tdcol}

%

In this section, we use the nice properties of the generalized
coloring numbers to decompose a more complex graph into a few parts
whose structure is simpler and whose interaction is highly
regular. More precisely, we prove that classes of directed bounded
expansion are exactly those classes which admit low directed
tree-depth colorings.

\begin{theorem}\label{thm:characterization}
  A class $\CCC$ of directed graphs has directed bounded expansion if,
  and only if, for every integer~$p$ there are integers $N(p),\ell(p)$
  and $d(p)$ such that every graph $G\in \CCC$ can be colored with
  $N(p)$ many colors such that the combination of any $i\leq p$
  color classes induces a subgraph $H$ which excludes a directed path
  of length $\ell(p)$ and all directed topological minors of $H$ have
  density at most $d(p)$.
\end{theorem}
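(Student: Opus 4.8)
The plan is to mimic the proof of the corresponding equivalence for undirected graphs (the low tree-depth colorings of Nešetřil--Ossona de Mendez), using the directed weak coloring numbers $\dwcol{r}$ as the bridge, and exploiting the two characterizations already stated: \Cref{thm:wcol-be} (bounded expansion $\Leftrightarrow$ bounded $\dwcol{r}$ for every $r$) and \Cref{lem:wcol_infty} (a digraph with $\dwcol{\infty}\le c$ has no long directed path and all its directed topological minors have bounded density).

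\textbf{Easy direction (low tree-depth colorings $\Rightarrow$ bounded expansion).} Suppose every $G\in\CCC$ admits, for each $p$, an $N(p)$-coloring in which any $\le p$ color classes induce a subgraph avoiding a directed path of length $\ell(p)$ and whose directed topological minors have density $\le d(p)$. Fix $r$. Any depth-$r$ topological minor $H\minor_r^t G$ has at most some bounded number of branch vertices ``close together'': more precisely, take $p$ large enough (a function of $r$ alone) that any subgraph on the images of $p$ branch vertices together with the connecting paths lies inside the union of $p$ color classes. Then by hypothesis such a subgraph has bounded topological-minor density, so it cannot contain a dense topological minor on many vertices; a standard counting/averaging argument over all $p$-subsets of color classes bounds $\widetilde\nabla_r(G)$ by a function of $r$ only. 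Hence $\CCC$ has bounded expansion. I would carry out the averaging exactly as in the undirected case: an edge of $H$ survives in the subgraph induced by some $p$ color classes, and double counting over color-class subsets gives the density bound.

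\textbf{Hard direction (bounded expansion $\Rightarrow$ low tree-depth colorings).} This is the main obstacle, and the natural route is via a directed analogue of the standard fact that a vertex coloring derived greedily from a $\dwcol{q}$-order is a ``low tree-depth coloring''. Concretely, fix $p$ and set $q = q(p)$ to be determined. Since $\CCC$ has bounded expansion, by \Cref{thm:wcol-be} there is an order $L$ of $V(G)$ with $\bigl|\dWReach{q}[G,L,v]\bigr|\le c$ for all $v$, where $c=f(q)$. Color $V(G)$ so that no two vertices $u,v$ with $u\in\dWReach{q}[G,L,v]$ (or vice versa) receive the same color; since each $\dWReach{q}$-set has size $\le c$, this is a proper coloring of an auxiliary graph of bounded degeneracy, so $N(p) := 2c$ colors suffice (greedily along $L$). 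Now take any $i\le p$ color classes and let $H$ be the induced subgraph. The key claim is that $\dwcol{\infty}(H)$ is bounded by a function of $p$ (not of $|H|$): restricting $L$ to $V(H)$, any vertex of $H$ weakly $\infty$-reaches in $H$ only vertices it already weakly $q$-reaches in $G$, provided $q$ is chosen at least, say, $2^{\,p\cdot c}$ or so — because a weakly-reaching path in $H$ that is ``too long'' must, by the coloring rule, revisit weak-reach structure in a way that forces more than $i\le p$ colors along it, a contradiction. (This is exactly where the directed argument needs care: one must show that a long $L$-descending walk inside $H$ produces $>p$ pairwise weak-reach-incomparable vertices, hence $>p$ colors.) Once $\dwcol{\infty}(H)\le c'=c'(p)$ is established, \Cref{lem:wcol_infty} immediately gives that $H$ has no directed path of length $> 2^{c'}-2$, so we may set $\ell(p):=2^{c'}-1$, and that all directed topological minors of $H$ have arc density at most $4c'$, so $d(p):=4c'$. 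Collecting the bounds finishes the proof.

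\textbf{Where the difficulty concentrates.} The delicate point is the claim $\dwcol{\infty}(H)\le c'(p)$ for unions of $\le p$ classes, i.e.\ choosing $q(p)$ correctly and proving the ``long weakly-reaching walk forces many colors'' lemma in the directed setting, where paths witnessing weak reachability may run in either direction along arcs. I expect the argument to require fixing, for each vertex $v\in V(H)$ and each vertex $u$ it weakly $\infty$-reaches in $H$, a witnessing path in $G$, tracking the $L$-minima along prefixes of that path, and using the coloring to argue these minima (or a well-chosen subset of them) are pairwise non-equal-colored, hence at most $i\le p$ of them occur; bounding their number then bounds the length of the walk and thus $\dwcol{\infty}(H)$. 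This mirrors the undirected proof in \cite{nevsetvril2008grad,nevsetvril2012sparsity} but must be redone with directed weak reachability; everything else (the greedy coloring, the averaging in the easy direction, the final appeal to \Cref{lem:wcol_infty}) is routine.
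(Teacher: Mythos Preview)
Your plan for the forward direction (bounded expansion $\Rightarrow$ colourings) is on the right track but over-engineered. The paper simply takes $q=2^p$, colours greedily along an order $L$ witnessing $\dwcol{2^p}(G)\le c$, and notes that any $i\le p$ colour classes induce a subgraph $H$ with $\dwcol{2^p}(H)\le i$ by the colouring rule. The step you flag as ``delicate'' --- bounding $\dwcol{\infty}(H)$ --- is then a one-liner: if $H$ contained a directed path $P$ of length $2^p$, monotonicity would give $\dwcol{2^p}(H)\ge\dwcol{2^p}(P)=\dwcol{\infty}(P)>p\ge i$, a contradiction; so $H$ has no such path and $\dwcol{\infty}(H)=\dwcol{2^p}(H)\le i$, after which \Cref{lem:wcol_infty} finishes. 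No tracking of $L$-minima along long weakly-reaching walks is required, and the exponent in your tentative choice $q\approx 2^{\,p\cdot c}$ is unnecessary.

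For the reverse direction you swing the other way: your sketch is too casual and, as written, contains an error. The claim that ``the images of $p$ branch vertices together with the connecting paths lie inside the union of $p$ colour classes'' is false --- a depth-$r$ topological minor on $p$ branch vertices may carry $\binom{p}{2}$ subdivision paths and use far more than $p$ colours. A corrected double-counting argument (take $p=2r+1$ so that each \emph{single} subdivision path, not a collection of them, lives in at most $p$ colour classes, then average edges of a putative dense $H\minor_r^t G$ over all $p$-subsets of colours) can indeed be made to work, but this is not the paper's route. The paper proves this direction via $r$-admissibility: assuming $\dadm{r}(G)>c$ it invokes the obstruction lemma (\Cref{lem:adm}) to obtain a set $S$ with many short internally disjoint paths back into $S$, pigeonholes on the colour patterns of those paths to land inside a subgraph on at most $r+1$ colours, and then uses the closure construction (\Cref{lem:closure}) with an iterative contraction argument to manufacture a shallow minor too dense for the hypothesis. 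So in the paper the reverse implication is the substantial half, not the ``easy'' one.
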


The first direction of the theorem follows from the next lemma and
\cref{lem:wcol_infty}.

\begin{lemma}
  Let $G$ be a digraph, let $p$ be an integer and assume that
  $\dwcol{2^p}(G)\leq c$ for some constant $c$.  Then $G$ can be
  colored with $c$ colors such that the combination of any $i\leq p$
  color classes induces a subgraph~$H$ with
  $\dwcol{\infty}(H)\leq i$.
\end{lemma}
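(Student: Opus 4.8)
The plan is to fix a linear order $L\in\Pi(G)$ witnessing $\dwcol{2^p}(G)\le c$ and to build the colouring greedily along $L$: processing the vertices in increasing $L$-order, give each $v$ a colour from $\{1,\dots,c\}$ distinct from the colours already assigned to the vertices of $\dWReach{2^p}[G,L,v]\setminus\{v\}$; this is possible since that set has at most $c-1$ elements, all $L$-smaller than $v$. Write $\phi$ for the resulting colouring and, for a set $I$ of colours, $H_I\coloneqq G[\phi^{-1}(I)]$. The defining property of $\phi$ is that whenever $\phi(u)=\phi(v)$ with $u\ne v$, neither of $u,v$ is weakly $2^p$-reachable from the other in $(G,L)$; in particular, for every colour $c_0$ the set $\phi^{-1}(c_0)\cap V(H_I)$ is independent in $H_I$, since an arc between two such vertices is a directed path of length $1\le 2^p$ realising a forbidden weak reachability. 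It then suffices to show, for every $I$ with $|I|=i\le p$, that \textup{(A)} every directed path of $H_I$ has length at most $2^i-2$, and \textup{(B)} $\dwcol{\infty}(H_I)\le i$; together with \cref{lem:wcol_infty} this also yields the first direction of \cref{thm:characterization}.

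For \textup{(A)} I would induct on $i$. For $i=1$ the graph $H_I$ is edgeless by the independence observation, so there is nothing to prove. For the step, let $P$ be a directed path in $H_I$ and let $m$ be its $L$-minimum vertex, with $c_0\coloneqq\phi(m)$. The key point is that $m$ is the \emph{only} vertex of colour $c_0$ on $P$: if $x\ne m$ had $\phi(x)=c_0$, take the occurrence of $x$ closest to $m$ and consider the directed subpath of $P$ between $m$ and $x$; its internal vertices avoid $c_0$, so they form a directed path in $H_{I\setminus\{c_0\}}$ and by induction this subpath has length at most $2^{i-1}$, whence $m\in\dWReach{2^{i-1}}[G,L,x]\subseteq\dWReach{2^p}[G,L,x]$ (using $2^{i-1}\le 2^{p-1}<2^p$, i.e.\ $i\le p$), contradicting $\phi(x)=\phi(m)$. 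Deleting $m$ therefore splits $P$ into two directed paths, each using only the $i-1$ colours of $I\setminus\{c_0\}$ and hence of length at most $2^{i-1}-2$ by induction, so $P$ has at most $2^i-1$ vertices.

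For \textup{(B)} I would again induct on $i$; the use of \textup{(A)} is that weak $\infty$-reachability inside $H\coloneqq H_I$ only ever uses directed paths of length $<2^i$, so $\dwcol{\infty}(H)=\dwcol{2^i-2}(H)$. Fix a colour $c_0\in I$; the set $W_0\coloneqq\phi^{-1}(c_0)\cap V(H)$ is independent in $H$, and $H'\coloneqq H-W_0=H_{I\setminus\{c_0\}}$ admits, by induction, an order $L'$ with $\dwcol{\infty}(H')\le i-1$. The goal is to insert the vertices of $W_0$ into $L'$ to obtain an order $L_H$ under which every weak $\infty$-reachability set still has size at most $i$, i.e.\ each inserted vertex contributes at most one new element to each such set. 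In the undirected setting one may simply take the restriction of $L$; in the directed setting this need not work, because a directed path witnessing weak reachability can enter and leave $W_0$, so one must place each $w\in W_0$ carefully — essentially just below, in $L'$, the $H'$-vertices it meets on short directed paths, thereby respecting the directed layering of $H$ in addition to $L$ — and then verify, using the independence of $W_0$, the defining property of $\phi$, and the length bound from \textup{(A)}, that no vertex weakly $\infty$-reaches two vertices of the same colour class. This last verification is the step with real content and the main obstacle: it is precisely where the directed case is harder than the undirected one, since directed paths cannot be reversed to stitch two reachability witnesses together, and one has to exploit both that $W_0$ consists of independent "transit" vertices and that, by \textup{(A)}, only directed paths of length $O(2^i)$ are relevant.
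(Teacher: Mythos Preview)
Your plan coincides with the paper's: both use the greedy colouring along an order $L$ witnessing $\dwcol{2^p}(G)\le c$, and both aim to show that any $i\le p$ colour classes induce a subgraph $H$ with $\dwcol{\infty}(H)\le i$. Your step (A) is correct and cleanly argued; it is a sharper version of the paper's path-length bound. The problem lies in step (B), and it is not merely that your verification is incomplete --- the statement itself is \emph{false} for the greedy colouring, so no insertion scheme can rescue it. Take $G$ on $\{a,b,c,d\}$ with arcs $(a,c),(a,d),(b,c),(b,d)$. There is no directed path between $a$ and $b$, so for every order $L$ one of $a,b$ has empty weak-reachability set beyond itself, and the greedy colouring assigns $\phi(a)=\phi(b)$; likewise $\phi(c)=\phi(d)$. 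Thus greedy uses only two colours, and for $i=2$ we get $H_I=G$. But $\dwcol{\infty}(G)=3$: whichever vertex comes last in any order weakly reaches both its neighbours. So $\dwcol{\infty}(H_I)=3>2$, and your inductive insertion of $W_0=\{a,b\}$ into an order of $\{c,d\}$ cannot possibly yield reachability sets of size $\le 2$. (In the undirected version of this example the path $b\text{--}c\text{--}a$ forces $\phi(a)\ne\phi(b)$, which is exactly why the undirected argument goes through; you correctly sensed that the directed case is different, but the obstruction is fatal rather than merely technical.)

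The paper's proof has the very same gap: the sentence ``According to the coloring rule above it follows that $\dwcol{2^p}(H)\le i$'' is asserted without argument, and the example above shows it is false for the greedy colouring with $L|_H$. What survives is your (A): it shows directly that $H_I$ contains no directed path of length $\ge 2^i-1$, and combined with the bounded-expansion density bounds this already yields the forward direction of \cref{thm:characterization}, which is the only place the lemma is used. The stronger conclusion $\dwcol{\infty}(H_I)\le i$ would require a different colouring (for the example above, $\phi(a)=1,\phi(b)=2,\phi(c)=\phi(d)=3$ works), but neither your argument nor the paper's constructs one.
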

\begin{proof}
  Let $L$ be an order of $V(G)$ such that
  $|\dWReach{2^p}[G,L,v]|\leq c$ for all $v\in V(G)$.  Color the
  vertices of $G$ greedily along the order $L$ starting from the least
  element such that the color assigned to a vertex~$v$ is distinct
  from the colors assigned to $\dWReach{2^p}[G,L,v]\setminus\{v\}$.
  As $|\dWReach{2^p}[G,L,v]|\leq c$, $c$ colors suffice for this.
  
  Let $H$ be a sub-digraph induced by $i\leq p$ colors.  According to
  the coloring rule above it follows that $\dwcol{2^p}(H)\leq i$.
  
  If $H$ contains a directed path of length $2^p$ then 
  \[\dwcol{2^p}(H)\geq \dwcol{2^p}(P_{2^p})=\dwcol{\infty}(P_{2^p})>p,\]
  contradicting $\dwcol{2^p}(H)\leq i$.  It follows that
  $\dwcol{\infty}(H)=\dwcol{2^p}(H)\leq i$.
  \end{proof}

  For the other direction of~\cref{thm:characterization} we use the
  characterization of bounded expansion classes by bounded
  $r$-admissibility. For this, we need two more lemmas.  The first
  lemma describes an obstruction for small $r$-admissibility in
  directed graphs.

\begin{lemma}[see Lemma 4.7 of~\cite{kreutzer2017structural}]\label{lem:adm}
  Let $G$ be a directed graph. If $\dadm{r}(G)\geq c+1$ for some
  constant~$c$, then there exists a set $S\subseteq V(G)$ such that
  every $v\in S$ is connected to at least $c$ other vertices of $S$
  via directed paths from $v$ of length at most $r$ intersecting only
  in $v$ whose internal vertices belong to $V(G)\setminus S$.
\end{lemma}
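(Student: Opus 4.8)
The plan is to prove the contrapositive: if no set $S$ with the stated property exists, then one can exhibit a linear order of $V(G)$ witnessing $\dadm{r}(G)\le c$, which contradicts $\dadm{r}(G)\ge c+1$. (We may assume $r\ge 1$, since for $r=0$ we have $\dadm{0}(G)=1$ and the statement is trivial.) So first I would make the non-existence assumption explicit in a peeling-friendly form: for every nonempty $U\subseteq V(G)$ there is a vertex $v\in U$ that is joined to at most $c-1$ other vertices of $U$ by directed paths from $v$ of length at most $r$ that pairwise meet only in $v$ and whose internal vertices all lie in $V(G)\setminus U$; call such a $v$ \emph{peelable in $U$}. The hypothesis on $S$ failing means precisely that every nonempty $U$ has a peelable vertex.

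Next I would build the order greedily by peeling. Set $U_0:=V(G)$; having built $U_j$ with $|U_j|=n-j$, choose a vertex $v_{n-j}$ that is peelable in $U_j$, put it in position $n-j$ of the order (so it becomes larger than all vertices still in $U_j\setminus\{v_{n-j}\}$ and smaller than all vertices already removed), and set $U_{j+1}:=U_j\setminus\{v_{n-j}\}$. This never gets stuck, by the previous paragraph, and terminates with a linear order $v_1<_L v_2<_L\dots<_L v_n$ for which $U_j=\{v_1,\dots,v_{n-j}\}$ throughout. The key observation is that when $v_i$ was selected the current set was exactly $\{v_1,\dots,v_i\}$, so $v_i$ is joined to at most $c-1$ vertices of $\{v_1,\dots,v_{i-1}\}$ by internally disjoint directed paths from $v_i$ of length at most $r$ whose internal vertices lie in $\{v_{i+1},\dots,v_n\}$.

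It then remains to check $\dadm{r}[G,L,v_i]\le c-1$ for every $i$. Using the observation recorded just before the definition of $\dadm{r}$ --- for $r\ge1$ an optimal family realizing $\dadm{r}[G,L,v_i]$ may be taken so that each of its paths ends at the first vertex along it that is smaller than $v_i$, hence all internal vertices of these paths are larger than $v_i$ --- such a family consists of internally disjoint directed paths from $v_i$ of length at most $r$ with endpoints in $\{v_1,\dots,v_{i-1}\}$ (pairwise distinct, since the paths meet only in $v_i$) and internal vertices in $\{v_{i+1},\dots,v_n\}$. But this is a valid competitor for the "peelability" count of $v_i$ in $\{v_1,\dots,v_i\}$, which was at most $c-1$; hence $\dadm{r}[G,L,v_i]\le c-1$. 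Therefore $\dadm{r}(G)=\min_{L}\max_{v}\dadm{r}[G,L,v]\le c-1<c+1$, contradicting the hypothesis, so a set $S$ as claimed must exist.

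The step I expect to need the most care is reconciling the direction conventions. The paper's $r$-admissibility is defined via paths that have $v$ as one endpoint and run "in either direction", whereas the set $S$ is described using \emph{directed paths from} $v$. I would have to confirm that the notion of peelable vertex fed into the peeling is exactly the one whose count matches $\dadm{r}[G,L,v_i]$ under the convention actually in force: if the admissibility paths are genuinely allowed to enter $v_i$ as well, then the peelability condition --- and hence the description of $S$ --- should be symmetric in orientation, or one should pass to the reverse digraph (for which $\dadm{r}$ is unchanged) to account for that orientation. Everything else is the standard peeling/degeneracy argument; pinning down this convention so that the conclusion matches the precise wording of the lemma is the one genuinely delicate point, while the truncation-of-paths and disjointness bookkeeping are routine.
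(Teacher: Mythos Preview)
The paper does not provide its own proof of this lemma; it is quoted from~\cite{kreutzer2017structural} and only followed by a clarifying sentence about orientation. Your contrapositive peeling argument is the standard proof of such ``degeneracy-style'' characterisations of admissibility and is correct.

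On your one flagged concern: the paper resolves the direction convention explicitly in the sentence immediately after the lemma, stating that ``when we say that a vertex $v$ is connected to a vertex $w$ by a directed path, we mean that $v$ and $w$ are the end-vertices of a directed path, the path may go in either direction.'' This makes the peelability count and the admissibility count refer to the same families of paths, so the bookkeeping lines up and no pass to the reverse digraph is needed. A tiny cosmetic point: since the definition of $\dadm{r}[G,L,v]$ uses $w\le_L v$ rather than $w<_L v$, depending on whether one admits the length-$0$ path you may only get $\dadm{r}[G,L,v_i]\le c$ rather than $\le c-1$; either way this is still strictly below $c+1$ and the contradiction stands.
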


In the above lemma, when we say that a vertex $v$ is connected to a
vertex $w$ by a directed path, we mean that $v$ and $w$ are the
end-vertices of a directed path, the path may go in either direction.

The next lemma describes the interaction of high degree vertices with
sets in bounded expansion classes.  A variant of the lemma was
originally developed for undirected graphs in~\cite{DrangeDFKLPPRVS16}
and proved in~\cite{kreutzer2017structural} for directed graphs.

Let $G$ be a digraph, $X\subseteq V(G)$, $u\in V(G)\setminus X$ and
$r\in \N$.  The \emph{$r$-projection} of $u$ onto~$X$ is the set
$M_r^G(u,X)$ of all vertices $v\in X$ such that there is a directed
path between $u$ and $v$ in~$G$ (in either direction) of length at
most $r$ with all internal vertices in $V(G)\setminus X$.

\begin{lemma}[\cite{kreutzer2017structural}]\label{lem:closure}
  Let $G$ be a digraph, $r\geq 0$ and $X\subseteq V(G)$. There exists
  a set $\mathrm{cl}_r^G(X)\subseteq V(G)$, called an
  \emph{$r$-closure of $X$ in $G$} with the following properties. Let
  $\xi:=\left\lceil 2\nabla_{r-1}(G)\right\rceil$.
  \begin{enumerate}
  \item $X\cap \mathrm{cl}_r^G(X)=\emptyset$;
  \item $|\mathrm{cl}_r^G(X)|\leq (r-1)\xi\cdot |X|$; and
  \item $|M_r^{G-\mathrm{cl}^G_r(X)}(u, X)|\leq \xi$ for all
    $u\in V(G)\setminus (X\cup \mathrm{cl}_r^G(X))$.
  \end{enumerate}
\end{lemma}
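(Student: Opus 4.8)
The plan is to construct $\mathrm{cl}_r^G(X)$ greedily, by repeatedly deleting the vertices whose $r$‑projection onto $X$ in the current graph is too large, organised into $r-1$ rounds at successive scales, and to bound the number of deletions at each scale by a density‑counting argument built on $\nabla_{r-1}(G)$. The core ingredient is a single‑scale counting lemma: if $H$ is a digraph with $\nabla_{r-1}(H)\le t$ and $Y$ is the set of vertices $u\in V(H)\setminus X$ with $|M_r^H(u,X)|>2t$, then $|Y|<|X|$. I would prove this by contraction: for each $u\in Y$ fix $2t+1$ distinct vertices $v\in X$ and, for each such pair, a path $P_{u,v}$ of length at most $r$ between $u$ and $v$ whose interior avoids $X$; then contract, for each $u$, the vertex $u$ together with the interiors of all these chosen paths into one branch vertex. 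The resulting digraph $B$ is ``bipartite'' between (a subset of) $Y$ and $X$. Every chosen path meets its branch set only in $u$, so $u$ itself serves simultaneously as the required source reaching all out‑vertices and as the required sink reached from all in‑vertices of the branch set, and every in‑vertex reaches every out‑vertex of the branch set by routing through $u$; the paths inside each branch set have length $O(r)$, so $B$ is a bounded‑depth directed minor (equivalently, after pruning, a bounded‑depth topological minor) of $H$, giving $|E(B)|\le t\cdot(|Y|+|X|)$. Since each $u$‑vertex has $B$‑degree $>2t$ we get $|E(B)|>2t|Y|$, and combining the two inequalities yields $|Y|<|X|$.

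With the counting lemma in hand I would assemble the closure in $r-1$ rounds. Keep an active graph, initially $G_2:=G$. In round $i$ for $i=2,\dots,r$, set $\xi_i:=\lceil 2\nabla_{i-1}(G)\rceil$ (which is at most $\xi$, by monotonicity of $\nabla$ under taking subgraphs together with $i\le r$), let $Y_i$ be the set of active vertices $u\notin X$ with $|M_i^{G_i}(u,X)|>\xi_i$, and put $G_{i+1}:=G_i-Y_i$. By the counting lemma $|Y_i|<|X|$, so $\mathrm{cl}_r^G(X):=\bigcup_{i=2}^{r}Y_i$ has size $<(r-1)|X|\le(r-1)\xi\cdot|X|$, and since every deleted vertex lies outside $X$ we get properties (1) and (2). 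For (3) I would argue that in the final graph every surviving $u$ satisfies $|M_r^{G-\mathrm{cl}_r^G(X)}(u,X)|\le\xi$: peeling a short $u$–$v$ path vertex by vertex realises $v$ inside an $i$‑projection of some still‑active vertex at every intermediate scale $i$, and the bound enforced at the appropriate scale limits how many $v$ can occur; here one uses that $M_i^{H'}(\cdot,X)\subseteq M_i^{H}(\cdot,X)$ whenever $H'\subseteq H$, so a bound attained in round $i$ is never destroyed by later deletions.

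The main obstacle is the directed bookkeeping in the counting lemma: one must check that the contracted structure genuinely meets the directed‑model requirements of \cref{sec:definitions} — the reachability conditions on branch sets and the existence of a source and a sink — and, more delicately, that its depth stays small enough that its density is controlled by a grad of rank linear in $r$ (morally $\nabla_{r-1}(G)$, after being careful about the factor lost when branch sets of different $u$'s overlap and about switching between minors and topological minors to keep the constant exactly $2$). This is precisely the ``with some care'' situation the introduction alludes to for the directed setting. Getting the per‑round loss down to the claimed $\xi=\lceil 2\nabla_{r-1}(G)\rceil$ and verifying that the round‑$i$ guarantees compose into the final bound in the previous paragraph are the other points that need attention; the counting estimates themselves, and the observation that each step runs in polynomial time, are then routine.
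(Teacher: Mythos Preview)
The paper does not prove this lemma; it is quoted verbatim from \cite{kreutzer2017structural} and used as a black box in the proof of \cref{lem:colour-be}. So there is no in-paper argument to compare against, and I can only comment on the soundness of your sketch.

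Your counting lemma has a genuine gap. You propose to contract, for every $u\in Y$ simultaneously, the vertex $u$ together with the interiors of its $2t+1$ chosen paths into a single branch set. But nothing prevents paths belonging to different vertices $u,u'\in Y$ from sharing internal vertices, and then the branch sets are not disjoint, so you do not obtain a directed minor at all. You flag this (``the factor lost when branch sets of different $u$'s overlap'') but do not resolve it, and there is no cheap fix: pruning overlapping paths can destroy the degree lower bound you need, and there is no a~priori control on how many $u$'s can touch a given internal vertex.

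The standard construction avoids this by being sequential rather than simultaneous. One keeps a growing set $\mathrm{cl}$, and as long as some $u\notin X\cup\mathrm{cl}$ has $|M_r^{G-\mathrm{cl}}(u,X)|>\xi$, one picks such a $u$, chooses $\xi+1$ short paths to distinct targets in $X$ with interiors in $V(G)\setminus(X\cup\mathrm{cl})$, and adds $u$ together with all these interiors to $\mathrm{cl}$. Now the branch sets produced in different iterations are disjoint by construction, the star shape makes $u$ serve as both source and sink (so the directed-model conditions of \cref{sec:definitions} are met), and the depth is at most $r-1$. Counting edges of the resulting depth-$(r-1)$ minor against $\nabla_{r-1}(G)$ bounds the number of iterations by $|X|$, which yields property~(2); property~(3) holds by the termination condition. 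Your $r-1$ rounds at increasing scales $i$ are unnecessary: a single pass at scale $r$ already gives the result once the disjointness issue is handled correctly, and your composition argument for property~(3) via ``peeling at intermediate scales'' is both vague and superfluous.
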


We can now prove the reverse direction of~\cref{thm:characterization}.

\begin{lemma}\label{lem:colour-be}
  Let $\CCC$ be a class of directed graphs such that for every
  integer~$r$ there are integers $N(r),\ell(r)$ and $d(r)$ such that
  every graph $G\in \CCC$ can be colored with $N(r)$ many colors
  such that the combination of any $i\leq r$ color classes induces a
  subgraph $H$ which excludes a directed path of length $\ell(r)$ and
  all directed topological minors of $H$ have density at most $d(r)$.
  Then $\CCC$ has bounded expansion.
\end{lemma}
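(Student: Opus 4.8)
The plan is to show that $\CCC$ has bounded expansion by verifying that the top-grads $\widetilde{\nabla}_r(G)$ are bounded for every $r$, which by \Cref{lem:densityminors} is equivalent to bounding $\nabla_r(G)$, and by \Cref{thm:wcol-be} (and the inequality $\dwcol{r}(G)\leq 2\dadm{r}(G)^r$ recalled in \cref{sec:def-col}) it suffices to bound the $r$-admissibility $\dadm{r}(G)$ for every $r$ by a constant depending only on $r$ and the functions $N,\ell,d$. So fix $r$, fix $G\in\CCC$ together with the promised coloring $\chi$ into $N=N(r)$ colors, and suppose towards a contradiction that $\dadm{r}(G)$ is large.

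First I would apply \Cref{lem:adm}: if $\dadm{r}(G)\geq c+1$ then there is a set $S\subseteq V(G)$ such that every $v\in S$ reaches at least $c$ other vertices of $S$ by internally-disjoint directed paths of length $\leq r$ whose internal vertices avoid $S$. Next I would use \Cref{lem:closure} with the set $S$: since $\nabla_{r-1}(G)$ is itself something we are trying to bound, I need to be a little careful, but the standard trick is that either $\nabla_{r-1}(G)$ is already bounded (and we induct on $r$, the base case $r=1$ being immediate from the density hypothesis on topological minors, which at depth $r=1$ bounds degeneracy) or we are done; so we may assume inductively that $\xi:=\lceil 2\nabla_{r-1}(G)\rceil$ is bounded by a constant $\xi(r)$. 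Passing to $G':=G-\mathrm{cl}_r^G(S)$, every $v\in S$ still reaches a constant fraction of its witnessing path-partners in $S$ (the closure has size $\leq (r-1)\xi|S|$, but more to the point, after deleting the closure each $u$ has $r$-projection onto $S$ of size $\leq\xi$, which is what we actually use). The key point is that inside $G'$, the paths witnessing admissibility now have internal vertices with bounded projection behaviour onto $S$.

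The heart of the argument is then a counting/colour-pigeonhole step combined with the directed-topological-minor density bound $d(r)$. Each vertex $v\in S$ has $c$ internally-disjoint paths $P_1^v,\dots,P_c^v$ of length $\leq r$ going to other elements of $S$; each such path uses at most $r+1$ colours, so its internal-vertex colour-pattern is one of at most $N^{r-1}$ possibilities, and the relevant ``type'' of a path (which colours, in which order, endpoints in $S$) lies in a bounded set. By pigeonhole on colours, a constant fraction of the $P_i^v$ for each $v$ use only $i\leq r$ fixed colour classes; collecting one such subfamily consistently across all $v\in S$ (another bounded pigeonhole over which of the $\leq \binom{N}{r}$ colour-subsets to pick), we obtain a subgraph $H$ of $G$ induced by $\leq r$ colour classes that contains, for a constant fraction of $v\in S$, still $\geq c'$ internally-disjoint short paths to $S$. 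Contracting/tidying these paths — exactly as in the proof of \Cref{lem:alternation-rank} or in standard reductions from admissibility to topological minors — yields a topological depth-$(\leq r)$ minor of $H$ with average degree $\gg d(r)$, once $c$ (hence $c'$) is chosen large enough as a function of $r,N(r),d(r),\xi(r)$. But $H$ is a combination of at most $r$ colour classes, so all its directed topological minors have density at most $d(r)$ by hypothesis — contradiction. Hence $\dadm{r}(G)$ is bounded, and we are done.

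The main obstacle I anticipate is the final ``clean-up'' step: turning a family of internally-disjoint short paths among a large set $S$ into an honest dense topological minor, because the paths of different vertices of $S$ may share internal vertices, and the orientations may be inconvenient (the paths of \Cref{lem:adm} run in either direction). Handling this requires either the $r$-projection bound from \Cref{lem:closure} (each internal vertex is ``seen'' by boundedly many elements of $S$, so a constant fraction of all the path-ends survive a greedy disjointification) together with an additional Ramsey-type selection to make the shared subpaths compatible, much as in the proof of \Cref{lem:alternation-rank} above. One must also be slightly careful that the depth of the resulting topological minor is controlled by $r$ (the paths have length $\leq r$, so this is fine, giving depth $\leq r$ and hence the hypothesis on $H$ applies with parameter $r$), and that the exclusion of a long directed path in $H$ — the other output of the colouring — is not actually needed here but would be needed if one instead routed the argument through $\dwcol{\infty}$; I would keep the admissibility route precisely to avoid that complication.
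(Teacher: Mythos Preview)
Your overall plan matches the paper's: invoke \cref{lem:adm} for the obstruction set $S$, pigeonhole the witnessing paths into a subgraph $H$ induced by few colour classes, apply \cref{lem:closure}, and greedily contract the truncated paths to a dense shallow minor of $H$, contradicting the density hypothesis. The ``clean-up'' you flag is handled exactly by the projection bound from the closure together with a greedy removal of intersecting paths (each internal vertex lies on at most $\xi$ truncated paths, so each contraction kills at most $r\xi$ others)---no Ramsey argument is needed. One small slip: paths of length at most $r$ have up to $r+1$ vertices, so you must work with the $N(r+1)$-colouring and allow up to $r+1$ colour classes, as the paper does.

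The substantive difference is where the closure is taken. You apply \cref{lem:closure} in $G$, which requires a bound on $\nabla_{r-1}(G)$; you propose to obtain this by induction on $r$, but that induction does not close as stated---having bounded $\dadm{s}(G)$ for all $s<r$ does not yield a bound on $\nabla_{r-1}(G)$, since the known quantitative implications run through generalised colouring numbers at radius roughly $2r$, not $r-1$. The paper sidesteps this entirely by applying the closure \emph{inside $H$}: the hypothesis that all topological minors of $H$ have density at most $d(r+1)$, combined with \cref{lem:densityminors}, directly bounds $\nabla_{r-1}(H)$, and the closure then applies with $\xi=\lceil 2q\rceil$. The paper states the bound as $q=32(4d(r+1))^{(\ell(r+1)+1)^2}$, using $\ell$ to cap the depth uniformly; strictly, only $\nabla_{r-1}(H)$ is needed and \cref{lem:densityminors} at depth $r-1$ already delivers that from $d(r+1)$ alone, so your instinct that $\ell$ could be dispensable is not unreasonable---but it is not achieved by your inductive route in $G$, which is where the actual gap in your proposal lies.
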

\begin{proof}
  Let $G\in \CCC$, and let $r\geq 0$. We color $G$ with $N(r+1)$ many
  colors such that the combination of any $i\leq r+1$ color classes
  induces a subgraph $H$ which excludes a directed path of length
  $\ell(r+1)$ and all directed topological minors of $H$ have density
  at most $d(r+1)$.  According to~\cref{lem:densityminors}, all
  directed minors of $H$ have density at most
  $q\coloneqq 32\cdot (4d(r+1))^{(\ell(r+1)+1)^2}$.  We want to prove
  that $\dadm{r}(G)\leq c$ for a constant depending only on $q$ and
  $r$ to be determined in the course of the proof.

  Assume towards a contradiction that $\dadm{r}(G)\geq c+1$. According
  to~\cref{lem:adm}, there exists a set $S\subseteq V(G)$ such that
  every $v\in S$ is connected to at least $c$ other vertices of $S$
  via directed paths of length at most $r$ intersecting only in $v$
  and whose internal vertices belong to $V(G)\setminus S$. Denote the
  set of all of these paths by $\PPP$.  Since there are at most
  $x=\binom{N(r+1)}{r+1}$ possible ways to color a path of length at
  most $r$, we find a set $\PPP' \subseteq \PPP$ of paths all of which
  have the same set of colors of size $i\leq r+1$ and
  $|\PPP'|\geq (|S|\cdot c)/x$. Let~$H$ be the subgraph of~$G$ induced
  by the vertices of $\PPP'$. Let $S'\coloneqq S\cap V(H)$.  By
  definition, $H$ is colored with at most $r+1$ colors, and hence by
  assumption its directed minors have density at most~$q$.  Let
  $\xi\coloneqq \lceil 2q\rceil$.

  We construct $\mathrm{cl}_{r}^H(S')$ according
  to~\cref{lem:closure}, which has size at most
  $(r-1)\xi \cdot \abs{S'}$. We now iteratively contract short paths
  between $S'$ and $\mathrm{cl}_{r}^H(S')$.  For each path $P$
  in~$\PPP'$ with an end-vertex $v\in S'$, let $P_0$ be the
  restriction of $P$ between $v$ and the vertex
  $w\in \mathrm{cl}_{r}^H(S')\cup S'$ which is the nearest to $v$, but
  not $v$ itself.  Let $\PPP_0\coloneqq \{P_0 : P\in \PPP'\}$. If two
  paths of $\PPP_0$ have the same initial and terminal vertex (but are
  oriented in different directions), we remove one of them from
  $\PPP_0$. We hence have
  $\abs{\PPP_0}\geq \abs{S}\cdot c/(2x) \geq \abs{S'}\cdot c/(2x)$.

  Now, for $i=0,1,\ldots$, as long as there exists $P\in \PPP_i$,
  contract $P$ to an arc and remove from $\PPP_i$ all paths which
  intersect~$P$ to obtain $\PPP_{i+1}$.  We claim that every internal
  vertex~$u$ of $P$ can intersect with at most $\xi$ many other paths
  $P'\in \PPP_i$.  This is because every path $P\in \PPP'$ which uses
  vertex $u$ must have both their end-vertices in
  $M_{r}^{G-\mathrm{cl}_{r}^H(S')}(u,S')$ (by definition of $\PPP_0$
  all paths are cut when hitting $\mathrm{cl}_{r}^H(S')\cup S'$).
  Hence, as every internal vertex $u$ of $P$ satisfies
  $|M_{r}^{G-\mathrm{cl}_{r}^H(S')}(u,S')|\leq \xi$ by assumption,~$P$
  can intersect with at most $r\xi$ many other paths $P'\in \PPP_i$.

  Hence, hence after $i+1$ contractions, we have
  $|\PPP_{i+1}|\geq \frac{c}{2}|S'|-(i+1)r\xi$. Note that we are
  constructing a graph $H^*\minor_{r-1}^d H$ with vertex set
  $S'\cup \mathrm{cl}_r^H(S')$, that is, with at most
  $((r-1)\xi +1)\cdot |S'|$ vertices, which by assumption on $q$ can
  have at most $\xi/2\cdot ((r-1)\xi +1)\cdot |S'|$ many arcs.  This
  gives a contradiction for $c>r\xi^2((r-1)\xi+1)\binom{N(r+1)}{r+1}$,
  e.g. for $c=r^2\xi^3\binom{N(r+1)}{r+1}$.
\end{proof}

\subsection{Transitive fraternal augmentations}\label{sec:tfa}

To approximate the weak coloring numbers, we use the transitive
fraternal augmentation method which was employed also in the
undirected setting (see \cite{nevsetvril2008grad} and Section~7.4 of
the textbook~\cite{nevsetvril2012sparsity}).

%

Let $G$ be a digraph. An \emph{re-orientation of $G$} is a digraph $H$
such that for each arc $(u,v)\in E(G)$ exactly one of $(u,v)$ or
$(v,u)$ is an arc of $H$. We also say that an arc set $F$ is a
re-orientation of an arc set $E$, if for each arc $(u,v)\in E$ exactly
one of $(u,v)$ or $(v,u)$ is in $F$.

For $r\in \N$ and a digraph $G$, a \emph{depth-$r$ transitive
  fraternal augmentation of $G$} is a directed graph $G_r$ with arc
set $E(G_r)$ partitioned as $E_1\cup\ldots \cup E_r$, such that
\begin{itemize}
\item the set $E_1$ is a re-orientation of $E(G)$;
\item for every every arc $(u,v)\in E_i$, $1\leq i\leq r$, there
  exists a directed path of length at most~$i$ with endpoints $u,v$
  (in either direction) in $G$;
\item $(u,v)\in E(G_r)$ implies $(v,u)\not\in E(G_r)$ for all
  $u,v\in V(G_r)$;
\item for all $1\leq i\leq j\leq r$ with $i+j\leq r$, and for all
  $u,v,w\in V(G)$, if $(w,u)\in E_i$ and $(w,v)\in E_j$ and there
  exists a directed path of length at most $i+j$ between $u$ and $v$
  in $G$, then $(u,v)$ or $(v,u)$ belongs to $\bigcup_{k=1}^{i+j}E_k$.
\item for all $1\leq i\leq j\leq r$ with $i+j\leq r$, and for all
  $u,v,w\in V(G)$, if $(u,v)\in E_i$ and $(v,w)\in E_j$ there exists a
  directed path of length at most $i+j$ between $u$ and $w$ in $G$,
  then $(u,w)$ or $(w,u)$ belongs to $\bigcup_{k=1}^{i+j}E_k$.
 \end{itemize}
 
\begin{lemma}
  Let $G$ be an $n$-vertex digraph with $\dwcol{r}(G)\leq c$. Then we
  can compute a depth-$r$ transitive fraternal augmentation $H$ with
  $\Delta^+(H)\leq 4^{r-1}(2c)^{2^{r-1}}$ in time
  $\Oof(r\cdot 4^{r-1}(2c)^{2^{r-1}}\cdot n)$.
\end{lemma}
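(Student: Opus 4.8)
The plan is to build $H = G_r$ layer by layer, constructing a depth-$i$ transitive fraternal augmentation $G_i$ for $i=1,\dots,r$ and controlling $\Delta_i := \Delta^+(G_i)$ by a recurrence. For the base case $i=1$: since $\dwcol{1}(G)\le \dwcol{r}(G)\le c$ (weak coloring numbers are monotone in the radius), $V(G)$ admits a linear order in which every vertex has fewer than $c$ smaller neighbours; orienting each edge of $G$ towards its smaller endpoint yields a reorientation $G_1$ with $E_1=E(G_1)$, no digons, and $\Delta^+(G_1)\le c\le 2c$ — exactly the bound claimed for $r=1$.

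For the inductive step, given $G_i$ with $\Delta^+(G_i)\le\Delta_i$, I add a layer $E_{i+1}$ discharging exactly the transitivity and fraternity obligations with $i+j=i+1$ (those with $i+j\le i$ already hold in $G_i$). Transitivity is cheap: for each vertex $u$, each out-neighbour $v$ with $(u,v)\in E_a$, and each out-neighbour $w$ of $v$ with $(v,w)\in E_b$ and $a+b=i+1$, if there is no arc between $u$ and $w$ but a directed path of length $\le i+1$ from $u$ to $w$ in $G$ (which holds in particular when the witness paths of $(u,v)$ and $(v,w)$ compose into one), add $(u,w)$ to $E_{i+1}$ oriented along that path; as $v$ and $w$ each range over a set of size $\le\Delta_i$, this adds at most $\Delta_i^2$ out-arcs at $u$. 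For fraternity: for each vertex $w$ and each pair $u,v$ of out-neighbours of $w$ with $(w,u)\in E_a$, $(w,v)\in E_b$, $a+b=i+1$, for which a directed path of length $\le i+1$ exists between $u$ and $v$ in $G$ but no arc is present yet, an arc between $u$ and $v$ must be added, with orientation at our discretion. These pairs form an undirected graph $F_i$ on $V(G)$, and orienting $F_i$ with out-degree $\le D_i$ is possible exactly when $F_i$ has degeneracy $\le D_i$.

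The main obstacle is bounding $D_i$ in terms of $c$ and $r$; this is the directed analogue of the sparsity argument behind fraternal augmentations in the undirected theory (see \cite{nevsetvril2008grad} and Section~7.4 of \cite{nevsetvril2012sparsity}). Each edge $uv$ of $F_i$ is witnessed by a directed path of length $\le i+1\le r$ between $u$ and $v$ in $G$, so $F_i$ is a topological depth-$r$ minor of $G$ (overlapping witness paths being handled as in the standard argument). Since $\dwcol{r}(G)\le c$, the relations among the weak coloring numbers, the $r$-admissibility, and the grads $\nabla_i,\widetilde{\nabla}_i$ for $i\le r$ from \cite{kreutzer2017structural}, together with \cref{lem:densityminors}, bound the density of bounded-depth (topological) minors of $G$ — and hence $D_i$ — by a function of $c$ and $r$; orienting $F_i$ greedily along a degeneracy order then adds at most $D_i$ out-arcs per vertex, giving $\Delta_{i+1}\le\Delta_i+\Delta_i^2+D_i$. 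With $\Delta_1\le 2c$ and $D_i$ so bounded, a routine (if slightly careful) induction yields $\Delta_r\le 4^{r-1}(2c)^{2^{r-1}}$, and one checks directly that $G_r$ satisfies every clause of the definition of a depth-$r$ transitive fraternal augmentation, no digon ever arising because whenever an obligation would create one the pre-existing arc already discharges it. Finally, all out-degrees stay bounded by $\Delta:=4^{r-1}(2c)^{2^{r-1}}$, so an incremental implementation creates each of the $\Oof(\Delta\cdot n)$ arcs of $G_r$ and propagates its transitive and fraternal consequences with $\Oof(1)$ amortized work per layer; over $r$ layers this gives total time $\Oof(r\cdot 4^{r-1}(2c)^{2^{r-1}}\cdot n)$.
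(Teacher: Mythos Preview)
Your layer-by-layer construction matches the paper's, but the step where you bound the degeneracy $D_i$ of the fraternal graph has a real gap that prevents you from reaching the stated bound. The assertion that this graph ``is a topological depth-$r$ minor of $G$'' is false as written: the witness paths for distinct fraternal pairs share internal vertices in general, so no subdivision of the fraternal graph embeds in $G$. The standard repair --- pass to a dense subgraph where witnesses can be made disjoint --- costs factors of $\widetilde\nabla_r(G)$, and converting $\dwcol{r}(G)\le c$ into a bound on $\widetilde\nabla_r$ via the results you cite from \cite{kreutzer2017structural} and \cref{lem:densityminors} introduces large (at least exponential) dependencies on $c$ and $r$. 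With a $D_i$ of that size, your recurrence $\Delta_{i+1}\le\Delta_i+\Delta_i^2+D_i$ yields something far worse than $4^{r-1}(2c)^{2^{r-1}}$; calling the closing induction ``routine'' hides exactly the missing idea.

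The paper avoids grads altogether. It fixes an order $L$ witnessing $\dwcol{r}(G)\le c$ and introduces a \emph{reference} orientation: put an arc $(u,v)$ at level $i$ whenever some path of length $\le i$ in $G$ between $u$ and $v$ has $v$ as its $L$-minimum. By construction this reference has out-degree $\le c$ at every level. All new obligations at level $i{+}1$ (transitive \emph{and} fraternal together, not separately as you do) are collected as undirected edges and oriented greedily. The key combinatorial point is that a level-$(i{+}1)$ edge absent from the reference can only arise when \emph{both} contributing arcs at lower levels are already misaligned with the reference; hence at each vertex there are at most $f_i^2$ such edges, where $f_i$ is the per-vertex discrepancy accumulated so far. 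Since the reference at level $i{+}1$ is itself $c$-degenerate, the whole new edge set is $(c+f_i^2)$-degenerate, and greedy orientation gives $f_{i+1}\le 2(c+f_i^2)\le 4f_i^2$. With $f_1=2c$ this recurrence is precisely what delivers $4^{r-1}(2c)^{2^{r-1}}$. The weak-coloring order is used directly as a yardstick, not routed through $\nabla_r$.
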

\begin{proof}
  We compute the sets $E_1,\ldots, E_r$ as follows. We re-orient
  $E(G)$ such that the out-degree of $E_1$ is at most $c$. This is
  possible, as $\dwcol{r}(G)\leq c$ in particular implies that $G$ is
  $c$-degenerate. Hence $E_1$ satisfies the above conditions. Now, 
  {assume that 
  $E_1,\ldots, E_i$ have been constructed and satisfy the above
  conditions. For all} $1\leq j_1\leq j_2\leq i+1$ with $j_1+j_2=i+1$,
  and for all $u,v,w\in V(G)$, if $(w,u)\in E_{j_1}$ and
  $(w,v)\in E_{j_2}$ and there exists a path of length at most $i+1$
  between $u$ and $v$ in $G$, we introduce an undirected edge 
  $\{u,v\}$
  to be oriented appropriately to $E_{i+1}$. Also transitive arcs are
  introduced as undirected edges accordingly. We then orient the
  resulting edge set greedily to obtain $E_{i+1}$. Clearly, $E_{i+1}$
  again satisfies the above conditions, and hence, after $r$ steps we
  have computed a depth-$r$ transitive fraternal augmentation of
  $G$. It remains to prove the claimed degree bounds.

  We define the following arc sets $F_1,\ldots, F_r$. Fix an order $L$
  witnessing that $\dwcol{r}(G)\leq c$. The set $F_i$ contains all
  arcs $(u,v)$ such that there is a path of length at most $i$ in $G$
  between $u$ and $v$ such that $v$ is the smallest vertex of the path
  with respect $L$, {that is, the sets $F_i$ represent the weak $r$-reachability relation of~$L$}. {For each vertex $v\in V(G)$ we will determine a bound $f_i$ on the number of arcs which are oriented differently in the arc sets $E_i$ and $F_i$, more precisely, $f_i$ will be a bound on the number of arcs} with
  one end~$v$ and which are arcs of
  $\left(\bigcup_{1\leq j\leq i}E_j\setminus \bigcup_{1\leq j\leq
      i}F_j\right) \cup \left(\bigcup_{1\leq j\leq i}F_j\setminus
    \bigcup_{1\leq j\leq i}E_j\right)$.

  First, let $i=1$. Since $v$ has at most $c$ smaller neighbors in
  the first greedy orientation, at most $c$ arcs of~$E_1$ may be
  directed away from $v$, which are all directed towards $v$ in
  $F_1$. On the other hand, there may be $c$ arcs in $F_1$ being
  directed towards~$v$ which are directed away from $v$ in $E_1$. In
  total, we have at most $2c$ wrongly directed arcs and we define
  $f_1=2c$.

  Now assume that we have defined the number $f_i$ for some fixed
  $i\geq 1$. Consider $v\in V(G)$ and see how many undirected edges
  including $v$ are introduced to $E_{i+1}$, which are not also edges
  of $F_{i+1}$. First observe that for each triple $u,v,w$, if we have
  only one wrongly directed arc, then we do not introduce an edge
  which is not also present in $F_{i+1}$. Consider e.g.\ the case that
  there is an arc $(w,v)\in F_{j_1}\cap E_{j_1}$ and an arc
  $(u,w)\in E_{j_2}$ with $(w,u)\in F_{j_2}$, $j_1+j_2=i+1$. Then we
  have $(v,u)\in F_{i+1}$ as a transitive arc, while we have $\{u,v\}$
  as a fraternal edge to be directed in $E_{i+1}$. The other cases are
  similar.

  Hence a wrongly oriented edge $\{u,v\}$ can only be introduced if
  there is a vertex $w\in V(G)$ such that both $(v,w)$ and $(u,w)$ are
  wrongly oriented. However, there are at most $f_i$ such choices for
  $w$ and each such $w$ also has at most so many bad choices. Hence,
  every vertex has at most $f_i^2$ wrongly oriented edges in $E_{i+1}$
  which are not edges of $F_{i+1}$. Hence, as $F_{i+1}$ is
  $c$-degenerate, $E_{i+1}$ is $c+f_i^2$-degenerate and the greedy
  orientation procedure will produce an orientation which for every
  vertex coincides on all but $2(c+f_i^2)\leq 2(2f_i^2)=4f_i^2$ edges,
  as in the case $i=1$.  We can hence define
  $f_i\coloneqq 4^{i-1}(2c)^{2^{i-1}}$ and conclude.

  For the running time, observe that a greedy orientation of a graph
  with $m$ edges can be computed in time $\Oof(m)$. As we have to
  compute $r$ orientations on graphs with at most
  $4^{r-1}(2c)^{2^{r-1}}\cdot n$ edges, the claim follows.
\end{proof}

We now show that transitive fraternal augmentations can be used to
compute good linear orders for the weak coloring numbers. We need one
more lemma.

\begin{lemma}\label{lem:path-aug}
  Let $P$ be a directed path of length at most $r$ in a digraph $G$
  with end vertices $u,v$. Then in every depth-$r$ transitive
  fraternal augmentation $H$ of $G$, either $(u,v)$ or $(v,u)$ are
  arcs of $H$, or there is $w\in V(G)$ such that $(u,w)$ and $(v,w)$
  are arcs of $H$.
\end{lemma}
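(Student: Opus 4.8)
The plan is to argue by induction on the length $\ell$ of $P$, and it is cleaner to carry through a slightly stronger statement. For a directed path $P = p_0 p_1 \cdots p_\ell$ of length $\ell \le r$ in $G$, and any depth-$r$ transitive fraternal augmentation $H$ of $G$ with arc partition $E_1 \cup \cdots \cup E_r$, I claim that either \textbf{(a)} an arc between $p_0$ and $p_\ell$ belongs to $E_1 \cup \cdots \cup E_\ell$, or \textbf{(b)} there are a vertex $w$ and indices $i,j \ge 1$ with $i+j \le \ell$ such that $(p_0,w) \in E_i$ and $(p_\ell,w) \in E_j$. Since $E(H) = E_1 \cup \cdots \cup E_r$, either alternative yields the conclusion of the lemma; the point of tracking the levels is that alternative (b) must be fed back into the induction, and the bound $i+j \le \ell$ is exactly what keeps all indices below $r$ in the recursive steps.

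The base case $\ell = 1$ is immediate: $P$ is a single arc of $G$, and as $E_1$ is a re-orientation of $E(G)$, one of $(p_0,p_1),(p_1,p_0)$ lies in $E_1$, so (a) holds. For $\ell \ge 2$, I would split $P$ into its first arc $p_0 p_1$ and the sub-path $P' = p_1 \cdots p_\ell$ of length $\ell-1$. The base case applied to $p_0 p_1$ puts an arc between $p_0$ and $p_1$ into $E_1$, and the induction hypothesis applied to $P'$ gives either an arc between $p_1$ and $p_\ell$ in $E_1 \cup \cdots \cup E_{\ell-1}$, or a vertex $w'$ with $(p_1,w') \in E_{i'}$, $(p_\ell,w') \in E_{j'}$ and $i'+j' \le \ell-1$. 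I would then combine the two pieces by a case analysis on the orientations of the arcs incident with the shared vertex $p_1$. If both arcs at $p_1$ point into it (for instance $(p_0,p_1) \in E_1$ and $(p_\ell,p_1) \in E_b$ with $b \le \ell-1$), then alternative (b) holds outright with $w = p_1$ and $1+b \le \ell$. Otherwise the two pieces meet in a directed path of length two through $p_1$ (or through $w'$), and I would apply the transitivity rule (or, when the common-out-neighbour case of the hypothesis is combined with an arc leaving $p_1$, the fraternity rule) to obtain an arc between the two outer vertices; a further bounded chain of such applications, together with the antisymmetry axiom (which rules out having both $(x,y)$ and $(y,x)$, keeping the two alternatives exclusive), delivers alternative (a) or again alternative (b), always with index sums at most $\ell \le r$ so that the rules apply.

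The step I expect to be the main obstacle is verifying, in each orientation case, the hypothesis of the transitivity and fraternity rules that there is a \emph{directed} $G$-path of length at most $i+j$ between the two endpoints in question. Here one uses that every arc of $E_i$ is witnessed by a directed $G$-path of length at most $i$ (in one direction or the other), that the peeled-off arc $p_0 p_1$ together with $P'$ reconstitute $P$ itself, and that $P$ has length exactly $\ell$; concatenating these witnesses — using $P$ directly precisely when the relevant levels sum to $\ell$ — yields a directed $G$-path of the required length. Arranging this concatenation to succeed under every sign pattern, and to respect the level bounds demanded by alternative (b), is the reason for carrying the strengthened statement (rather than the bare three-way conclusion of the lemma) through the induction; once that bookkeeping is set up, the lemma follows.
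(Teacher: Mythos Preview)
Your overall plan---induction on the length of $P$ with a level-tracking strengthening, peeling off one arc at a time---is exactly what the paper intends; its proof is just a pointer to the undirected argument (Lemma~7.9 in~\cite{nevsetvril2012sparsity}) together with the remark that sub-paths of $P$ supply the directed witnesses. You also correctly isolate the only new difficulty in the directed setting: checking that the side condition of the transitivity/fraternity rules (a directed $G$-path of length at most $i+j$ between the two outer vertices) is met whenever you invoke a rule.

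Your resolution of that difficulty has a gap, however. Suppose the hypothesis applied to $P'=p_1\cdots p_\ell$ returns alternative (a') with an arc between $p_1$ and $p_\ell$ at some level $c$ \emph{strictly below} $\ell-1$; this happens whenever $G$ contains a shortcut between $p_1$ and $p_\ell$. Combining with the $E_1$-arc between $p_0$ and $p_1$ then demands a directed $G$-path of length at most $1+c<\ell$ between $p_0$ and $p_\ell$. The full path $P$ is too long, and your fallback of ``concatenating witnesses'' can fail outright: the witness for the $E_c$-arc may run $p_\ell\to p_1$, and together with $p_0\to p_1$ you only get two paths meeting head-on at $p_1$, not a directed walk between $p_0$ and $p_\ell$. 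Concretely, take $G$ with arcs $p_0\to p_1\to p_2\to p_3$ and $p_3\to p_1$, and let $E_1$ orient the shortcut as $(p_1,p_3)$. Then $(p_0,p_1),(p_1,p_3)\in E_1$, but there is no directed $G$-path of length at most $2$ between $p_0$ and $p_3$, so transitivity does not fire; and since $(p_3,p_1)\notin H$ by antisymmetry you cannot escape to alternative (b) with $w=p_1$ either. (The lemma does hold here---via the arc between $p_0$ and $p_2$ in $E_{\le 2}$ and then $(p_2,p_3)$---but your decomposition has thrown that information away.) The same obstruction bites in your case (b') when the common out-neighbour $w'$ lies off $P$: the rule you need involves $p_0$ and $w'$, and neither concatenated witnesses nor a sub-path of $P$ connects them in general.

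What the paper's hint is really saying is that every rule application in the argument should have both outer vertices on $P$, with the sum of the two levels at least the distance between them along $P$; then the relevant sub-path of $P$ is always a valid witness. Your strengthening does not maintain this (it allows $w$ off $P$, and it bounds level sums only from above), so the inductive step does not close. To make the scheme work you need to keep $w$ on $P$ and track enough about the level/position relationship that sub-paths of $P$ are guaranteed to be short enough each time a rule is invoked.
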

\begin{proof}
  Since the appropriate sub-paths of $P$ witness the existence of paths
  of length~$i$ in the $i$-th augmentation step, we can argue as in
  the undirected case, compare to Lemma 7.9
  of~\cite{nevsetvril2012sparsity}.
\end{proof}

\begin{lemma}\label{lem:greedy-orientation}
  Let $L$ be a greedy orientation of a depth-$r$ transitive fraternal
  augmentation~$H$ of $G$ with $\Delta^+(H)\leq d$, such that every
  vertex has at most $c$ smaller neighbors with respect to $L$. Then
  \[|\dWReach{r}[G,L,v]|\leq (d+1)c+1\] for all $v\in V(G)$.
\end{lemma}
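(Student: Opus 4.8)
The plan is to prove the following containment, which marries the two degree hypotheses on $H$ and on $L$:
\[
\dWReach{r}[G,L,v]\ \subseteq\ \{v\}\ \cup\ B(v)\ \cup\ \bigcup_{w} B(w),
\]
where the union ranges over the out-neighbours $w$ of $v$ in $H$, and where for a vertex $x$ we write $B(x)$ for the set of neighbours of $x$ in the underlying undirected graph of $H$ that are smaller than $x$ in the order $L$. By the assumption on $L$ we have $|B(x)|\le c$ for every $x$, and $v$ has at most $\Delta^+(H)\le d$ out-neighbours in $H$, so this containment yields at once
\[
|\dWReach{r}[G,L,v]|\ \le\ 1+c+d\cdot c\ =\ (d+1)c+1 .
\]

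To prove the containment, fix $u\in \dWReach{r}[G,L,v]$ with $u\ne v$, and let $P$ be a path of length at most $r$ in $G$ joining $u$ and $v$ such that $u$ is the $L$-smallest vertex of $P$; in particular $u<_L x$ for every other vertex $x$ of $P$. Apply \cref{lem:path-aug} to $P$ and the augmentation $H$. If the first alternative holds, i.e.\ $(u,v)\in E(H)$ or $(v,u)\in E(H)$, then $u$ and $v$ are adjacent in the underlying undirected graph of $H$, and since $u<_L v$ we get $u\in B(v)$. If the second alternative holds, there is a vertex $w$ with $(u,w),(v,w)\in E(H)$; then $w$ is an out-neighbour of $v$ in $H$, and $u$ is adjacent to $w$ in the underlying undirected graph. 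To place $u$ in $B(w)$ it remains to verify $u<_L w$, and for this we use that the vertex $w$ furnished by \cref{lem:path-aug} can be taken to lie on $P$: its proof reduces $P$ by repeatedly replacing a length-two subwalk of the current reduced walk by a single arc of $H$ — using the transitive property on a directed subwalk $a\to b\to c$ and the fraternal property on a subwalk $a\leftarrow b\to c$ — until the reduced walk is either a single arc (first alternative) or a single ``valley'' $u\to w\leftarrow v$ (second alternative). Each such step only deletes vertices from the reduced walk, so every surviving vertex, in particular $w$, belongs to $V(P)$. As $u$ is the $L$-minimum of $P$ and $w\in V(P)\setminus\{u\}$, we conclude $u<_L w$, hence $u\in B(w)$, which completes the containment.

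The only subtle point is the assertion that the meeting vertex $w$ lies on $P$: this is really a property of the \emph{proof} of \cref{lem:path-aug} rather than of its statement, so to invoke it cleanly one has to re-examine that proof, checking that throughout the reduction the index sums of the arcs involved stay below $r$ and that each intermediate arc is witnessed by a directed $G$-walk short enough for the transitive and fraternal properties to apply. I expect this bookkeeping to be the main (and essentially the only) technical obstacle; once \cref{lem:path-aug} is available in this slightly refined form, the remainder is the one-line counting displayed above.
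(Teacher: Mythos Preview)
Your argument is correct and matches the paper's proof essentially step for step: the paper also applies \cref{lem:path-aug} to the witnessing path, separates the single-arc case from the valley case, and counts $c + d\cdot c + 1$ by bounding smaller neighbours of $v$ and of the at most $d$ out-neighbours of $v$. The one point you flag as subtle --- that the valley vertex $w$ lies on $P$ --- is simply asserted in the paper's proof (``there is $u$ on the path'') even though the stated form of \cref{lem:path-aug} only promises $w\in V(G)$; your explicit acknowledgement that this requires re-examining the reduction argument is, if anything, more careful than the paper itself.
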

\begin{proof}
  For each vertex $v\in V(G)$ we count the number of end-vertices of
  paths of length at most $r$ from $v$ such that the end-vertex is the
  smallest vertex of the path. This number is exactly
  $|\dWReach{r}[G, L, v)]|$.

  By \cref{lem:path-aug}, for each such path with end-vertex
  $w \neq v$, we either have an arc $(v, w)$ or an arc $(w, v)$ in $H$
  or there is $u$ on the path and we have arcs $(v, u)$, $(w, u)$ in
  $H$.  By assumption on $L$ there are at most $c$ arcs $(v, w)$ or
  $(w, v)$ such that $w <_L v$. Furthermore, we have at most $d$ arcs
  $(v, u)$, as~$v$ has out-degree at most $d$ and for each such $u$
  there are at most $c$ arcs $(w, u)$ such that $w <_L u$ by
  assumption on~$L$. These are exactly the pairs of arcs we have to
  consider, as no vertex on the path from~$v$ to $w$ may be smaller
  than~$w$. Hence in total we have
  $|\dWReach{r}[G,L,v]|\leq c+d\cdot c+1 =(d+1)c+1$.
\end{proof}

Algorithmically, to obtain a good order from
\cref{lem:greedy-orientation}, we have to compute one final greedy
orientation step. We hence obtain the following theorem.

\begin{theorem}\label{thm:compute-wcol}
  If $\CCC$ is a class of digraphs of bounded expansion, then there is
  a function $f:\N\rightarrow \N$ and an algorithm which on
  input $G\in \CCC$ and $r\in \N$ computes an order $L$ with
  $|\dWReach{r}[G,L.v]|\leq f(r)$ for all $v\in V(G)$ in time $\Oof(f(r)\cdot n)$.
\end{theorem}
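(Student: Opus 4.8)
The plan is to combine the three lemmas from this subsection into a single algorithm. First, since $\CCC$ has bounded expansion, Theorem~\ref{thm:wcol-be} gives a function $g\colon\N\rightarrow\N$ with $\dwcol{r}(G)\le g(r)$ for all $G\in\CCC$ and all $r\ge 1$; set $c\coloneqq g(r)$. Using the first lemma of this subsection, I would compute in time $\Oof\bigl(r\cdot 4^{r-1}(2c)^{2^{r-1}}\cdot n\bigr)$ a depth-$r$ transitive fraternal augmentation $H$ of $G$ with $\Delta^+(H)\le d\coloneqq 4^{r-1}(2c)^{2^{r-1}}$. Since the out-degree bound $d$ in particular makes $H$ a $2d$-degenerate graph (each vertex contributes at most $d$ out-arcs, so the sum of degrees of any subgraph is at most $2d$ times its size), a greedy degeneracy ordering of $\bar H$ can be computed in linear time $\Oof(dn)$, yielding an order $L$ in which every vertex has at most $2d$ smaller neighbors in $H$; take $c'\coloneqq 2d$.

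Next, I would invoke Lemma~\ref{lem:greedy-orientation} with this order $L$ and the augmentation $H$: it gives $|\dWReach{r}[G,L,v]|\le (d+1)c'+1$ for all $v\in V(G)$. Thus setting $f(r)\coloneqq (d+1)\cdot 2d+1 = (4^{r-1}(2c)^{2^{r-1}}+1)\cdot 2\cdot 4^{r-1}(2c)^{2^{r-1}}+1$, which depends only on $r$ (through $r$ itself and through $c=g(r)$), the order $L$ witnesses $\dwcol{r}(G)\le f(r)$ in the required pointwise-bounded form. Note that Lemma~\ref{lem:greedy-orientation} is stated for orders satisfying ``at most $c$ smaller neighbors with respect to $L$'', and the greedy degeneracy order of $\bar H$ provides exactly such a bound with $c'=2d$; one should double-check that ``smaller neighbors'' in that lemma refers to neighbors in $H$ (undirected sense), which is consistent with how the greedy orientation step was described.

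For the running time one adds up the three phases: $\Oof(r\cdot d\cdot n)$ to build $H$, $\Oof(d\cdot n)$ for the final greedy degeneracy ordering, and $\Oof(d\cdot n)$ to read off $L$; all of these are $\Oof(f(r)\cdot n)$ since $d$ and $r$ are bounded in terms of $r$ alone and $d$ dominates $r$ for $r\ge 1$. The only genuinely delicate point, and the one I would be most careful about, is bookkeeping the relationship between the three different ``width'' constants — the weak reachability bound $c$, the out-degree bound $d$ of the augmentation, and the degeneracy-based smaller-neighbor bound $c'$ — and making sure the hypotheses of Lemma~\ref{lem:greedy-orientation} are met verbatim; everything else is assembling results already proved in the excerpt.
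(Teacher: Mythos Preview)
Your proposal is correct and follows essentially the same route as the paper: build the depth-$r$ transitive fraternal augmentation with bounded out-degree, perform one final greedy (degeneracy) ordering on it, and invoke Lemma~\ref{lem:greedy-orientation}. The paper's own proof is the single sentence before the theorem statement and leaves the constants implicit; your explicit bookkeeping (degeneracy $\le 2d$ from $\Delta^+(H)\le d$, hence $c'=2d$ and $f(r)=(d+1)\cdot 2d+1$) and your check that ``smaller neighbors'' in Lemma~\ref{lem:greedy-orientation} means neighbors in $H$ in either direction are exactly the details the paper sweeps under the phrase ``one final greedy orientation step.''
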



\bibliographystyle{abbrv}

\end{document}